\documentclass[11pt]{article}
\usepackage[margin=1.1in]{geometry}
\usepackage{amsmath,amssymb,amsthm,mathtools}
\usepackage{booktabs,array,graphicx}
\usepackage[numbers,sort&compress]{natbib}
\usepackage{microtype}
\usepackage{xcolor}
\definecolor{linkblue}{rgb}{0.1,0.2,0.55}
\usepackage[colorlinks=true,linkcolor=linkblue,citecolor=linkblue,urlcolor=linkblue]{hyperref}
\usepackage{authblk}
\theoremstyle{plain}
\newtheorem{theorem}{Theorem}[section]
\newtheorem{lemma}[theorem]{Lemma}
\newtheorem{proposition}[theorem]{Proposition}
\newtheorem{corollary}[theorem]{Corollary}
\theoremstyle{definition}
\newtheorem{definition}[theorem]{Definition}
\newtheorem{example}[theorem]{Example}
\theoremstyle{remark}
\newtheorem{rmk}[theorem]{Remark}
\newenvironment{thm}{\begin{theorem}}{\end{theorem}}
\newenvironment{lem}{\begin{lemma}}{\end{lemma}}
\newenvironment{prop}{\begin{proposition}}{\end{proposition}}
\newenvironment{cor}{\begin{corollary}}{\end{corollary}}

\newenvironment{exmp}{\begin{example}}{\end{example}}
\providecommand{\Description}[1]{}

\newcommand{\E}{\mathbb{E}}
\newcommand{\indic}{\mathbf{1}}
\newcommand{\inn}[2]{\langle #1,#2\rangle}
\newcommand{\Iset}{\mathcal{I}}
\newcommand{\Gset}{\mathcal{G}}
\newcommand{\Sset}{\mathcal{S}}
\newcommand{\Pset}{\mathcal{P}}
\newcommand{\PhiI}{\Phi_{\mathrm I}}
\newcommand{\PhiM}{\Phi_{\pi}}
\newcommand{\Fcost}{\mathcal{F}_{\mathrm{cost}}}
\newcommand{\Ftwo}{\mathcal{F}_{\mathrm{two}}}
\newcommand{\Pio}{\Pi_{\mathrm{F}}}

\newcommand{\com}[1]{}
\newcommand{\comr}[1]{}
\newcommand{\resp}[1]{}

\title{Strategic Classification Has a Missing Lever: Audit Risk}
\author[1]{Raman Ebrahimi}
\author[1]{Massimo Franceschetti}
\affil[1]{Department of Electrical and Computer Engineering, University of California, San Diego\\ \texttt{\{raman,mfranceschetti\}@ucsd.edu}}
\date{\today}

\begin{document}
\maketitle

\begin{abstract}
Strategic classification studies how a decision maker should choose a classifier when the agents being classified can adjust their features in response to it. In existing models, the classifier is the only instrument available to the decision maker, and therefore a feature that is predictive but easy to fake can only be down-weighted or discarded. However, in many settings the decision maker can also verify: lenders verify income, admissions offices check documents, and tax authorities audit returns. In this paper, we propose a model of strategic classification in which the firm jointly designs a linear classifier and an \emph{audit profile}, which assigns to each fakeable feature a probability of detection and a penalty when caught. We show that under linear costs, the classifier affects the audit problem only through the distribution of gaming rents it induces, so that the joint design problem decomposes into the choice of a score rule and an audit allocation problem. We use this decomposition to characterize the optimal audit allocation, to identify when the allocation problem is tractable and when it is NP-hard (namely, when agents can game through overlapping features under an inspection cap), and to bound the regret of a firm that has to learn the rents by auditing. We further show that audit intensity is a quantity to be tuned rather than maximized: welfare is single-peaked in it, and a firm and a social planner disagree on the mix of detection and penalty that delivers a given level of deterrence. Notably, two populations with identical costs and causal structure can game in one domain and improve in the other, a difference that a cost-only model cannot account for. Together, our findings highlight that whether a feature is ``gameable'' depends on the institution's verification policy as much as on the feature itself.

\end{abstract}

\section{Introduction}\label{sec:intro}
As machine learning systems are increasingly used to make decisions about people, in lending, hiring, and admissions, the people being classified have begun to respond to them strategically. The literature on strategic classification models such responses game-theoretically: the firm anticipates that agents will move their features and chooses a classifier that is robust to such movement \cite{hardt2016strategic,kleinberg2019induce,miller2020causal,jagadeesan2021microfoundations}. Consider, for instance, a lender who scores applicants on income, a credential, and a self-reported asset figure, where income can be inflated, the credential can be earned but not forged, and the asset figure is informative in the honest population but easy to misreport. When the score rule is the only instrument available to the lender, the prescription of existing models for a fakeable feature is to down-weight or discard it, at a cost borne by the honest applicants whom the feature would have served.

However, in practice, decision makers rarely rely on the score rule alone. Lenders verify income, admissions offices check documents, exam boards proctor, and tax authorities audit returns. A few recent works have added auditing to strategic classification, by choosing which agents to inspect under a budget \cite{estornell2021audits}, announcing audits and fines to steer agents toward genuine recourse \cite{estornell2023recourse}, or letting the firm verify agents' reports \cite{cezar2020adversarial}; see Section~\ref{sec:related}. Motivated by this, in this paper, we study the \emph{joint} design of a classifier and an audit policy. Our model differs from these works in what the firm designs: an \emph{audit profile} over \emph{feature channels}, which specifies which claims are checked, at what rate $p_j$, and with what penalty when caught. We model the audit per channel because a linear classifier creates a gaming rent on each fakeable feature, which the audit can price one at a time, and we keep detection and penalty separate because agents respond only to their product, so that the choice between them is a welfare question.

\textit{Paper overview and contributions.}
We begin by characterizing the agent's best response to a classifier and an audit profile (Section~\ref{sec:agent}). We show that under linear costs, the audit acts as a per-channel price: the agent games on channel $j$ if and only if the risk-adjusted price of doing so is below the price of honest recourse (Theorem~\ref{thm:portfolio} and Lemma~\ref{lem:deterrence-frontier}). We then turn to the firm's problem (Section~\ref{sec:firm}). Our main structural result is that the classifier affects the audit problem only through the distribution of gaming rents it induces (Theorem~\ref{thm:rentsuff}), so that the joint design decomposes into an outer choice of a score rule and an inner audit allocation problem. For the inner problem, we show that the optimal allocation across channels is a water-filling rule over the density of rents at the deterrence frontier (Theorem~\ref{thm:auditdesign}), and that even a uniform audit rate is a targeting instrument: against a population that is heterogeneous in what it loses when caught, the optimal uniform audit deters a quantile of the risk distribution and tolerates the rest (Theorem~\ref{thm:quantile-audit}). In other words, residual gaming at the optimum is not a sign that the audit has failed.

We next study the computational and statistical cost of the second instrument (Sections~\ref{sec:computation} and~\ref{sec:learning}). Audit allocation can be solved exactly when few channels are audited (Proposition~\ref{prop:vertex}), and the joint design is polynomial in fixed feature dimension (Theorem~\ref{thm:fixeddim}). Perhaps surprisingly, the difficulty does not come from the number of agents or from the interaction between the two instruments, but from the combination of two ingredients: agents who can game through overlapping sets of channels, and a hard cap on inspection. With both, the problem is as hard as Densest-$k$-Subgraph (Theorem~\ref{thm:hardness}); replacing the cap with a per-unit inspection cost makes it polynomial again (Proposition~\ref{prop:softbudget}). When the rents are unknown and the firm learns them by auditing, we show that the regret rate is determined by the shape of the rent distribution: $\tilde\Theta(\sqrt T)$ for smooth single-route rents and $\Theta(T^{2/3})$ once rents have atoms (Theorem~\ref{thm:learning}).

Finally, we ask what the second instrument is worth and how it should be used (Sections~\ref{sec:welfare} and~\ref{sec:risk-tuning}). We show that the welfare gain from auditing is bounded by the informational value of the fakeable features, and that this bound is attained, with an unbounded ratio, when a bounded signal leaves the classifier no room for Spence-style separation (Theorems~\ref{thm:dominance} and~\ref{thm:price}). Since audits have false positives, welfare is single-peaked in the audit intensity, and is maximized at the level that just deters gaming (Proposition~\ref{prop:overpen}). Moreover, while agents respond only to the product of detection and penalty, a firm that does not bear the harm of false accusations prefers severe penalties with little detection, whereas a social planner prefers the reverse (Proposition~\ref{prop:composition}). Finally, two populations with identical costs, classifier, and causal structure can game in one domain and improve in the other, a difference that no cost-only model with these primitives can produce, and that our model attributes to a single inequality on the deterrence product (Theorem~\ref{thm:domainsep}).

\textit{A key takeaway.}
A recurring theme in our results is that the audit is best understood as a price on the rents the classifier creates, rather than as a separate anti-fraud measure. This is what allows a predictive but fakeable feature to re-enter the optimal classifier once its rent can be priced, what makes the allocation problem separable across channels (and hard only when channels overlap), and why the right amount of verification is the least that prices a rent out rather than the most the firm can afford.

\textit{Summary of contributions.}
(i) We propose a model of strategic classification with a per-channel audit profile in which detection and penalty are separate instruments, and prove that the classifier enters the audit problem only through the rents it induces. (ii) We characterize the optimal audit allocation, identify when it is tractable and when it is NP-hard, and give regret rates, tight for one channel, for learning the audit from censored feedback. (iii) We characterize the welfare value of auditing, show that audit intensity should be tuned rather than maximized, and show that the firm and the planner disagree on the detection-penalty mix.

The remainder of the paper is organized as follows. Section~\ref{sec:related} reviews related work and Section~\ref{sec:model} presents the model. Sections~\ref{sec:agent}--\ref{sec:risk-tuning} present our results, and Section~\ref{sec:disc} concludes. Proofs that are only sketched in the main text, the assumption ledger, and the details of our numerical experiments are provided in Appendices~\ref{app:assumptions}--\ref{app:experiments}. A running example with three features, income (improvable and fakeable), a credential (improvable only), and self-reported assets (fakeable only), is used throughout (Example~\ref{ex:setup}).

\section{Related work}\label{sec:related}
\textit{Strategic classification.}
Our work is closely related to the literature on strategic classification, in which a firm chooses a classifier while agents move their features at a cost \cite{hardt2016strategic,ebrahimi2025double}. This literature has studied learning from revealed strategic responses \cite{dong2018revealed}, strategy-aware linear classifiers \cite{chen2020strategyaware,levanon2021practical}, the social cost of strategic rules \cite{milli2019social} and their disparate effects \cite{hu2019disparate}, and learning-theoretic aspects such as the order of play \cite{zrnic2021leads}, imperfect knowledge of the rule \cite{ghalme2021dark}, sample complexity \cite{sundaram2021pac}, and the geometry of the best response \cite{ahmadi2021perceptron}. Randomized rules have been shown to shrink the manipulation advantage \cite{braverman2020randomness}, and incentive-compatible rules to cope with strategically withheld features \cite{krishnaswamy2021withheld}. A causal strand of this literature distinguishes genuine improvement from gaming \cite{kleinberg2019induce,miller2020causal}, uses strategic responses as a source of causal information \cite{shavit2020causal,harris2022strategic,bechavod2021gaming}, and designs evaluation rules that reward investment over imitation \cite{haghtalab2020welfare,alon2020multiagent}. While our model of agents' responses is similar to those in these works, we differ in that in all of them, the classifier is the only deterrent available to the firm. We keep the causal distinction between improvement and gaming, and add an instrument that changes the payoff to faking rather than whether a feature is causal.

\textit{Audits in strategic classification.}
Three prior works are closest to ours in that they give the firm a verification instrument. \cite{estornell2021audits} let a principal audit a budgeted subset of \emph{agents}, where a detected misreport is corrected, and ask which agents to inspect so that the resulting scores are truthful. \cite{estornell2023recourse} announce an audit policy and a fine and ask how to steer agents from misreporting toward genuine recourse; they characterize optimal policies for utility-maximizing and recourse-maximizing principals, and show that subsidizing recourse can be worth part of the audit budget. \cite{cezar2020adversarial} study a firm that designs its classifier and may verify agents' information, with two agent types whose faking costs differ. That said, our design object differs from theirs in two ways, and these differences drive our results. First, our audit is a profile over \emph{feature channels}, each with its own rate; this is what allows us to ask how audit should be allocated across channels (Theorems~\ref{thm:auditdesign} and~\ref{thm:quantile-audit}), and what makes the hardness of Theorem~\ref{thm:hardness}, which arises only when agents can game through overlapping channels, possible to state. Second, detection and penalty are separate variables in our model, which is what makes the comparison between the firm and the planner in Proposition~\ref{prop:composition} meaningful; when an audit means ``inspect and correct,'' there is no penalty to choose. The coupling between the two instruments (Theorem~\ref{thm:rentsuff}) is also not studied in these works.

\textit{Verification and enforcement.}
More broadly, our work relates to inspection games, audit games, tax compliance, and costly state verification, which study how a principal deters hidden misrepresentation \cite{avenhaus2002inspection,blocki2013audit,townsend1979csv,reinganum1985compliance,mookherjee1989auditing,benporath2014verification}, to security games that allocate scarce inspection against a strategic adversary \cite{sinha2018security}, and to the economics of public enforcement, which characterizes the optimal mix of detection probability and sanction severity \cite{becker1968crime,polinsky2000enforcement}. In mechanism design, verification has been modeled as partial verification \cite{green1986verifiable}, probabilistic detection of lies \cite{caragiannis2012probabilistic,ball2019probabilistic}, selective verification of a few agents \cite{fotakis2016selective}, reporting costs \cite{kephart2016reporting}, ex post verification with bounded penalties \cite{mylovanov2017expost}, hard evidence \cite{hart2017evidence}, and adaptive costly audits in dynamic allocation \cite{dai2025audits}; strategic responses to the audit itself have also recently been considered \cite{burnat2026dpaudit,ceppi2019partial}. In all of these, the set of violations worth verifying is exogenous. In contrast, in our model it is created by the classifier: raising the weight on a fakeable feature adds predictive power in the honest population and, at the same time, opens a rent for would-be gamers. This feedback is the content of Theorem~\ref{thm:rentsuff}, and it is the reason why audit allocation across channels inherits the structure of Densest-$k$-Subgraph \cite{khot2006ptas,manurangsi2017dks}. Our detection-versus-penalty result can be viewed as the Becker-Polinsky-Shavell trade-off specialized to the false accusation of honest applicants flagged by a classifier. Finally, performative prediction \cite{perdomo2020performative} studies deployment-induced distribution shift without separating the choice of classifier from enforcement, and the documented harms of enforcement errors in automated eligibility systems \cite{eubanks2018automating} and the uneven incidence of audit risk across groups \cite{elzayn2025disparities} motivate our treatment of audit intensity as a quantity to be tuned rather than maximized.
\section{Model}\label{sec:model}
\textit{Agents and movement.}
A firm commits to a linear classifier $(w,\theta)$ and accepts an agent with observed features $x$ when $\inn{w}{x}\ge\theta$. An agent starts at true features $x^{0}$ with score deficit $g_{0}:=\theta-\inn{w}{x^{0}}$; if $g_0\le0$ the agent is already accepted and does not move. A rejected agent can use honest improvement $\delta\ge0$, which changes true features and is supported on $\Iset$, and gaming $\gamma\ge0$, which changes only observed features and is supported on $\Gset$; the two supports may overlap. Costs are separable, $C_{I}(\delta)=\sum_{j\in\Iset}c^{I}_{j}\phi_{j}(\delta_{j})$ and $C_{G}(\gamma)=\sum_{j\in\Gset}c^{G}_{j}\psi_{j}(\gamma_{j})$, with positive unit costs and convex nondecreasing $\phi_j,\psi_j$ vanishing at $0$. A move succeeds when $\inn{w}{\delta+\gamma}\ge g_{0}$.

\textit{The audit profile.}
The firm's second instrument is an audit profile $\pi=\big((p_j)_{j\in\Gset},u_{-}\big)$: $p_j\in[0,1]$ is the probability that a fake on coordinate $j$ is detected, and $u_-$ is the agent's utility when caught. Let $u_{+}>u_{0}=0\ge u_{-}$ be the accepted, rejected, and caught utilities, and let $\Delta:=u_{+}-u_{-}$ be the \emph{stakes-risk differential}: a caught fake forfeits the acceptance the agent was reaching for and, if $u_-<0$, pays a penalty on top. Both $p$ and $u_-$ are chosen by the firm. Severity is capped, $\Delta\le\Delta_{\max}$, by limited liability or statute; without a cap a firm could drive $p\to0$ and $\Delta\to\infty$ at fixed deterrence, and Section~\ref{sec:risk-tuning} shows that only the harm of false accusation breaks that tie. The profile is committed and observed before the agent acts, so detection enters as a known expected penalty rather than as the unknown inspection probability of an inspection game \cite{avenhaus2002inspection,blocki2013audit}. Commitment describes settings where verification policy is disclosed or institutionally fixed (published audit criteria, mandated proctoring, statutory credential checks), and it is also a benchmark for concealed audits (Section~\ref{sec:disc}).

Detection is \emph{support-based}: if the agent fakes on $S(\gamma)=\{j:\gamma_j>0\}$, detection occurs with probability $P(\gamma)=1-\prod_{j\in S(\gamma)}(1-p_j)$, with $P(0)=0$. Detection therefore depends on which coordinates are faked, not on how much; Proposition~\ref{prop:magnitude} in Appendix~\ref{app:agent} treats magnitude-dependent detection and shows what this choice buys. The agent's expected utility from a successful move is
\begin{equation}\label{eq:agent-eu}
EU(\delta,\gamma)=u_{+}-C_I(\delta)-C_G(\gamma)-P(\gamma)\Delta ,
\end{equation}
and the cost-only model is the boundary $p\equiv0$. Two quantities will play a central role in our analysis:
\begin{align}
\PhiI&:=\min\{C_I(\delta):\inn{w}{\delta}\ge g_0,\ \delta\ge0,\ \mathrm{supp}(\delta)\subseteq\Iset\},\label{eq:PhiI}\\
\PhiM(\pi)&:=\min\{C_I(\delta)+C_G(\gamma)+P(\gamma)\Delta:\nonumber\\
&\qquad\qquad\inn{w}{\delta+\gamma}\ge g_0,\ \delta,\gamma\ge0\}.\label{eq:PhiM}
\end{align}
$\PhiI$ is the price of honest recourse; $\PhiM(\pi)$ is the price of reaching the bar when faking is charged both its movement cost and its detection risk. The agent games exactly when a gaming-using minimizer of \eqref{eq:PhiM} beats both honest improvement and abstention.

\textit{The firm's objective.}
The firm is a Stackelberg leader choosing $(w,\theta,\pi)$ against a population of types $t$ with law $\rho$. Each agent ends in one of four outcomes: accepted honestly, rejected (abstained or caught), accepted with a successful fake, or, in the welfare sections, wrongly flagged while honest. Let $F(\mathrm{out}_t,y_t)$ score the outcome against the agent's true post-move label $y_t$ in accuracy units: a correct decision (a qualified agent accepted, an unqualified agent rejected or caught) scores $1$, and an incorrect one (a gamer accepted, a qualified agent rejected) scores $0$. Inspection is costly: running audit at rates $p$ costs $K(p)=\sum_{j\in\Gset}k_j(p_j)$ with each $k_j$ increasing, convex, and $k_j(0)=0$; several sections use the linear form $k_j(p_j)=\kappa_jp_j$ or a budget $\sum_j\kappa_jp_j\le B$. The firm maximizes
\begin{equation}\label{eq:firm-objective}
\Pio(w,\theta,\pi)\;=\;\int \E\big[F(\mathrm{out}_t,y_t)\,\big|\,t\big]\,d\rho(t)\;-\;K(p),
\end{equation}
where the expectation is over the audit coin and $\mathrm{out}_t$ is the agent's best response to $(w,\theta,\pi)$. The cost term is what rules out $p\equiv1$; the welfare analysis in Sections~\ref{sec:welfare} and~\ref{sec:risk-tuning} additionally charges a false-accusation harm $\kappa\varphi(\Delta)$ per honest agent wrongly flagged. We note two features of \eqref{eq:firm-objective} that our analysis relies on: the objective is additively separable across agents, and the audit enters each agent's contribution only through $\pi$. Section~\ref{sec:firm} studies two special cases of \eqref{eq:firm-objective}: a \emph{deterrence objective} that credits only agents who stop gaming, and an \emph{outcome objective} that also credits gamers who are caught.

\textit{Standing conditions.}
We will refer to the following four conditions throughout. \textnormal{(Lin)}: $\phi_j(t)=\psi_j(t)=t$. \textnormal{(SC)}: strictly convex differentiable costs with zero marginal cost at $0$. \textnormal{(W$>$0)}: every coordinate used for movement has $w_j>0$. \textnormal{(PG)}: a purely gameable coordinate $j\in\Pset:=\Gset\setminus\Iset$ has no causal effect on true qualification. Table~\ref{tab:conditions} in Appendix~\ref{app:assumptions} records where each is in force, and the ledger there states what changes when each is relaxed.

\begin{exmp}[Running instance]\label{ex:setup}
Feature $1$ is income, both improvable and fakeable; feature $2$ is a credential, improvable only; feature $3$ is self-reported assets, fakeable only. Thus $\Iset=\{1,2\}$, $\Gset=\{1,3\}$, $\Sset=\{1\}$, and $\Pset=\{3\}$. A low-audit education-like domain and a high-audit loan-like domain share costs, weights, and causal structure and differ only in $(p,u_{-})$.

Take \textnormal{(Lin)}, $g_0=1$, $w=(1,1,1)$, $u_+=1$, $u_-=-1$ (so $\Delta=2$), and costs $c^I_1=0.8$, $c^I_2=0.6$, $c^G_1=0.3$, $c^G_3=0.2$. Honest recourse costs $\PhiI=0.6$ by earning the credential. With $p_1=p_3=0.05$ the risk-adjusted gaming prices are $0.3+0.1=0.4$ on income and $0.2+0.1=0.3$ on assets, so the agent fakes the asset report. With $p_1=p_3=0.4$ the same prices are $1.1$ and $1.0$ while honest recourse still costs $0.6$, so the same agent improves.
\end{exmp}

\section{Agents' best responses}\label{sec:agent}
Fix a gaming support $S\subseteq\Gset$. Under support-based detection the probability of being caught, $P_S=1-\prod_{j\in S}(1-p_j)$, does not depend on how much is faked, so the audit is a support-level charge and the movement intensities solve an ordinary covering problem:
\begin{align}
\Phi(S)&:=\min\{C_I(\delta)+C_G(\gamma):\inn{w}{\delta+\gamma}\ge g_0,\ \mathrm{supp}(\gamma)\subseteq S\},\label{eq:PhiS}\\
\PhiM(\pi)&=\min_{S\subseteq\Gset}\big[\Phi(S)+P_S\Delta\big],\qquad\Phi(\varnothing)=\PhiI ,\label{eq:PhiM-decomp}
\end{align}
with $\delta,\gamma\ge0$ throughout.
The audit therefore determines which supports are worth using, while conditional on a support, the costs alone determine how much is moved. The following theorem formalizes this: audit acts on the extensive margin only, in that it switches gaming coordinates on or off, and under linear costs it does so through a single price per channel.

\begin{thm}[Risk-adjusted portfolio]\label{thm:portfolio}
Assume \textnormal{(W$>$0)} and separable differentiable convex costs. The agent's value is
\[
\Pi^\star=\max\{0,\ u_+-\PhiI,\ u_+-\PhiM(\pi)\},
\]
and the agent games iff the minimizing support in \eqref{eq:PhiM-decomp} is nonempty and $u_+-\PhiM(\pi)>\max\{0,u_+-\PhiI\}$. On any fixed optimal support the score constraint binds and every active coordinate satisfies $c^{I}_{j}\phi'_{j}(\delta_j^\star)=\lambda w_j$ and $c^{G}_{j}\psi'_{j}(\gamma_j^\star)=\lambda w_j$, with no audit term in the first-order conditions. Under \textnormal{(Lin)} there is an optimal gaming support that is empty or a singleton, and channel $j$ is ranked by its \emph{risk-adjusted price} $g_0\,c^{G}_{j}/w_j+p_j\Delta$.
\end{thm}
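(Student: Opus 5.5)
The plan is to treat the agent's problem as a choice among three options and then work through each conditional subproblem.

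\emph{Step 1 (value formula).} A rejected agent can abstain, which gives utility $0$. Any move $(\delta,\gamma)$ that reaches the bar gives $EU(\delta,\gamma)$ as in \eqref{eq:agent-eu}. Moves that fail to reach the bar are dominated by abstention, because costs and detection risk are nonnegative. Over successful moves, maximizing $EU$ is the same as minimizing $C_I+C_G+P\Delta$, which by \eqref{eq:PhiM} gives $u_+-\PhiM(\pi)$. Honest recourse is the restriction $\gamma=0$, so $\PhiM(\pi)\le\PhiI$. Listing it separately does not change the maximum, but it makes the tie-breaking explicit. The agent games exactly when a minimizer has nonempty gaming support and strictly beats both alternatives, which is the stated criterion.

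\emph{Step 2 (support decomposition).} Partition feasible moves by $S=\mathrm{supp}(\gamma)$. On each cell $P(\gamma)=P_S$ is constant. Using the monotonicity of $P$ in $S$, each cell can be relaxed to $\mathrm{supp}(\gamma)\subseteq S$ without lowering the minimum. Taking the minimum over $S$ then recovers \eqref{eq:PhiM-decomp}.

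\emph{Step 3 (first-order conditions on a fixed support).} With the optimal support $S$ fixed, $\Phi(S)$ is a convex program with a single linear constraint, and the audit appears only as the additive constant $P_S\Delta$, so it cannot enter the KKT conditions.
\begin{itemize}
\item[(a)] The constraint binds: if $\inn{w}{\delta+\gamma}>g_0$, some active coordinate has $w_j>0$ by (W$>$0). Shrinking it slightly keeps feasibility and does not raise cost, since $\phi_j,\psi_j$ are nondecreasing. This yields an optimum that binds, and under strict monotonicity every optimum binds.
\item[(b)] Slater's condition holds, so a multiplier $\lambda\ge0$ exists. Stationarity at the active coordinates gives $c^I_j\phi_j'(\delta_j^\star)=\lambda w_j$ and $c^G_j\psi_j'(\gamma_j^\star)=\lambda w_j$.
\end{itemize}

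\emph{Step 4 (linear case).} Under (Lin), $\Phi(S)$ is a fractional covering LP with one constraint. Its value is $g_0$ times the minimum ratio $c_j/w_j$ over the allowed coordinates. One optimum therefore puts all the movement on a single coordinate, either the best improvement coordinate in $\Iset$ or the best gaming coordinate in $S$.
\begin{itemize}
\item[(a)] Given any optimal support $S$, suppose the LP optimum uses an improvement coordinate. Then replacing $S$ by $\varnothing$ gives the same movement cost, and the risk term weakly falls because $P_\varnothing=0\le P_S$.
\item[(b)] Otherwise the optimum uses some $j\in S$. Then replacing $S$ by $\{j\}$ keeps the movement cost and weakly lowers the risk term, because $1-\prod_{i\in S}(1-p_i)\ge p_j$.
\end{itemize}
So an optimal support is either empty or a singleton. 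On a singleton $\{j\}$ the candidate value is $\min\{g_0c^G_j/w_j,\ \PhiI\}+p_j\Delta$. Whenever gaming is strictly chosen, this equals $g_0c^G_j/w_j+p_j\Delta$, and channels are ranked by exactly this risk-adjusted price.

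The main obstacle is the binding claim in Step 3(a) when costs are only weakly nondecreasing, because flat pieces can leave slack. I would handle this by selecting a binding optimum rather than claiming every optimum binds. Alternatively, I would invoke strict monotonicity at positive levels, which holds for linear costs and under (SC).
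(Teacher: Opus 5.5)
Your proposal is correct and follows the paper's own proof essentially step for step: abstention dominates non-covering moves, detection is a support-level constant so \eqref{eq:PhiM-decomp} holds and the audit drops out of the KKT conditions on a fixed support, and under (Lin) the one-constraint LP loads a single cheapest coordinate, so deleting unused gaming coordinates yields an empty or singleton support. In three places you are more careful than the paper, though none of them changes the route: you justify the relaxation to $\mathrm{supp}(\gamma)\subseteq S$ via monotonicity of $P_S$; you observe that with merely nondecreasing costs only \emph{some} optimum need bind, with $\lambda\ge0$ rather than $\lambda>0$; and you write the singleton value as $\min\{g_0c^G_j/w_j,\PhiI\}+p_j\Delta$, since improvement remains available.
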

\begin{proof}[Proof sketch]
Any non-covering action is rejected, so abstention weakly dominates it; a covering action with gaming support $S$ has detection probability $P_S$ independent of the effort levels, hence value $u_+-P_S\Delta-\Phi(S)$, and maximizing over $S$ gives $u_+-\PhiM(\pi)$ by \eqref{eq:PhiM-decomp}. On a fixed support, $\Phi(S)$ is a convex covering program whose constraint binds, and the audit term is constant on $S$, so it does not enter the stationarity conditions. Under \textnormal{(Lin)}, $\Phi(S)$ is a linear program with one constraint, so an optimum loads one cheapest coordinate; deleting the unused coordinates of $S$ preserves the movement cost and weakly lowers detection, which yields a singleton support with price $g_0c^G_j/w_j+p_j\Delta$. The full proof is in Appendix~\ref{app:agent}.
\end{proof}

Intuitively, the audit adds $p_j\Delta$ to the cost-only price $g_0c^G_j/w_j$ of channel $j$, and the agent compares the resulting prices across channels and against the price of honest recourse. Note that detection and penalty enter the agent's problem only through their product $p_j\Delta$. This is a statement about behavior, not about welfare: the same expected penalty can be delivered by very different detection-penalty mixes, and we show in Section~\ref{sec:risk-tuning} that these mixes are not welfare-equivalent.

Under \textnormal{(Lin)}, let $j$ be the cheapest cost-only fake. The agent's fallback if it does not game is honest improvement when that is individually rational ($\PhiI\le u_+$) and abstention otherwise, so the value the fake must beat is $\min\{\PhiI,u_+\}$. Define the \emph{rent} of channel $j$ as
\begin{equation}\label{eq:rent}
D_j:=\min\{\PhiI,u_+\}-g_0c^G_j/w_j .
\end{equation}

\begin{lem}[Deterrence frontier]\label{lem:deterrence-frontier}
Under \textnormal{(Lin)} and \textnormal{(W$>$0)}, gaming on channel $j$ is deterred exactly when $p_j\Delta\ge D_j$. At the frontier one more unit of detection substitutes for $\Delta$ units of rent and one more unit of severity for $p_j$ units of rent.
\end{lem}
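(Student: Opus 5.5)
The plan is to reduce the lemma to the singleton-support characterization in Theorem~\ref{thm:portfolio} and then read off the frontier from the rent definition \eqref{eq:rent}. Under (Lin) and (W$>$0), Theorem~\ref{thm:portfolio} lets us restrict attention to supports that are empty or a singleton. On the singleton $\{j\}$, the covering LP for $\Phi(\{j\})$ loads the cheapest coordinate. I will first argue that the relevant price of gaming through channel $j$ is exactly $g_0c^G_j/w_j+p_j\Delta$. If mixing improvement into the support $\{j\}$ were cheaper, then the cheapest coordinate would be an improvement coordinate. In that case $\Phi(\{j\})=\PhiI$, and adding $p_j\Delta\ge 0$ makes that support weakly dominated by $\varnothing$, so it is not a gaming option at all. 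Thus "the agent games on $j$" means the pure-fake action on $j$, with cost $g_0c^G_j/w_j$ plus risk $p_j\Delta$, is chosen.

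Next I compare this price to the fallback. By Theorem~\ref{thm:portfolio}, the agent games on $j$ iff
\[
u_+-g_0c^G_j/w_j-p_j\Delta>\max\{0,u_+-\PhiI\}.
\]
Equivalently, $g_0c^G_j/w_j+p_j\Delta<\min\{u_+,\PhiI\}$. Rearranging, gaming on $j$ occurs iff $p_j\Delta<D_j$. Its negation, $p_j\Delta\ge D_j$, is deterrence on channel $j$. The tie case is resolved by the strict inequality in Theorem~\ref{thm:portfolio}: indifferent agents do not game. This convention must be stated, since it is what makes the frontier closed.

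One subtlety needs care, and I expect it to be the main obstacle. "Deterred on channel $j$" must be read per channel, meaning the agent does not use channel $j$. Another channel $k$ might still be cheaper, so the agent could game elsewhere. I will phrase it as follows: channel $j$ is not used iff either $p_j\Delta\ge D_j$ or some other option is strictly cheaper. The clean statement of the lemma is then the condition under which channel $j$ is not a profitable deviation from the fallback. If the intended reading is "the agent with $j$ as cheapest cost-only fake does not game," I additionally need the other channels' risk-adjusted prices to exceed the fallback. I would state that as part of the hypothesis, consistent with the sentence before \eqref{eq:rent}.

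Finally, the substitution claim is a differentiation of the frontier function $h(p_j,\Delta)=p_j\Delta-D_j$ along $h=0$. Since $\partial h/\partial p_j=\Delta$, a unit of detection offsets $\Delta$ units of rent. Since $\partial h/\partial\Delta=p_j$, a unit of severity offsets $p_j$ units of rent. Equivalently, along the frontier $dp_j/d\Delta=-p_j/\Delta$, so the marginal rate of substitution between detection and severity is $p_j/\Delta$. This step is routine once the equivalence above is in place.
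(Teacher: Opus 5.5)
Your proposal is correct and follows the paper's own argument. You take the gaming-on-$j$ value $u_+-g_0c^G_j/w_j-p_j\Delta$ from Theorem~\ref{thm:portfolio}, compare it with the fallback $\max\{0,u_+-\PhiI\}=u_+-\min\{\PhiI,u_+\}$, rearrange via \eqref{eq:rent} into $p_j\Delta\ge D_j$ (ties counted as deterred), and differentiate $p_j\Delta=D_j$ for the substitution rates. The paper reads the lemma per channel exactly as you propose, as ``channel $j$ does not beat the non-gaming fallback,'' so the extra hypothesis on the other channels is needed only under your alternative whole-agent reading.
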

\begin{proof}
By Theorem~\ref{thm:portfolio}, the agent's non-gaming value is $u_+-\min\{\PhiI,u_+\}$ and its best value from gaming on $j$ is $u_+-g_0c^G_j/w_j-p_j\Delta$. Gaming on $j$ is (weakly) deterred exactly when the latter is at most the former, i.e., $g_0c^G_j/w_j+p_j\Delta\ge\min\{\PhiI,u_+\}$, which is $p_j\Delta\ge D_j$ by \eqref{eq:rent}. The substitution rates follow from differentiating $p_j\Delta=D_j$ along the frontier: $dD_j=\Delta\,dp_j+p_j\,d\Delta$.
\end{proof}

In other words, the rent is the saving from faking rather than earning one's way to the bar when honest recourse is individually rational (the standing case in Sections~\ref{sec:firm} and~\ref{sec:computation}), and it is the whole surplus from faking, $u_+-g_0c^G_j/w_j$, when it is not; in the latter case, deterrence pushes the agent out rather than toward improvement.

\textit{Shared coordinates.}
When a coordinate is both improvable and fakeable, strictly convex costs lead to an interior mix of improvement and faking with a closed-form fake-to-improve ratio (Appendix~\ref{app:agent}, Figure~\ref{fig:shared-coordinate-mixing}); the ``forge rather than earn'' split is strongest near linear costs and is diluted by curvature. We note that audit does not shrink this ratio gradually; rather, it eventually drops the fake coordinate from the support altogether.

\textit{The role of support-based detection.}
The singleton-support structure above relies on detection depending on \emph{which} coordinates are faked rather than on \emph{how much}. If instead detection depends on the magnitude of the fake, the gaming first-order conditions acquire an audit term, and a fake spread over several channels can strictly dominate a concentrated one (Proposition~\ref{prop:magnitude}, Appendix~\ref{app:agent}). That said, Proposition~\ref{prop:neardecouple} (Appendix~\ref{app:firm}) shows that the decomposition of Theorem~\ref{thm:rentsuff} below continues to hold approximately, with an error that vanishes as the magnitude sensitivity of detection vanishes.

\section{The firm's audit allocation}\label{sec:firm}
By Lemma~\ref{lem:deterrence-frontier}, a type with rent $D$ on its binding channel is deterred when $p\Delta\ge D$; audit allocation is therefore the problem of choosing how far to push this price frontier on each channel. In this section, we fix the classifier and study this problem under two objectives, both special cases of \eqref{eq:firm-objective}. The \emph{deterrence objective} credits the firm only for agents who stop gaming; this is the natural benchmark when a caught gamer is costly to process, or when the firm treats a caught fake as a cost rather than as a correct rejection. The \emph{outcome objective} additionally credits gamers who continue to game but are caught. We state our main characterizations under the deterrence objective, and then discuss how they change under the outcome objective.

\textit{Audit across channels.}
Channel $j$ has a continuum of would-be gamers whose binding channel is $j$ at the current design and whose rents have density $m_j$ on $[0,\bar D_j]$. A type with rent $d$ is deterred by audit intensity $p_j$ exactly when $d\le p_j\Delta$. Let $a_j(d)>0$ be the correct-allocation value of that type and $\ell_j(p_j)$ any reduced-form collateral loss from running the audit. Under the deterrence objective the value of auditing channel $j$ at intensity $p_j$ is
\begin{equation}\label{eq:cum-deterrence}
V_j(p_j)=\int_0^{\min\{p_j\Delta,\bar D_j\}} a_j(d)m_j(d)\,dd-\ell_j(p_j),
\end{equation}
for $0\le p_j\le \bar p_j:=\min\{\bar D_j/\Delta,1\}$, and with inspection cost $k_j$ the fixed-channel allocation problem is
\begin{equation}\label{eq:audit-allocation}
\max_{0\le p_j\le \bar p_j}\sum_{j\in\Gset}\big[V_j(p_j)-k_j(p_j)\big].
\end{equation}
Here, ``fixed-channel'' means that as $p_j$ increases with the other rates held fixed, the types assigned to channel $j$ either remain on $j$ until they are deterred or stop gaming; they do not switch to another channel over the audited range.

\begin{thm}[Marginal-value water-filling, fixed-channel regime]\label{thm:auditdesign}
Suppose $d\mapsto a_j(d)m_j(d)$ is nonincreasing and $\ell_j+k_j$ is convex for every $j$. Then \eqref{eq:audit-allocation} is a concave program, and every interior active channel satisfies
\begin{equation}\label{eq:waterfilling-foc}
\Delta\,a_j(p_j^\star\Delta)m_j(p_j^\star\Delta)
=\ell_j'(p_j^\star)+k_j'(p_j^\star).
\end{equation}
With a shared budget $\sum_jk_j(p_j)\le B_a$ in place of the cost term, interior active channels equalize net marginal deterrence value per marginal dollar (Appendix~\ref{app:firm}, \eqref{eq:waterfilling-budget}). The optimal value is unique. The optimal allocation on channel $j$ is unique whenever $a_jm_j$ is strictly decreasing or $\ell_j+k_j$ is strictly convex on that channel; if this holds on every active channel, the optimal allocation is unique.
\end{thm}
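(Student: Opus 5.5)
The plan is to show that each summand $V_j - k_j$ is concave on $[0,\bar p_j]$. Since the objective is separable, the feasible set is a box (or a box intersected with a convex budget set), and the program then reduces to per-channel first-order conditions plus a Lagrangian.

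\emph{Step 1 (concavity).} Write $G_j(p):=\int_0^{\min\{p\Delta,\bar D_j\}} a_j(d)m_j(d)\,dd$. On $[0,\bar p_j]$ we have $p\Delta\le \bar D_j$, so the minimum is just $p\Delta$. By the fundamental theorem of calculus (for Lebesgue integrals, with $a_jm_j$ locally integrable), $G_j$ is absolutely continuous. Its right derivative is $\Delta\,a_j(p\Delta)m_j(p\Delta^{+})$ wherever it exists. More robustly, for $p<q<r$ the difference quotients of $G_j$ are averages of $\Delta\,a_jm_j$ over $[p\Delta,q\Delta]$ and $[q\Delta,r\Delta]$. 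Because $a_jm_j$ is nonincreasing, the first average dominates the second, so the slopes are nonincreasing and $G_j$ is concave. Subtracting the convex $\ell_j+k_j$ keeps each summand concave. A sum of concave functions over a box is concave, so \eqref{eq:audit-allocation} is a concave program.

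\emph{Step 2 (first-order conditions).} Take an interior active channel, $0<p_j^\star<\bar p_j$. Fixing the other coordinates, $p_j^\star$ maximizes the one-dimensional concave function $G_j-\ell_j-k_j$, so $0$ lies in its superdifferential. Where $a_jm_j$ is continuous at $p_j^\star\Delta$ and $\ell_j,k_j$ are differentiable, this is exactly \eqref{eq:waterfilling-foc}. I would state the result as this equality under those regularity conditions, and otherwise as the pair of one-sided inequalities $\Delta a_jm_j((p^\star\Delta)^+)\le (\ell_j+k_j)'_+$ and $\Delta a_jm_j((p^\star\Delta)^-)\ge(\ell_j+k_j)'_-$. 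Getting this regularity right is the main technical nuisance: monotone functions are continuous almost everywhere, but the optimum could sit exactly at a jump.

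\emph{Step 3 (budget version).} Replace the cost term by the constraint $\sum_j k_j(p_j)\le B_a$. The feasible set is convex because each $k_j$ is convex, and $p=0$ is strictly feasible when $B_a>0$, so Slater's condition holds. KKT then gives a multiplier $\mu\ge0$ with
\[
\Delta a_j(p_j^\star\Delta)m_j(p_j^\star\Delta)-\ell_j'(p_j^\star)=\mu\,k_j'(p_j^\star)
\]
on interior channels. Dividing by $k_j'>0$ gives the equalization of net marginal deterrence value per marginal dollar. This is the water-filling form with level $\mu$.

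\emph{Step 4 (uniqueness).} Each $V_j-k_j$ is continuous on a compact interval, so the maximum exists, and the optimal value of any optimization problem is unique. Suppose on channel $j$ either $a_jm_j$ is strictly decreasing (so $G_j$ is strictly concave, since the average-slope argument gives strict inequality) or $\ell_j+k_j$ is strictly convex. Then that summand is strictly concave. Two optima $p\ne p'$ differing in coordinate $j$ would make the midpoint strictly better in summand $j$ and weakly better in the others, which is a contradiction. So the $j$-th coordinate of every optimum is the same. If this holds for all channels, the optimal allocation is unique. For the budget case, the same midpoint argument works because the feasible set is convex.
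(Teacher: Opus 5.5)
Your proposal is correct and follows essentially the same route as the paper: concavity of each summand $V_j-k_j$ from the nonincreasing integrand, stationarity on interior channels, a Lagrangian for the shared budget, and uniqueness decided channel by channel. Your difference-quotient argument and one-sided conditions are a more careful version of the paper's direct differentiation, which breaks down if the optimum sits at a jump of $a_jm_j$.

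The one thing to drop is your last sentence, which extends uniqueness to the budget case. There $k_j$ sits in the constraint rather than the objective, so strict convexity of $\ell_j+k_j$ does not make $V_j=G_j-\ell_j$ strictly concave. Moreover, the hypotheses do not even make $\sum_jV_j$ concave unless each $\ell_j$ is convex. This is why the paper justifies uniqueness only through the separability of the cost-term version.
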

\begin{proof}[Proof sketch]
For $0<p_j<\bar p_j$, \eqref{eq:cum-deterrence} has derivative $V_j'(p_j)=\Delta\,a_j(p_j\Delta)m_j(p_j\Delta)-\ell_j'(p_j)$, which is nonincreasing by the assumptions, so each summand $V_j-k_j$ is concave and the KKT conditions are necessary and sufficient; \eqref{eq:waterfilling-foc} is stationarity, and channels with $V_j'(0)\le k_j'(0)$ are dark. The program separates across channels, which gives the uniqueness statements. Under a shared budget, the Lagrangian $\sum_jV_j(p_j)-\lambda\sum_jk_j(p_j)$ equalizes $V_j'/k_j'$ across interior channels (Appendix~\ref{app:firm}).
\end{proof}

Intuitively, the theorem states that the firm should audit where the next increase in $p_j$ crosses the densest and most valuable slice of rents, which need not be the channel with the largest possible manipulation. Uniqueness is decided channel by channel because the program separates across channels: strictness on one channel pins down that channel's rate, but says nothing about a flat channel elsewhere. The no-switching condition is the price of a global statement; without it, the same equations hold locally, with the rents recomputed after each marginal update, and as we show in Section~\ref{sec:computation}, the exact allocation becomes NP-hard once gaming technologies overlap.

Under the outcome objective, an undeterred type of rent $d$ on channel $j$ is caught with probability $p_j$ and then counts as a correct rejection, so the channel's value becomes
\begin{equation}\label{eq:cum-outcome}
V^{\mathrm{out}}_j(p_j)=\int_0^{p_j\Delta} a_jm_j\,dd\;+\;p_j\int_{p_j\Delta}^{\bar D_j} a_jm_j\,dd\;-\;\ell_j(p_j),
\end{equation}
again for $0\le p_j\le\bar p_j$.

\begin{cor}[Outcome objective]\label{cor:outcome}
Under the hypotheses of Theorem~\ref{thm:auditdesign}, replacing $V_j$ by $V^{\mathrm{out}}_j$ in \eqref{eq:audit-allocation} preserves concavity, and every interior active channel satisfies
\begin{equation}\label{eq:outcome-foc}
(1-p_j^\star)\,\Delta\,a_j(p_j^\star\Delta)m_j(p_j^\star\Delta)\;+\int_{p_j^\star\Delta}^{\bar D_j} a_jm_j\,dd\;=\;\ell_j'(p_j^\star)+k_j'(p_j^\star).
\end{equation}
If channel $j$ has an interior optimum $p_j^{\mathrm{det}}$ under the deterrence objective, then its outcome-objective optimum satisfies $p_j^{\mathrm{out}}\ge p_j^{\mathrm{det}}$ if and only if
\begin{equation}\label{eq:outcome-direction}
\int_{p_j^{\mathrm{det}}\Delta}^{\bar D_j} a_jm_j\,dd\ \ \ge\ \ p_j^{\mathrm{det}}\Delta\,a_j(p_j^{\mathrm{det}}\Delta)m_j(p_j^{\mathrm{det}}\Delta).
\end{equation}
For the uniform audit of Theorem~\ref{thm:quantile-audit} below, with $U^{\mathrm{out}}(p)=v[1-G(D/p)]+v\,p\,G(D/p)-c_ap$, the interior first-order condition is $v\big[(1-p)h(D/p)D/p^2+G(D/p)\big]=c_a$, and if $U^{\mathrm{out}}$ is quasi-concave the outcome optimum exceeds the deterrence optimum $p^\star$ iff $G(D/p^\star)\ge c_ap^\star/v$.
\end{cor}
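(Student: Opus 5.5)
The plan is to treat each channel separately, as in the proof of Theorem~\ref{thm:auditdesign}, and work with the one-sided marginal value of $V^{\mathrm{out}}_j$. Write $f_j:=a_jm_j$, which is nonnegative and nonincreasing by hypothesis. For $0<p_j<\bar p_j$ we have $p_j\Delta\le\bar D_j$, so both integrals in \eqref{eq:cum-outcome} are well defined. The Leibniz rule, applied at Lebesgue points of $f_j$ (or via one-sided derivatives if $f_j$ is only monotone), gives
\[
\tfrac{d}{dp_j}V^{\mathrm{out}}_j=\Delta f_j(p_j\Delta)+\int_{p_j\Delta}^{\bar D_j}f_j\,dd-p_j\Delta f_j(p_j\Delta)-\ell_j'(p_j).
\]
This equals the left side of \eqref{eq:outcome-foc} minus $\ell_j'$.

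For concavity I will not differentiate twice, since that would need $f_j'$. Instead I will show directly that this marginal is nonincreasing in $p_j$. The factor $(1-p_j)\Delta$ is nonnegative and nonincreasing, and $f_j(p_j\Delta)$ is nonnegative and nonincreasing, so their product is nonincreasing. The tail integral $\int_{p_j\Delta}^{\bar D_j}f_j$ is nonincreasing because its lower limit rises and $f_j\ge0$. Subtracting $\ell_j'+k_j'$, which is nondecreasing by the convexity of $\ell_j+k_j$, keeps the marginal nonincreasing. Hence each summand is concave, and the program stays separable and concave. KKT is then necessary and sufficient, and stationarity at an interior active channel is exactly \eqref{eq:outcome-foc}.

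For the comparison, the key identity is
\[
M^{\mathrm{out}}_j(p)=M^{\mathrm{det}}_j(p)+\int_{p\Delta}^{\bar D_j}f_j\,dd-p\Delta f_j(p\Delta),
\]
where $M^{\mathrm{det}}_j$ and $M^{\mathrm{out}}_j$ denote the net marginals (marginal value minus $\ell_j'+k_j'$) under the two objectives. At an interior deterrence optimum \eqref{eq:waterfilling-foc} gives $M^{\mathrm{det}}_j(p_j^{\mathrm{det}})=0$. Therefore $M^{\mathrm{out}}_j(p_j^{\mathrm{det}})$ is exactly the left side minus the right side of \eqref{eq:outcome-direction}. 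Because $M^{\mathrm{out}}_j$ is nonincreasing, the two directions follow. If $M^{\mathrm{out}}_j(p_j^{\mathrm{det}})\ge0$, the concave objective is nondecreasing up to $p_j^{\mathrm{det}}$, so some maximizer, and in particular the largest one, lies at or above $p_j^{\mathrm{det}}$. If it is $<0$, every maximizer lies strictly below $p_j^{\mathrm{det}}$.

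The main obstacle is the non-uniqueness allowed by flat stretches. I would state the equivalence for the largest maximizer, or for the unique one under the strictness conditions of Theorem~\ref{thm:auditdesign}, and check separately that the boundary $\bar p_j$ causes no trouble.

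For the uniform audit, the plan is the same computation with $U^{\mathrm{out}}$, with $h=G'$. Differentiating gives
\[
\tfrac{d}{dp}\big[1-G(D/p)\big]=h(D/p)D/p^2,\qquad \tfrac{d}{dp}\big[pG(D/p)\big]=G(D/p)-p\,h(D/p)D/p^2.
\]
Together these yield the stated first-order condition. The deterrence optimum $p^\star$ of Theorem~\ref{thm:quantile-audit} satisfies $v\,h(D/p^\star)D/p^{\star2}=c_a$. Substituting this gives
\[
U^{\mathrm{out}\prime}(p^\star)=(1-p^\star)c_a+vG(D/p^\star)-c_a=vG(D/p^\star)-c_ap^\star.
\]
Quasi-concavity then converts the sign of this derivative into the claimed iff, again with the convention on ties.
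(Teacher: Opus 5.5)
Your proposal is correct and follows the paper's proof: you use the same first-derivative computation, the same evaluation of the outcome marginal at $p_j^{\mathrm{det}}$ through the deterrence first-order condition, and the same substitution $v\,h(D/p^\star)D/(p^\star)^2=c_a$ for the uniform audit. The one difference is the concavity step, where you show that $(1-p)\Delta\,a_jm_j(p\Delta)+\int_{p\Delta}^{\bar D_j}a_jm_j\,dd$ is nonincreasing instead of using the paper's second-derivative-plus-jumps computation; this is slightly cleaner, since it needs no differentiability of $a_jm_j$ and uses only convexity of $\ell_j+k_j$, and your explicit convention on ties (largest maximizer, or uniqueness) is a refinement the paper leaves implicit.
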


Intuitively, crediting caught gamers changes the marginal value in two directions at once. The undeterred mass beyond the frontier now earns $a_j$ per unit of extra detection, which raises the marginal value; but the marginal type at the frontier was already being caught with probability $p_j$, so deterring it fully is worth only $(1-p_j)$ of what it was worth before, which lowers it. Condition \eqref{eq:outcome-direction} says which effect wins. Neither direction is generic: with a linearly decreasing rent density, the outcome optimum lies above the deterrence optimum when the latter is small and below it when the latter is large. The proof is in Appendix~\ref{app:firm}.

\textit{Heterogeneous risk and a uniform audit.}
Agents differ in how costly being caught is to them. Let $\Delta_t=u_+-u_{-,t}$ be drawn from a CDF $G$ with density $h$ on $[\Delta_{\min},\Delta_{\max}]$, independently of the binding rent $D$ (if risk tolerance is correlated with recourse costs, the deterred set is no longer an upper tail; we keep the independent case for its closed form). Suppose the firm can only use a single uniform audit rate $p$. Then, type $t$ is deterred iff $\Delta_t\ge D/p$; that is, a uniform audit deters the upper tail of the risk distribution. With value $v$ per deterred gamer and audit cost $c_ap$, the deterrence objective is
\begin{equation}\label{eq:Uquantile}
U(p)=v[1-G(D/p)]-c_ap ,
\end{equation}
and deterrence changes continuously only on the open interval $\mathcal P^\circ$ with endpoints $D/\Delta_{\max}$ and $\min\{1,D/\Delta_{\min}\}$. The following theorem shows that even this undiscriminating instrument targets, in that it optimally stops at a quantile of the risk distribution.

\begin{thm}[The optimal uniform audit is a quantile]\label{thm:quantile-audit}
Assume that on $\mathcal P^\circ$ the marginal value $p\mapsto v\,h(D/p)\,D/p^2$ crosses $c_a$ once from above. If the maximizer of \eqref{eq:Uquantile} is interior it is unique and satisfies $v\,h(D/p^\star)\,D/(p^\star)^2=c_a$; for the marginal deterred type $\Delta_m=D/p^\star$ this reads $h(\Delta_m)\Delta_m^2=c_aD/v$. The firm deters exactly the tail $\{\Delta_t\ge\Delta_m\}$ and tolerates the rest. If no interior root exists the optimum is a boundary: no audit, full feasible audit, or full deterrence. If the firm optimizes against an estimate $\hat G$, its regret under the true $G$ is at most $2v\|G-\hat G\|_\infty$.
\end{thm}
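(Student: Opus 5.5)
The plan is to work directly with the one-dimensional objective \eqref{eq:Uquantile} and split $[0,1]$ into the region where $G(D/p)$ is constant and the open interval $\mathcal P^\circ$ where it moves. For $p\le D/\Delta_{\max}$ we have $D/p\ge\Delta_{\max}$, so $G(D/p)=1$ and $U(p)=-c_ap$; this piece is maximized at $p=0$, the no-audit boundary. For $p\ge D/\Delta_{\min}$ (when that is at most $1$), $G(D/p)=0$ and $U(p)=v-c_ap$, which is decreasing, so the best point there is the full-deterrence endpoint $p=D/\Delta_{\min}$. On $\mathcal P^\circ$, $U$ is differentiable with
\[
U'(p)=v\,h(D/p)\,\frac{D}{p^2}-c_a ,
\]
by the chain rule applied to $G(D/p)$. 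The steps below are ordered as follows: piecewise description, sign of $U'$ on $\mathcal P^\circ$, comparison with the boundary candidates, the quantile reading, and the robustness bound.

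On $\mathcal P^\circ$, the single-crossing hypothesis says $U'>0$ to the left of a root $p^\star$ and $U'<0$ to its right. Hence $U$ is unimodal on $\mathcal P^\circ$ with peak at $p^\star$. Any interior maximizer is a critical point, and there is only one, so the maximizer is unique and satisfies $v\,h(D/p^\star)D/(p^\star)^2=c_a$. Substituting $\Delta_m=D/p^\star$, so that $1/(p^\star)^2=\Delta_m^2/D^2$, gives $h(\Delta_m)\Delta_m^2=c_aD/v$. The deterred set is $\{\Delta_t\ge D/p^\star\}=\{\Delta_t\ge\Delta_m\}$ by the deterrence criterion of Lemma~\ref{lem:deterrence-frontier} applied type by type. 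If no root exists, $U'$ has constant sign on $\mathcal P^\circ$, so $U$ is monotone there and its supremum sits at an endpoint. Combined with the two flat pieces, the candidates reduce to $p=0$, $p=1$ (when $D/\Delta_{\min}>1$, full feasible audit), or $p=D/\Delta_{\min}$ (full deterrence).

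For the regret bound, write $\hat U(p)=v[1-\hat G(D/p)]-c_ap$. Then $\sup_p|U(p)-\hat U(p)|\le v\|G-\hat G\|_\infty$, because the cost terms cancel. Let $\hat p$ maximize $\hat U$ and $p^\star$ maximize $U$. The chain
\[
U(p^\star)-U(\hat p)=\big[U(p^\star)-\hat U(p^\star)\big]+\big[\hat U(p^\star)-\hat U(\hat p)\big]+\big[\hat U(\hat p)-U(\hat p)\big]
\]
has a nonpositive middle bracket and outer brackets each at most $v\|G-\hat G\|_\infty$. This gives regret at most $2v\|G-\hat G\|_\infty$, and the bound needs no single-crossing assumption on $\hat G$.

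The main obstacle is the uniqueness claim at the interface between the pieces. A priori, an interior critical point could tie with a boundary value such as $U(0)=0$, so "if the maximizer is interior it is unique" must be read as saying that an interior global maximizer is the unique critical point, hence the unique maximizer. I would settle this by observing that $U(p^\star)>U$ at both ends of $\mathcal P^\circ$, by strict monotonicity on each side of $p^\star$. Continuity of $U$ at $D/\Delta_{\max}$ and $D/\Delta_{\min}$, which follows from $G$ being continuous, then transfers this to the flat pieces, where $U$ is smaller still. This also handles the degenerate case $D/\Delta_{\min}>1$, where the right end of $\mathcal P^\circ$ is $p=1$.
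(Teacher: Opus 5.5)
Your derivative, the single-crossing argument, the substitution $\Delta_m=D/p^\star$, and the three-term decomposition giving $2v\|G-\hat G\|_\infty$ are the paper's proof, and those parts are sound. Your second paragraph already gives uniqueness in the sense the paper proves, namely among maximizers lying in $\mathcal P^\circ$.

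The step that fails is your last paragraph's upgrade to a unique \emph{global} maximizer. You argue that $U(p^\star)$ exceeds $U$ at both ends of $\mathcal P^\circ$ and that $U$ is ``smaller still'' on the flat pieces. That holds on the right piece, where $U(p)=v-c_ap$ is decreasing. It fails on the left piece: there $U(p)=-c_ap$ rises as $p$ falls, from $-c_aD/\Delta_{\max}$ at the left end of $\mathcal P^\circ$ up to $U(0)=0$, as your own first paragraph notes. Beating $U(D/\Delta_{\max})$ therefore says nothing about beating $U(0)$.

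The claim is in fact false. Take $\Delta_t$ uniform on $[1,2]$, $D=1$, $v=1/2$, $c_a=1$. The marginal value $1/(2p^2)$ crosses $c_a$ once from above at $p^\star=1/\sqrt2\in(1/2,1)=\mathcal P^\circ$, yet $U(p^\star)=1-\sqrt2<0=U(0)$. So the interior critical point is not a maximizer at all. Exact ties also occur: with $h(x)=2(x-1)$ on $[1,2]$ and $c_a$ varied over $(0,8v)$, $U(p^\star)$ moves continuously from near $v$ to near $-4v$, so some $c_a$ gives $U(p^\star)=U(0)$. Your argument would prove that an interior root is always optimal, which the theorem's conditional phrasing (``if the maximizer is interior'') deliberately avoids claiming.

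The repair is to compare with the no-audit value explicitly rather than through continuity. Domination of the right piece is automatic, so $p^\star$ is a global maximizer iff $U(p^\star)\ge 0$, and the unique one iff $U(p^\star)>0$. At $U(p^\star)=0$, uniqueness holds only among interior points, which is all the paper's proof asserts. Your boundary list in the no-root case relies on the same comparison with $U(0)$, since $U$ is decreasing, then increasing, then decreasing on $[0,1]$ and is not globally quasi-concave. You did include that comparison there, so only the final paragraph needs to be replaced.
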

\begin{proof}[Proof sketch]
$U'(p)=v\,h(D/p)D/p^2-c_a$ crosses zero once from above on $\mathcal P^\circ$ by assumption, which gives uniqueness and the first-order condition; substituting $\Delta_m=D/p^\star$ gives the marginal-type form. For the regret bound, only $v[1-G(D/p)]$ depends on $G$, so $|U_G(p)-U_{\hat G}(p)|\le v\|G-\hat G\|_\infty$ for every $p$, and a standard two-sided comparison at $p^\star$ and $\hat p$ gives the factor $2$ (Appendix~\ref{app:firm}).
\end{proof}

Intuitively, one might expect the firm to audit until gaming is eliminated. The theorem shows that full deterrence is optimal only in the limit of costless audits: low-$\Delta$ agents lose less when caught, so that a given detection rate deters them last, and reaching the lower tail of the risk distribution may simply be too expensive with a uniform instrument. In other words, residual gaming at the optimum is not evidence that the audit technology has failed. We also note that the regret bound makes the rule operational, as it suffices to estimate the risk distribution in Kolmogorov distance. The full-deterrence limit also marks where the classifier stops carrying the whole anti-gaming burden: if the firm can set $p_j\ge D_j/\Delta_{\min}$ on the channels that would be gamed against a candidate classifier, no type games against it, and its fakeable predictive features can be priced as prediction features again. 

\textit{How the two instruments interact.}
We now turn to the interaction between the classifier and the audit. Write each strategic type's score deficit as $g_{0,t}=\theta-\inn{w}{x^0_t}$, its honest price as $\PhiI(t)$, its channel rents as $D_{j,t}:=\PhiI(t)-g_{0,t}c^G_j/w_j$, and its post-improvement label as $y^I_t$ (whether improving to the bar yields genuine qualification). We call the law of $\big(\PhiI(t),(D_{j,t})_{j\in\Gset},\Delta_t,y_t,y^I_t\big)$ under $\rho$ the \emph{rent profile} $\nu_{w,\theta}$.

\begin{thm}[Rent-sufficiency and two-stage design]\label{thm:rentsuff}
Assume \textnormal{(Lin)}, \textnormal{(W$>$0)}, the additively separable objective \eqref{eq:firm-objective}, and a non-degenerate population: the law of $\big(\PhiI(t),(D_{j,t})_j,\Delta_t\big)$ charges no agent-indifference hyperplane (any law absolutely continuous in those coordinates qualifies). Then the firm value depends on the classifier only through the rent profile, $\Pio(w,\theta,\pi)=\Psi(\nu_{w,\theta},\pi)$; two classifiers inducing the same $\nu$ are payoff-equivalent for every audit profile, and
\begin{equation}\label{eq:two-stage}
\max_{w,\theta,\pi}\Pio
=\max_{w,\theta}\ \underbrace{\max_{\pi}\Psi(\nu_{w,\theta},\pi)}_{\text{audit allocation}} .
\end{equation}
The two stages decouple exactly only along rent-invariant classifier directions, which are non-generic since $g_{0,t}$ moves with every weight and with $\theta$.
\end{thm}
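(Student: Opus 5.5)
The plan is to show that each agent's contribution to \eqref{eq:firm-objective} is a measurable function of its own rent-profile coordinates and of $\pi$, and then integrate. Fix a type $t$ with $g_{0,t}>0$; types with $g_{0,t}\le0$ do not move, and their contribution is $F(\text{accept},y_t)$, so I will include them in $\nu$ through $y_t$ and an ``already accepted'' marker (e.g.\ $\PhiI(t)=0$). Under \textnormal{(Lin)} and \textnormal{(W$>$0)}, Theorem~\ref{thm:portfolio} says the best response is one of three things: abstain, improve honestly at cost $\PhiI(t)$, or game on a single channel $j$ at risk-adjusted price $g_{0,t}c^G_j/w_j+p_j\Delta_t$. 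Using the definition $D_{j,t}=\PhiI(t)-g_{0,t}c^G_j/w_j$, this price equals $\PhiI(t)-D_{j,t}+p_j\Delta_t$. So the agent compares $\{0,\ u_+-\PhiI(t),\ \max_j(u_+-\PhiI(t)+D_{j,t}-p_j\Delta_t)\}$, and every one of these quantities is a function of $(\PhiI(t),(D_{j,t})_j,\Delta_t)$ and $\pi$ alone; $g_{0,t}$ and $w$ enter only through the rents. Here $u_+$ is common and $u_{-,t}$ is encoded in $\Delta_t$.

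Next I will read off the outcome and its score. If the agent abstains, it is rejected, with score $F(\text{rej},y_t)$. If it improves, it is accepted and its true post-move label is $y^I_t$. If it games on $j$, it is caught with probability $p_j$ (rejected, with the label unchanged at $y_t$ since gaming does not change true features) and otherwise accepted with label $y_t$. Hence $\E[F(\mathrm{out}_t,y_t)\mid t]=f\big(\PhiI(t),(D_{j,t})_j,\Delta_t,y_t,y^I_t;\pi\big)$ for a fixed function $f$. The false-accusation term, when present, depends only on whether the agent is honest and on $\Delta$, so it fits the same form. Integrating against $\rho$ is then integrating $f$ against the pushforward $\nu_{w,\theta}$, and since $K(p)$ depends only on $\pi$, this gives $\Pio=\Psi(\nu_{w,\theta},\pi)$. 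Payoff-equivalence of two classifiers with the same $\nu$ is immediate, and \eqref{eq:two-stage} follows by writing the joint maximum as an iterated supremum, $\sup_{w,\theta}\sup_\pi$.

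The main obstacle is well-definedness at ties. At ties the best response is a set, and different tie-breaking rules could make the outcome depend on more than $\nu$; for example, they could depend on the identity of channel $j$ beyond its rent, or on $g_0$ when several channels tie. Non-degeneracy handles this. Each indifference event, such as $D_{j,t}=p_j\Delta_t$, $D_{j,t}-p_j\Delta_t=D_{k,t}-p_k\Delta_t$, or $\PhiI(t)=u_+$, is a hyperplane (or a lower-dimensional set) in the coordinates $(\PhiI,D,\Delta)$ for fixed $\pi$, and by assumption it is $\nu$-null. Off these sets the argmax is unique, so $f$ is well defined $\nu$-a.e. and any tie-breaking rule gives the same integral. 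I would first check that the set of such hyperplanes for a fixed $\pi$ is finite, so that their union is still null.

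For the last sentence, I will note that a classifier perturbation leaves the audit stage unchanged exactly when it preserves $\nu$. Since $\partial g_{0,t}/\partial\theta=1$ and $\partial g_{0,t}/\partial w_k=-x^0_{t,k}$, every such perturbation moves $\PhiI(t)$ and $D_{j,t}$ for a positive-mass set of types, unless it is confined to directions that leave all these maps invariant. Those directions form a lower-dimensional, non-generic set, and this establishes the remark.
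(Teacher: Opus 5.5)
Your proposal is correct and follows essentially the same route as the paper's proof. Like the paper, you rewrite the three branch values in rent coordinates via $g_{0,t}c^G_j/w_j=\PhiI(t)-D_{j,t}$, read off the per-type payoff with the audit coin only on the played channel, use non-degeneracy to make the finite union of indifference hyperplanes null so that tie-breaking is immaterial, and integrate against the pushforward $\nu_{w,\theta}$. Your explicit marker for inert types (such as $\PhiI(t)=0$) is slightly cleaner than the paper's ``label-only constant absorbed into $\Psi$,'' because the inert set itself moves with $(w,\theta)$.
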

\begin{proof}[Proof sketch]
Under \textnormal{(Lin)}, Theorem~\ref{thm:portfolio} expresses the agent's three branch values as $0$, $u_+-\PhiI$, and $(u_+-\PhiI)+\max_j(D_j-p_j\Delta)$, which depend on the type only through its rent coordinates $\xi_t$; the maximizing branch and channel are therefore measurable functions of $(\xi_t,\pi)$. The set of rent coordinates at which two branches or two channels tie is a finite union of hyperplanes, which is $\rho$-null by non-degeneracy, so the tie-breaking rule is immaterial. The firm's expected per-type payoff is then a bounded measurable function $f(\xi_t,\pi)$ (only the audit coin on the played channel is random), and additive separability gives $\Pio=\int f(\xi,\pi)\,d\nu_{w,\theta}(\xi)=:\Psi(\nu_{w,\theta},\pi)$. Full details are in Appendix~\ref{app:firm}.
\end{proof}

In words, the theorem states that the classifier matters for the audit problem only through the gaming rents it creates. The inner problem in \eqref{eq:two-stage} is the allocation problem studied above (water-filling under the rent-density representation of $\nu$, and the quantile rule under its risk-CDF representation). The two designs do not separate, but their coupling is funneled through a single object; this is what allows the choice of classifier to be viewed as an outer optimization over induced rent profiles, and it is what Section~\ref{sec:computation} makes algorithmic. We note that the theorem is stated under support-based detection. Proposition~\ref{prop:neardecouple} (Appendix~\ref{app:firm}) shows that under a detection technology with magnitude sensitivity $1/\beta$, the firm's value converges to $\Psi(\nu_{w,\theta},\pi)$ as $\beta\to\infty$; the mechanism is a detection floor (Lemma~\ref{lem:floor}): at a fixed gameable dimension, a fake cannot be diluted into undetectability, because covering the score gap forces some channel to a magnitude at which sharp audit catches it.

\section{Computational complexity of audit allocation}\label{sec:computation}
In this section, we study the computational complexity of audit allocation and of the joint design. The picture that emerges mirrors the economics, with the rent profile of Theorem~\ref{thm:rentsuff} as the interface between the two: audit allocation can be solved exactly when few channels are audited, is Densest-$k$-Subgraph-hard once agents can game through many channels with heterogeneous gaming technologies, and the full joint design is polynomial in fixed feature dimension.

\textit{The finite-type allocation problem.}
Fix a classifier and finitely many strategic types $t=1,\dots,N$ under \textnormal{(Lin)} with honest participation $\PhiI(t)\le u_+$. Type $t$ has deterrence value $a_t>0$, differential $\Delta_t$, gameable support $\Gset_t\subseteq\Gset$ with unit costs $c^G_{j,t}$, and channel rents $D_{j,t}=\PhiI(t)-g_{0,t}c^G_{j,t}/w_j$; write $J_t:=\{j\in\Gset_t:D_{j,t}>0\}$ for its profitable channels. By Theorem~\ref{thm:portfolio}, type $t$ is deterred iff $p_j\Delta_t\ge D_{j,t}$ for \emph{every} $j\in J_t$: deterrence is a covering condition with one inequality per profitable channel. The problem \textsc{Audit} is to choose $p\in[0,1]^{\Gset}$ with $\sum_j\kappa_jp_j\le B$ maximizing the deterrence value $\sum_{t\,\text{deterred}}a_t$; it is the finite-type form of \eqref{eq:cum-deterrence}.

\begin{prop}[Vertex rule: exact allocation for few channels]\label{prop:vertex}
Let $m=|\Gset|$ and let $\mathcal H$ be the arrangement in $[0,1]^m$ generated by the deterrence hyperplanes $p_j\Delta_t=D_{j,t}$, the switch hyperplanes $p_j\Delta_t-p_i\Delta_t=g_{0,t}\big(c^G_{i,t}/w_i-c^G_{j,t}/w_j\big)$, the budget hyperplane $\sum_j\kappa_jp_j=B$, and the box faces. Under leader-favorable tie-breaking an optimal solution of \textsc{Audit} lies at a vertex of $\mathcal H$; the same holds for the outcome objective in which an undeterred gamer on channel $j$ contributes $-(1-p_j)a_t$. Enumerating vertices solves either problem exactly in time $(Nm)^{O(m)}$, polynomial in $N$ for fixed $m$.
\end{prop}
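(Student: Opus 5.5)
The plan is to show that, under leader-favorable tie-breaking, the objective of \textsc{Audit} is piecewise constant on the cells of the arrangement $\mathcal H$. On each closed cell the set of maximizers can be taken to contain a vertex, so enumerating vertices suffices.

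\emph{Step 1: constancy of the agent response on cells.} Fix a type $t$. By Theorem~\ref{thm:portfolio} under \textnormal{(Lin)}, the type's response at $p$ is determined by three things: which profitable channels are deterred, i.e.\ the signs of $p_j\Delta_t-D_{j,t}$; which undeterred channel has the lowest risk-adjusted price, i.e.\ the signs of the pairwise price differences $g_{0,t}c^G_{j,t}/w_j+p_j\Delta_t-(g_{0,t}c^G_{i,t}/w_i+p_i\Delta_t)$; and feasibility, fixed by the budget and box hyperplanes. These sign patterns are exactly the sides of the deterrence and switch hyperplanes. So on the relative interior of every face of $\mathcal H$ each type's branch (improve or game) and chosen channel are constant.

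Under the deterrence objective the payoff is $\sum_{t\text{ deterred}}a_t$, which is therefore constant on each open face. Under the outcome objective the payoff is constant minus $\sum_t(1-p_{j(t)})a_t$ over undeterred gamers, which is affine in $p$ on each face.

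\emph{Step 2: moving to a vertex.} Take an optimal $p^\ast$ and let $F$ be the face of $\mathcal H$ whose relative interior contains it. The closure $\bar F$ is a polytope, bounded because of the box. Deterrence uses weak inequalities $p_j\Delta_t\ge D_{j,t}$, and leader-favorable tie-breaking resolves switch ties in the firm's favor. Together these make the payoff upper semicontinuous in the right direction: at any point of $\bar F$, in particular its vertices, the firm's value is at least the limit of the face value.

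For the deterrence objective the value on $F$ is constant, so any vertex of $\bar F$ attains at least the optimum. For the outcome objective the value is affine on $F$, so its maximum over the polytope $\bar F$ is attained at a vertex. The inequality from the boundary limit still holds there, and vertices of $\bar F$ are vertices of $\mathcal H$.

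The main obstacle is this semicontinuity check. At a boundary point a type may switch from deterred to gaming, or between channels. I would verify it case by case. Deterrence is closed, so deterred types stay deterred on the boundary. At a switch tie the firm selects the channel giving the larger payoff, so the limit value from inside $F$ is available. At a tie between gaming and honest improvement, the weak inequality in Lemma~\ref{lem:deterrence-frontier} resolves it toward deterrence.

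\emph{Step 3: counting.} The arrangement has $M=O(Nm+Nm^2+m)$ hyperplanes: $Nm$ deterrence hyperplanes, $Nm^2$ switch hyperplanes, one budget hyperplane, and $2m$ box faces. Each vertex is the solution of an $m\times m$ linear system chosen from these, giving at most $\binom{M}{m}=(Nm)^{O(m)}$ candidates. Each candidate is checked for feasibility and evaluated in $\mathrm{poly}(N,m)$ time by computing every type's best response via Theorem~\ref{thm:portfolio}. The total running time is $(Nm)^{O(m)}$, which is polynomial in $N$ for fixed $m$.
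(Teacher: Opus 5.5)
Your proof is correct and follows the paper's argument. The paper uses the closed polyhedra $R_\sigma$ on which a fixed action pattern is weakly optimal, together with $V(p)\ge V_\sigma(p)$ on $R_\sigma$ under leader-favorable selection; this is your closure/semicontinuity step stated region by region instead of face by face, and it is followed by the same vertex count. As a minor point, your Step~2 does not need an optimal $p^\ast$ to exist beforehand: applied to any feasible $p$, it shows that the supremum is attained at a vertex.
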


The proof (Appendix~\ref{app:computation}) shows that the set of profiles at which a given action pattern is optimal is a polyhedron with facets on $\mathcal H$, on which the objective is affine. We note that the optimum can lie at an intersection at which a type is indifferent between two gaming channels, a point that per-channel deterrence thresholds alone would miss. The next result shows that the exponential dependence on $m$ cannot be removed in general.

\begin{thm}[Hardness of audit allocation across many channels]\label{thm:hardness}
\textsc{Audit} is NP-hard, by a reduction from the Densest $k$-Subgraph problem: given a graph $G=(V,E)$ and an integer $k$, build one channel per vertex with $w_j=1$ and $\kappa_j=1$, and one type $t_e$ per edge $e=\{u,v\}$ with gameable support $\{u,v\}$, unit gaming costs $c^G_{u,t_e}=c^G_{v,t_e}=1$, score deficit $g_{0,t_e}=1$, one improvable coordinate of unit weight and cost $2$ (so $\PhiI(t_e)=2$), acceptance utility $u_+=2$, differential $\Delta_{t_e}=2$, value $a_{t_e}=1$, and budget $B=k/2$. In this instance every edge-type has rent $1$ on both of its channels and is deterred iff $p_u\ge\tfrac12$ and $p_v\ge\tfrac12$, and the optimal value of \textsc{Audit} equals the maximum number of edges induced by a $k$-vertex subset of $G$. Hardness therefore holds already with unit values, a common differential, common rents, linear unit-cost audits, and two profitable channels per type.
\end{thm}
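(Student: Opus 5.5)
The reduction is the one given in the statement, so the plan is to verify three things in order. First, the rents and the deterrence condition in the constructed instance. Second, that the optimal value of \textsc{Audit} equals the densest-$k$ value, which needs two inequalities. Third, that the construction is polynomial.

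\emph{Step 1: rents and deterrence.} For an edge-type $t_e$ with $e=\{u,v\}$, honest recourse uses the single improvable coordinate at unit weight and cost $2$, so $\PhiI(t_e)=2\le u_+=2$ and honest participation holds. Each gaming channel $j\in\{u,v\}$ has cost-only price $g_0c^G_j/w_j=1$, so $D_{j,t_e}=2-1=1>0$ and $J_{t_e}=\{u,v\}$. By Theorem~\ref{thm:portfolio} under (Lin), the agent's best gaming option is a singleton support with risk-adjusted price $1+2p_j$. Deterrence is therefore the covering condition $2p_u\ge1$ and $2p_v\ge1$, as recorded before Proposition~\ref{prop:vertex}. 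All other channels $j\notin e$ lie outside $\Gset_{t_e}$ and play no role for this type.

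\emph{Step 2: lower bound on the optimum.} Given a vertex set $S$ with $|S|=k$, I will set $p_j=\tfrac12$ for $j\in S$ and $p_j=0$ otherwise. The cost is $\sum_j p_j=k/2=B$, which is feasible, and every edge inside $S$ is deterred. So the \textsc{Audit} optimum is at least the number of edges induced by $S$.

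\emph{Step 3: upper bound (the main step).} Take any feasible $p$ and let $S=\{j:p_j\ge\tfrac12\}$. The budget gives $|S|/2\le\sum_jp_j\le k/2$, so $|S|\le k$. A type $t_e$ is deterred only if both endpoints of $e$ lie in $S$, so the deterrence value equals $|E(S)|$. Since adding vertices never decreases the number of induced edges, this is at most the maximum over $k$-subsets (assuming $k\le|V|$, which is without loss). This step is the crux but it is short. The only subtlety is tie-breaking at $2p_j=1$. The statement of Lemma~\ref{lem:deterrence-frontier} treats the weak inequality as deterrence, and I will adopt that convention explicitly. Alternatively, one can shift the budget to $k/2+\epsilon$ with $\epsilon<1/2$ so that $|S|\le k$ still holds.

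\emph{Step 4: wrap-up.} Combining the two bounds, the optimal value equals the densest-$k$ value. The instance has $|V|$ channels, $|E|$ types, and constant-size numbers, so it is built in polynomial time. Deciding whether the \textsc{Audit} optimum is at least $\ell$ therefore decides Densest-$k$-Subgraph, which is NP-hard. The final sentence of the theorem then follows by inspecting the instance: all values are $1$, the differential is $2$ for every type, all rents are $1$, audit costs are linear with unit rates, and each type has exactly two profitable channels.
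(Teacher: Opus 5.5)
Your proposal is correct and follows the paper's proof essentially step for step: you derive the covering condition $p_u,p_v\ge\tfrac12$ from Theorem~\ref{thm:portfolio}, use the profile $\tfrac12\indic_S$ for the lower bound, and take the rounding $S'=\{j:p_j\ge\tfrac12\}$ with $|S'|\le 2B=k$ plus padding for the upper bound. Your explicit treatment of weak-inequality (leader-favorable) tie-breaking and of the polynomial size of the instance are harmless additions that the paper leaves implicit; for NP-hardness of Densest $k$-Subgraph the paper simply notes that it contains Clique.
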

\begin{proof}
Let $(G,k,M)$ ask whether some $S\subseteq V$ with $|S|=k$ induces at least $M$ edges, and build the instance in the statement. By Theorem~\ref{thm:portfolio}, type $t_e$ games iff some channel $j\in\{u,v\}$ has $g_{0,t_e}c^G_{j,t_e}/w_j+p_j\Delta_{t_e}<\PhiI(t_e)$, i.e., iff $\min\{p_u,p_v\}<\tfrac12$. If $S$ induces $M'$ edges, the profile $p=\tfrac12\indic_S$ is feasible and deters exactly the edge-types inside $S$, so its value is $M'$. Conversely, given a feasible $p$ with value $M'$, let $S'=\{j:p_j\ge\tfrac12\}$; feasibility gives $|S'|\le 2B=k$, every deterred type is an edge inside $S'$, and padding $S'$ to size $k$ can only add induced edges. Hence the two optima are equal, and the decision problem has answer yes iff the optimum of \textsc{Audit} is at least $M$. Since Densest $k$-Subgraph contains Clique ($M=\binom k2$), \textsc{Audit} is NP-hard.
\end{proof}

\begin{cor}[Inapproximability]\label{cor:inapprox}
Since the reduction preserves objective values exactly, any polynomial-time $\rho$-approximation for \textsc{Audit} yields a $\rho$-approximation for Densest $k$-Subgraph. Hence \textsc{Audit} admits no PTAS unless NP has subexponential randomized algorithms \cite{khot2006ptas}, and, under the Exponential Time Hypothesis, admits no polynomial-time approximation within a factor $m^{1/(\log\log m)^{c}}$ in the number of channels $m$ \cite{manurangsi2017dks}. The reduction requires heterogeneous gameable supports: if all types share unit costs and weights, each profitable set $J_t$ is a prefix of the channels ordered by $c^G_j/w_j$, any two such sets are nested, and the reduction's instances cannot be realized.
\end{cor}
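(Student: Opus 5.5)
The plan has three parts: show that the reduction in Theorem~\ref{thm:hardness} transfers approximation factors exactly, import the known Densest-$k$-Subgraph lower bounds, and check the structural remark about nested supports.

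\emph{Transfer of approximation factors.} The proof of Theorem~\ref{thm:hardness} gives more than equality of optima. From a feasible profile $p$ with value $M'$ it builds, in polynomial time, a $k$-set $S$ inducing at least $M'$ edges: take $S'=\{j:p_j\ge\tfrac12\}$, note that every deterred edge-type lies inside $S'$ and $|S'|\le k$, and pad $S'$ to size $k$. The construction of the instance is also polynomial. So let $\mathcal A$ be a $\rho$-approximation for \textsc{Audit}. On input $(G,k)$ we build the instance, run $\mathcal A$ to obtain $p$ with value at least $\mathrm{OPT}_{\textsc{Audit}}/\rho$, and map $p$ to $S$. Then $S$ induces at least $\mathrm{OPT}_{\textsc{Audit}}/\rho=\mathrm{OPT}_{\mathrm{DkS}}/\rho$ edges. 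One point needs care: the number of channels satisfies $m=|V|=n$. Hence a factor expressed as a function of $m$ is the same function of $n$, which is the parameter used in \cite{manurangsi2017dks}.

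\emph{Importing the hardness results.} A PTAS for \textsc{Audit} would give a PTAS for Densest $k$-Subgraph, which contradicts \cite{khot2006ptas} under the stated assumption. Likewise, an $m^{1/(\log\log m)^c}$-approximation would give an $n^{1/(\log\log n)^c}$-approximation for Densest $k$-Subgraph, which contradicts \cite{manurangsi2017dks} under ETH. The constant $c$ is inherited from that paper.

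\emph{Nested supports.} Suppose every type has the same unit costs $c^G_j$ and weights $w_j$, so that $r_j:=c^G_j/w_j$ is common to all types. Then $D_{j,t}>0$ holds iff $r_j<\PhiI(t)/g_{0,t}$. Ordering channels by $r_j$, each $J_t$ is therefore an initial segment, and any two such segments are nested. The reduction, however, needs pairs $J_{t_e}=\{u,v\}$ that are not nested. For example, two disjoint edges give two-element sets that are incomparable, and any graph with two distinct edges has two non-nested pairs. Such sets cannot be realized under the nested structure.

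This last claim is only about infeasibility of the reduction itself, not a tractability proof. The only subtle step is bookkeeping which quantity is common across types. If a type's gameable support $\Gset_t$ were allowed to vary, $J_t$ would be $\Gset_t$ intersected with a prefix and need not be nested. I would therefore state the claim for a common gameable support $\Gset_t=\Gset$ with common costs, matching the hypothesis ``all types share unit costs and weights.''
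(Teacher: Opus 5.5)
Your proposal is correct and follows the paper's proof: you transfer ratios via the rounding $S'=\{j:p_j\ge\tfrac12\}$ with $m=|V|$, and you get nestedness because each $J_t$ is a threshold set in the common ratio $c^G_j/w_j$. Your caveat that the nestedness claim needs a common gameable support $\Gset_t=\Gset$ is a useful sharpening. The reduction's own edge-types already share unit costs and weights and differ only in support, and the paper handles this only implicitly by writing $J_t$ as a threshold set over all channels.
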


Intuitively, the theorem identifies where the difficulty comes from. With a single binding channel per type, allocation is the concave water-filling problem of Theorem~\ref{thm:auditdesign}, and with few channels the vertex rule is exact. What is hard is the regime in which overlapping gaming technologies force the auditor to choose which \emph{combinations} of channels to seal under a cap, since deterring a type pays off only if all of its escape routes are priced out at once. Whether the nested-prefix instances arising under a common gaming technology are polynomial-time solvable remains open; that problem reduces to selecting a maximum-value set of decreasing lines under a budget on the integral of their upper envelope.

The hardness above is for the deterrence objective. Under the outcome objective, a fractional audit earns partial credit for catching gamers it does not deter, and the reduction survives only when that credit is small.

\begin{prop}[Hardness with catch credit]\label{prop:catch-credit}
Let \textsc{Audit}$_\lambda$ credit an undeterred type $t$ gaming on channel $j(t)$ with $\lambda\,p_{j(t)}a_t$, $\lambda\in[0,1]$, so that $\lambda=0$ is \textsc{Audit} and $\lambda=1$ is the outcome objective. If $\lambda\le1/(2\sum_ta_t)$, the instances of Theorem~\ref{thm:hardness} still have deterrence value equal to the Densest $k$-Subgraph optimum at every optimal solution of \textsc{Audit}$_\lambda$, so \textsc{Audit}$_\lambda$ is NP-hard, and a $\rho$-approximation for it yields a $2\rho$-approximation for Densest $k$-Subgraph. For $\lambda$ bounded away from zero the optimizer can change (on small instances the outcome-optimal profile deters strictly fewer types than the Densest $k$-Subgraph optimum), and the complexity of \textsc{Audit}$_1$ is open.
\end{prop}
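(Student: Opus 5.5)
The plan is to treat the catch credit as a perturbation of the instance of Theorem~\ref{thm:hardness}. Since deterrence value is integer-valued, the perturbation is too small to change which profile is optimal. Fix the instance of Theorem~\ref{thm:hardness} built from $(G,k)$. Every edge-type $t_e$ with $e=\{u,v\}$ has rent $1$ on both channels, common gaming price $g_0c^G/w=1$, and $\Delta_{t_e}=2$.

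By Theorem~\ref{thm:portfolio}, an undeterred $t_e$ games on the channel of smaller risk-adjusted price $1+2p_j$. That channel is $j(t_e)\in\arg\min\{p_u,p_v\}$, and ties are immaterial because both channels carry the same $p$. Being undeterred means $\min\{p_u,p_v\}<\tfrac12$, so the catch credit of $t_e$ is $\lambda p_{j(t_e)}a_{t_e}<\lambda/2$.

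Write the \textsc{Audit}$_\lambda$ objective as $\mathrm{Obj}_\lambda(p)=\mathrm{Det}(p)+\mathrm{Cr}(p)$. Here $\mathrm{Det}(p)$ is the number of deterred edge-types, a nonnegative integer, and $\mathrm{Cr}(p)$ is the total catch credit. From the bound above,
\[
0\le \mathrm{Cr}(p)<\tfrac{\lambda}{2}\sum_t a_t\le \tfrac14 \qquad\text{whenever }\lambda\le 1/(2\textstyle\sum_ta_t).
\]
Let $\mathrm{OPT}$ be the Densest $k$-Subgraph optimum. Theorem~\ref{thm:hardness} gives $\mathrm{Det}(p)\le\mathrm{OPT}$ for every feasible $p$, with equality at $p=\tfrac12\indic_S$ for an optimal $S$. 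Hence $\max\mathrm{Obj}_\lambda\ge\mathrm{OPT}$.

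If an optimal $p$ of \textsc{Audit}$_\lambda$ had $\mathrm{Det}(p)\le\mathrm{OPT}-1$, then $\mathrm{Obj}_\lambda(p)<\mathrm{OPT}-\tfrac34$, which is a contradiction. So every optimizer has deterrence value exactly $\mathrm{OPT}$. The set $S'=\{j:p_j\ge\tfrac12\}$, padded to size $k$ as in the proof of Theorem~\ref{thm:hardness}, then induces at least $\mathrm{OPT}$ edges. Deciding whether $\mathrm{OPT}\ge M$ therefore reduces to \textsc{Audit}$_\lambda$, since $\mathrm{OPT}=\lfloor\max\mathrm{Obj}_\lambda\rfloor$, which gives NP-hardness.

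For approximation, let a $\rho$-approximation return $\hat p$ with $\mathrm{Obj}_\lambda(\hat p)\ge\mathrm{OPT}/\rho$. Then $\mathrm{Det}(\hat p)\ge\mathrm{OPT}/\rho-\tfrac14$, and we split into two cases.
\begin{itemize}
\item If $\mathrm{OPT}/\rho\ge\tfrac12$, then $\tfrac14\le\mathrm{OPT}/(2\rho)$, so $\mathrm{Det}(\hat p)\ge\mathrm{OPT}/(2\rho)$. Rounding $\hat p$ to $S'$ as above yields a $k$-set inducing at least that many edges.
\item Otherwise $\mathrm{OPT}/(2\rho)<\tfrac14<1$, and any single edge padded to $k$ vertices already achieves $1\ge \mathrm{OPT}/(2\rho)$. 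We assume $k\ge2$ and $E\neq\varnothing$; the other cases are trivial.
\end{itemize}
Outputting the better of the two candidates gives a $2\rho$-approximation for Densest $k$-Subgraph. The step needing the most care is this bookkeeping: the credit bound must be strict, the additive slack $\tfrac14$ has to be absorbed into the factor $2$ using the trivial single-edge fallback, and tie-breaking between equal-$p$ channels must be checked to be harmless.

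For the final claim, I would exhibit a small explicit instance, such as a path or star with $k=2$ or $3$, evaluated at moderate $\lambda$ such as $\lambda=1$. On such an instance, spreading the budget fractionally over all vertices, with every $p_j$ just below $\tfrac12$, collects catch credit close to $|E|/2$. This exceeds the integer $\mathrm{OPT}$ achieved by the $\tfrac12\indic_S$ profile. I would verify this by direct computation, for instance by enumerating vertices of the arrangement via Proposition~\ref{prop:vertex}. That shows the outcome-optimal profile deters strictly fewer types. The complexity of \textsc{Audit}$_1$ is then simply stated as open.
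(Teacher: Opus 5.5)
Your NP-hardness argument and your $2\rho$-approximation transfer are correct and follow the paper's argument. You bound the total catch credit, then combine integrality of the deterrence value with the rounding $S'=\{j:p_j\ge\tfrac12\}$ to force every optimizer to deter exactly $\mathrm{OPT}$ types. Finally you absorb the additive slack with a single-edge fallback. You bound each undeterred type's credit by $\lambda\min\{p_u,p_v\}<\lambda/2$, where the paper uses $\le\lambda$, so your slack is $\tfrac14$ rather than $\tfrac12$. Your case split therefore sits at $\mathrm{OPT}/\rho\ge\tfrac12$ rather than $\ge1$; this difference is cosmetic.

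The gap is in your witness for the final claim, which is part of the statement.

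\emph{The proposed profile is infeasible.} A profile with every $p_j$ just below $\tfrac12$ costs about $|V|/2$. That exceeds the budget $B=k/2$ unless $k$ is nearly $|V|$. Any budget that affords it also affords $p_j=\tfrac12$ everywhere, which deters all $|E|$ types.

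\emph{The proposed instances do not show the effect.} Under the actual budget, thin spreading earns credit governed by density, not by $|E|/2$. For $k=2$, every profile that deters no type earns at most $\max\{\sum_{uv\in E}\min(p_u,p_v):p\ge0,\ \sum_jp_j\le1\}$. This is the densest-subgraph LP, with value $\max_{S\neq\varnothing}e(S)/|S|$, which is below $1$ on every forest; it equals $\tfrac23$ on the three-vertex path. By contrast, the profile $\tfrac12\indic_{\{u,v\}}$ for any edge earns exactly $1$. So on paths and stars with $k=2$, every optimizer of \textsc{Audit}$_1$ still deters $\mathrm{OPT}=1$ type. For $k=3$, root each tree and charge each edge's credit to its child endpoint. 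This shows every profile deterring fewer than two types earns at most $\tfrac32<2=\mathrm{OPT}$. The vertex enumeration you propose would therefore not confirm the claim on the instances you name.

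\emph{A working witness.} You need a graph that is dense relative to $k$. Take $K_4$ with $k=2$ and $B=1$.
Deterring an edge forces $p_u=p_v=\tfrac12$ and all other rates to $0$, for a value of exactly $1$.
The uniform profile $p\equiv\tfrac14$ deters nothing and earns $6\cdot\tfrac14=\tfrac32$ at $\lambda=1$.
Hence every optimizer deters $0<1=\mathrm{OPT}$ types.
More generally, $K_n$ with $k=2$ and uniform $p_j=1/n$ earns $\lambda(n-1)/2$, which exceeds $1$ whenever $\lambda>2/(n-1)$. So the effect appears at every fixed $\lambda>0$.

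The paper supports this claim only by an exhaustive search on graphs with at most six vertices. An explicit instance is therefore worth supplying, but it has to be one that actually exhibits the effect.
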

The proof is in Appendix~\ref{app:computation}.

\textit{Where the hardness comes from.}
The reduction above uses a hard cap on total inspection. Perhaps surprisingly, for the deterrence objective, replacing the cap by a per-unit cost removes the hardness altogether.

\begin{prop}[A soft budget makes allocation polynomial]\label{prop:softbudget}
Let \textsc{Audit}$^{\mathrm{cost}}$ be the problem of Theorem~\ref{thm:hardness} with the budget constraint replaced by a linear inspection cost, i.e., the problem of maximizing $\sum_{t\,\text{deterred}}a_t-\sum_j\kappa_jp_j$ over $p\in[0,1]^{\Gset}$. Then an optimal solution lies on the finite lattice of deterrence thresholds $L=\prod_j\big(\{0\}\cup\{D_{j,t}/\Delta_t\}_t\big)$, the objective is supermodular on $L$, and \textsc{Audit}$^{\mathrm{cost}}$ is solvable in time polynomial in $N$ and $m$ by submodular function minimization over a ring family with $O(Nm)$ elements. The same is false for the outcome objective, whose payoff is not supermodular on $L$ already with three channels.
\end{prop}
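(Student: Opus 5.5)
The plan has four steps: reduce to the threshold lattice, recast deterrence as an upward-closed condition, derive supermodularity from a lattice identity, and then show the outcome objective fails with a three-channel example.

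\textbf{Step 1: reduction to the lattice $L$.} Fix an optimal $p$ and take the set of types it deters. Each such type $t$ needs $p_j\ge D_{j,t}/\Delta_t$ for every $j\in J_t$. Now lower each coordinate $p_j$ to the largest element of $\{0\}\cup\{D_{j,t}/\Delta_t\}_t$ that is at most $p_j$. Every deterrence inequality that held before still holds, because its right-hand side is one of those thresholds and we never go below it. The inspection cost weakly drops. So the lowered profile lies in $L$ and is also optimal.

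\textbf{Step 2: deterrence as an upward-closed condition.} Deterrence of $t$ is the condition $p\ge\tau_t$ componentwise, where $\tau_{t,j}=D_{j,t}/\Delta_t$ on $J_t$ and $\tau_{t,j}=0$ elsewhere. We can use leader-favorable ties and cap thresholds above $1$ as undeterrable. The indicator $\indic[p\ge\tau_t]$ is then a monotone function that is a product of one-dimensional indicators. It satisfies
\[
\indic[p\vee q\ge\tau]+\indic[p\wedge q\ge\tau]\ge\indic[p\ge\tau]+\indic[q\ge\tau].
\]
To check this, note that $p\wedge q\ge\tau$ holds iff both $p\ge\tau$ and $q\ge\tau$ hold, and $p\vee q\ge\tau$ holds whenever either does. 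Summing with weights $a_t>0$ makes the deterrence value supermodular on $L$. The cost term $-\sum_j\kappa_jp_j$ is modular. So the whole objective is supermodular, and its negative is submodular.

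\textbf{Step 3: encoding as a ring family.} Each product chain $\{0\}\cup\{D_{j,t}/\Delta_t\}$ has at most $N+1$ levels. I encode a point of $L$ by the set of pairs $(j,\ell)$ with $p_j\ge$ the $\ell$-th level. This identifies $L$ with the ideals of a disjoint union of $m$ chains. That family is closed under union and intersection, and its ground set has $O(Nm)$ elements. Submodular minimization over such a ring family (Iwata–Fleischer–Fujishige, Schrijver, or the standard reduction to a lattice via Birkhoff) runs in time polynomial in the ground set and in the cost of evaluating the objective. Evaluation is $O(Nm)$.

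\textbf{Step 4: the outcome objective.} This is the main obstacle, because I need an explicit counterexample and the payoff now depends on which channel an undeterred type games on. That channel is the argmin of $g_{0,t}c^G_j/w_j+p_j\Delta_t$, and raising one coordinate can reroute a type to a channel with lower $p$. My first attempt uses a single type with three profitable channels: a cheap channel 1 and two equally priced backups 2 and 3. Raising $p_1$ alone shifts the type to channel 2, and raising $p_2$ alone also shifts it, so catch credit $(1-p)$ varies non-additively. I would compare $f(p\vee q)+f(p\wedge q)$ with $f(p)+f(q)$ for $p=(h,h,0)$ and $q=(h,0,h)$. I expect a strict failure because the join still leaves the type undeterred but caught at rate $h$ rather than gaining anything extra. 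I would tune the rents so the inequality reverses, then verify it numerically on the four points.
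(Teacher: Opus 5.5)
Your Steps 1--3 are correct and follow the paper's proof essentially verbatim: the same round-down to the threshold lattice, the same observation that each deterrence indicator is the indicator of an up-set (your remark that $p\wedge q\ge\tau$ iff both $p\ge\tau$ and $q\ge\tau$ makes the case check immediate), and the same encoding of $L$ as a ring family over at most $Nm$ pairs $(j,\ell)$, followed by strongly polynomial submodular minimization.

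The gap is Step 4. The counterexample is not just unfinished; the instance you sketch cannot be tuned into one. Write $c_j:=g_0c^G_j/w_j$ for the cost-only prices.

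\emph{A single type cannot work on $L$.} With one type, $L$ contains on each coordinate only $0$ and that type's own threshold $D_j/\Delta$. So a common level $h$ on three coordinates is not even available when $D_1>D_2=D_3$. At $p_j=D_j/\Delta$, channel $j$'s risk-adjusted price equals the fallback, so that channel is closed. Hence at every point of $L$ an undeterred type games on a channel with $p_j=0$ and earns zero catch credit. The outcome payoff then reduces to $a\chi(p)-\kappa\cdot p$, which is supermodular by your own Step 2.

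\emph{Your pair goes the wrong way even off the lattice.} Take $p=(h,h,0)$, $q=(h,0,h)$ with equal backups, and any choice of rents.
If $c_1+h\Delta\le c_2$, the type stays on channel $1$ at rate $h$ in all four profiles (with leader-favorable ties), so the inequality holds with equality.
If $c_1+h\Delta>c_2$, then $p$, $q$ and $p\wedge q=(h,0,0)$ each leave a backup open at rate $0$, giving credit $0$. Meanwhile $p\vee q=(h,h,h)$ either catches the type at rate $h$ or deters it. So $f(p\vee q)+f(p\wedge q)>f(p)+f(q)$.
Your heuristic is reversed: being caught at rate $h$ in the join is a gain over rate $0$, not a loss.

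\emph{What a witness needs.} You need a pair in which one profile already catches the type at a positive rate on a raised channel, and the other raises that channel further, so the join pushes the type onto a less-audited route. You also need extra types to create intermediate lattice levels. Single-channel auxiliary types supply these levels without interfering, because each contributes a function of one coordinate, which is modular.

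For example, take a main type with $\Delta=1$, fallback $\min\{\PhiI,u_+\}=1$, cost-only prices $(0.1,0.2,0.5)$ and value $1$. Add single-channel types with thresholds $0.6$ on channel $1$ and $0.2$, $0.5$ on channel $2$. Then $x=(0,0.5,0)$ and $y=(0.6,0.2,0)$ lie in $L$, and the main type behaves as follows:
under $x$ and under $x\wedge y=(0,0.2,0)$ it games on channel $1$ at rate $0$;
under $y$ it games on channel $2$ at rate $0.2$;
under $x\vee y=(0.6,0.5,0)$ it games on channel $3$ at rate $0$.
Hence $F^{\mathrm{out}}(x\vee y)+F^{\mathrm{out}}(x\wedge y)=F^{\mathrm{out}}(x)+F^{\mathrm{out}}(y)-0.2$.

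The paper supports this part only by an exhaustive search over random three-channel, six-type instances. An explicit witness like this would strengthen it, but the one you proposed cannot serve.
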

\begin{proof}
Let $\theta_{j,t}:=D_{j,t}/\Delta_t$ and $L_j:=\{0\}\cup\{\theta_{j,t}\le1\}$. For any $p$, rounding each $p_j$ down to $\underline p_j:=\max\{\ell\in L_j:\ell\le p_j\}$ preserves every deterrence inequality $p_j\ge\theta_{j,t}$ that held at $p$ and weakly lowers the cost, so an optimum lies on $L=\prod_jL_j$. Order $L$ coordinatewise and let $\chi_t(p)$ indicate that $t$ is deterred at $p$; $\chi_t$ is the indicator of an up-set of $L$, and a direct case check gives $\chi_t(p\vee p')+\chi_t(p\wedge p')\ge\chi_t(p)+\chi_t(p')$ (if both sides are deterred so are the join and the meet; if one is, so is the join). Hence $\sum_ta_t\chi_t$ is supermodular and the objective $F(p)=\sum_ta_t\chi_t(p)-\kappa\cdot p$ is supermodular on $L$. Encoding $p$ by the set $\{(j,\ell):\ell\in L_j,\ell\le p_j\}$ maps $L$ isomorphically onto a ring family over a ground set of size at most $Nm$, on which $-F$ is submodular; minimizing a submodular function over a ring family is strongly polynomial in the ground set size and the value-oracle cost \cite{schrijver2000sfm,iwata2001sfm,grotschel1988geometric}, and $F$ is evaluated in $O(Nm)$ time. The outcome-objective claim is shown by an instance in Appendix~\ref{app:computation}.
\end{proof}

The reason for this difference is that deterring a type is a covering condition, and covering conditions are complements: auditing channel $u$ is worth more once channel $v$ is also audited, because only then does a type with routes $\{u,v\}$ stop gaming. This complementarity is exactly supermodularity, and supermodular maximization over a lattice is tractable. What Densest $k$-Subgraph exploits is the cap, whose feasible set is not closed under coordinatewise maxima; indeed, on the reduction's own instances, the cost version reads $\max_S e(S)-\lambda|S|$, which is a minimum-cut problem. In other words, the hardness of Theorem~\ref{thm:hardness} is a statement about \emph{capacity-constrained} verification with overlapping routes, and not about overlapping routes alone.

\textit{Joint design.}
By Theorem~\ref{thm:rentsuff}, the classifier enters the audit problem through the rent profile. For finitely many types, this interface is finite: as $(w,\theta,p)$ vary, the population's action pattern (who abstains, who improves, and who games on which channel) changes only when the sign of one of polynomially many polynomial inequalities changes, and within a fixed pattern, the firm's value does not depend on $(w,\theta)$ at all.

\begin{thm}[Joint Stackelberg design in fixed dimension]\label{thm:fixeddim}
Under \textnormal{(Lin)} with linear audit costs, $N$ types in feature dimension $d$ (so $|\Gset|\le d$), and post-improvement qualification given by a bounded-degree semialgebraic rule, an exactly optimal joint design $(w^\star,\theta^\star,\pi^\star)$ over a box of positive weights is computable in time $(Nd)^{O(d)}$, polynomial in $N$ for fixed $d$. The algorithm enumerates the realizable sign conditions of a family of $O(Nd^2)$ bounded-degree polynomials in the $d+1+|\Gset|$ design variables \cite{basu2006algorithms}; on each cell the action pattern is constant, the objective is affine in $p$ and constant in $(w,\theta)$, and the cell optimum is a semialgebraic optimization in fixed dimension.
\end{thm}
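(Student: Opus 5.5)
The plan is a cell decomposition of the joint design space $(w,\theta,p)\in\mathbb R^{d+1+|\Gset|}$, with Theorem~\ref{thm:portfolio} and Theorem~\ref{thm:rentsuff} as the interface. The argument has three steps.

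\textbf{Step 1: the polynomial family.} Fix a type $t$. By Theorem~\ref{thm:portfolio} under \textnormal{(Lin)}, three quantities determine the type's action.

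First, $g_{0,t}=\theta-\inn{w}{x^0_t}$, which is affine in $(w,\theta)$.

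Second, the honest price $\PhiI(t)=g_{0,t}\min_{j\in\Iset}c^I_j/w_j$. After multiplying through by the positive weights, which are positive on the box, the comparisons among improvement coordinates become polynomial: $c^I_iw_j\lessgtr c^I_jw_i$. That gives $O(d^2)$ polynomials.

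Third, the risk-adjusted gaming prices $g_{0,t}c^G_j/w_j+p_j\Delta_t$. Cleared of denominators, each comparison has the form $g_{0,t}c^G_jw_i+p_j\Delta_tw_iw_j$ versus the same expression with $i$ and $j$ swapped. Each is a bounded-degree polynomial in the design variables. We also need the comparisons of each gaming price with the honest price, the comparisons of each with $u_+$, the sign of $g_{0,t}$, and the $O(1)$ polynomials of the semialgebraic qualification rule for $y^I_t$.

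Summed over types, this gives $O(Nd^2)$ polynomials of bounded degree in $k=d+1+|\Gset|\le 2d+1$ variables. I would add the box constraints, the constraints $p\in[0,1]^{\Gset}$, and the linear audit cost.

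\textbf{Step 2: the objective on a sign cell.} On any realizable sign condition (cell), each type's branch is fixed: accepted outright, abstain, improve, or game on a specific channel $j(t)$. The qualification label is fixed as well. Ties (zero signs) form their own cells, and I would resolve them leader-favorably, as in Proposition~\ref{prop:vertex}. Once the pattern is fixed, the firm's expected payoff per type depends only on the audit coin on $j(t)$. So the objective is $\sum_t(\text{const}_t+\text{lin}_t(p_{j(t)}))-\kappa\cdot p$, which is affine in $p$ and independent of $(w,\theta)$, exactly as Theorem~\ref{thm:rentsuff} predicts.

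The cell optimum is therefore the supremum of an affine function of $p$ over the projection of a semialgebraic cell. Since the cell may not be closed, I would take closures and use leader-favorable tie-breaking to argue the supremum is attained on some cell, possibly a lower-dimensional one. Each such problem is a linear objective over a semialgebraic set in fixed dimension. Quantifier elimination or a sample-point-per-connected-component argument \cite{basu2006algorithms} solves it in time $(Nd)^{O(d)}$.

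\textbf{Step 3: enumeration and complexity.} Basu--Pollack--Roy enumerates all realizable sign conditions of $s$ polynomials of degree $\le D$ in $k$ variables in time $s^{k+1}D^{O(k)}$. With $s=O(Nd^2)$ and $k=O(d)$, this is $(Nd)^{O(d)}$ cells, each with a sample point. The algorithm computes the pattern at each sample point, solves each cell problem, and returns the best design.

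\textbf{Main obstacle.} The hardest part is attainment and correctness at boundaries. The optimum of an affine function over open cells is typically approached at a face where some type is indifferent. There the leader-favorable convention must pick the pattern used in the limit, which requires the lower-dimensional cells to be enumerated with their zero-sign conditions. I would handle this by solving, on each sign condition including its equalities, the closure problem. I would then verify that the limit payoff is achieved by some sign condition on the boundary under leader-favorable tie-breaking. This uses the fact that, in the limit, branch values are continuous and only the discrete choice jumps.
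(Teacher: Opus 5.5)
Your proposal is correct and follows the paper's proof essentially step for step: the same $O(Nd^2)$ family of comparisons with denominators cleared, Basu--Pollack--Roy enumeration of realizable sign conditions, and an objective that is affine in $p$ and constant in $(w,\theta)$ on each cell. Your boundary handling is also the paper's: it optimizes over the weak closure $\bar S_\sigma$ of each sign condition, where the pattern stays weakly optimal so the leader-favorable value is at least $V_\sigma(p)$, and every design lies in some $\bar S_{\sigma'}$ that realizes its value, so the maximum over closures is exact and attained by compactness. (List the participation test $u_+w_i-g_{0,t}c^I_{i,t}$ explicitly alongside the gaming-versus-abstention tests.)
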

\begin{proof}[Proof sketch]
Each type's action is determined by the signs of $O(d^2)$ polynomials of degree at most $3$ in $z=(w,\theta,p)$ (strategic or inert, cheapest improvable coordinate, participation, gaming versus improvement, and channel switches). By the Basu-Pollack-Roy bound, this family of $O(Nd^2)$ polynomials realizes $(Nd)^{O(d)}$ sign conditions, which can be listed in that time. On each cell, the action pattern is fixed, the objective is affine in $p$ and independent of $(w,\theta)$, and the cell optimum is a linear objective over a semialgebraic set in fixed dimension, solvable exactly in $(Nd)^{O(d)}$ time (Appendix~\ref{app:computation}).
\end{proof}

This result makes the decomposition \eqref{eq:two-stage} operational: the outer maximization ranges over $(Nd)^{O(d)}$ combinatorial rent patterns, with the allocation of Proposition~\ref{prop:vertex} inside. We note that the algorithm is an enumeration certificate rather than a practical procedure. Together, the results of this section show that auditing is computationally cheap in the regime in which strategic classification is usually studied (few features and few verifiable channels), and that its difficulty grows with the heterogeneity of gaming technologies rather than with the size of the population.

\section{Learning to audit}\label{sec:learning}
The firm in the previous sections knows the rent profile. By Theorem~\ref{thm:rentsuff}, this is all it needs to know, which raises the question of how costly this knowledge is to acquire when the only source of information is the audit itself. In this section, we answer this question for the audit allocation problem: the classifier is fixed, the type distribution is unknown, and the firm learns by inspecting.

\textit{Protocol.}
There are $T$ rounds. In round $\tau$ the firm commits to $p_\tau\in[0,1]^m$ and pays $\kappa\cdot p_\tau$ with $\kappa_j\le1$; a type $t_\tau$ is drawn independently from an unknown law $q$ over rent coordinates with $a_t\in[0,1]$; the agent best-responds to $p_\tau$ as in Theorem~\ref{thm:portfolio}; and the firm observes its realized payoff for that round, $r_\tau=a_{t_\tau}\indic\{t_\tau\text{ deterred at }p_\tau\}-\kappa\cdot p_\tau$, and nothing else. This is the deterrence objective under bandit feedback: the firm never observes what a deterred agent would have done, and learns about a profile only by playing it. Regret is measured against the best fixed profile, $\mathrm{Reg}_T:=T\max_pV(p)-\E\sum_\tau V(p_\tau)$ with $V(p):=\E_q[a_t\indic\{t\text{ deterred at }p\}]-\kappa\cdot p$.

The closest model to ours is the censored semi-bandit of \cite{verma2019censored}, in which each arm has a single unknown threshold above which its loss vanishes; that problem is equivalent to a multiple-play bandit and admits logarithmic regret. In our setting, a gaming population is a distribution of thresholds on every channel, and a type with overlapping routes is deterred only when every one of its channels clears its threshold. As we show next, both features change the rates.

\begin{thm}[Learning to audit]\label{thm:learning}
\textup{(i)} \emph{Upper bound.} For any $q$ and $T\ge m^{m+2}$, the exponential-weights bandit algorithm over the grid $\{w,2w,\dots,1\}^m$ with $w=T^{-1/(m+2)}$, rounding every profile up to the grid, achieves $\mathrm{Reg}_T=O\big(m\,T^{(m+1)/(m+2)}\sqrt{\log T}\big)$, with no dependence on the number of types or on the form of $q$.

\textup{(ii)} \emph{Lower bound.} For $m=1$ and every $T\ge2^{12}$ there is a population of $\lceil T^{1/3}\rceil$ types, with values and cost at most $1$, on which every algorithm has $\mathrm{Reg}_T\ge c\,T^{2/3}$ for an absolute constant $c>0$. The exponent in (i) is tight for one channel.

\textup{(iii)} \emph{Smooth single-route rents.} If instead each channel carries a continuum of types whose binding channel is fixed and whose rent density $a_jm_j$ is nonincreasing (the regime of Theorem~\ref{thm:auditdesign}), and the firm observes its realized payoff channel by channel, then the per-channel objective is concave and one-dimensional bandit convex optimization gives $\mathrm{Reg}_T=\tilde O(m\sqrt T)$; $\Omega(\sqrt T)$ is unavoidable even for $m=1$.

\textup{(iv)} \emph{Overlapping routes.} Concavity in (iii) is a single-route property: with two channels and types whose rents on both channels are independent and uniform on $[0,\Delta]$, the expected deterrence payoff is $p_1p_2-\kappa(p_1+p_2)$, which is not concave.
\end{thm}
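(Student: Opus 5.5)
We treat the four parts separately, with one structural fact doing most of the work: since deterrence of every type is an up-set condition in $p$ (Theorem~\ref{thm:portfolio} and the covering form of deterrence in Section~\ref{sec:computation}), the map $p\mapsto \E_q[a_t\indic\{t\text{ deterred at }p\}]$ is coordinatewise nondecreasing.

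\emph{Part (i).} Rounding a profile up to the grid $\{w,2w,\dots,1\}^m$ therefore loses no deterrence value and raises cost by at most $\sum_j\kappa_jw\le mw$ per round. The discretization error against the best fixed profile is thus at most $mwT$. We then run EXP3 over the $K=w^{-m}$ grid arms, with rewards shifted into $[-m,1]$ and rescaled. This gives regret $O\big(m\sqrt{TK\log K}\big)=O\big(m\sqrt{T\,w^{-m}\,m\log(1/w)}\big)$. With $w=T^{-1/(m+2)}$, the term $\sqrt{T\,w^{-m}}$ equals $T^{(m+1)/(m+2)}$, matching $mwT=mT^{(m+1)/(m+2)}$. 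The $\sqrt{m\log(1/w)}$ factor is $O(\sqrt{\log T})$ up to constants absorbed in the stated bound. The condition $T\ge m^{m+2}$ ensures $w\le 1/m$, so the grid is nontrivial. Neither the number of types nor the form of $q$ enters, since EXP3 is agnostic to both.

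\emph{Part (ii).} This is the main obstacle. The plan is a needle-in-a-haystack construction in the style of posted-price lower bounds (Kleinberg--Leighton). We place $K=\lceil T^{1/3}\rceil$ thresholds $\theta_i=i/K$. Under the base law $q_0$, the values and the cost $\kappa$ are tuned so that $V$ is flat across the thresholds: each step up costs $\kappa/K$ and gains exactly $\kappa/K$ in deterred mass. The perturbed law $q_i$ adds mass $\epsilon\approx T^{-1/3}$ at threshold $\theta_i$, so $p=\theta_i$ beats every other grid point by about $\epsilon$. The key point is censoring: playing $p$ only reveals whether the type has threshold at most $p$. Hence the firm learns about the bump at $\theta_i$ only by playing in $[\theta_i,\theta_{i+1})$. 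A KL/chain-rule argument then bounds the information gained by $\epsilon^2\E_0[N_i]$, and averaging over $i$ forces regret $\gtrsim \min\{\epsilon T,\;K/\epsilon^2\cdot\epsilon\}\sim T^{2/3}$.

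The delicate part is the censoring itself. Feedback at $p$ also reflects the cumulative deterred mass below $p$, so plays above $\theta_i$ carry some information about the bump. I would try to neutralize this by pairing each added atom with a compensating removal at $\theta_{i+1}$. This makes the cumulative distribution differ only on $[\theta_i,\theta_{i+1})$, and the KL is then charged only to plays in that cell. Checking that this pairing still leaves $\theta_i$ optimal by margin $\Theta(\epsilon)$ is the step I would verify most carefully. Tightness for $m=1$ then follows, since (i) gives $T^{2/3}$ at $m=1$.

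\emph{Parts (iii) and (iv).} For (iii), Theorem~\ref{thm:auditdesign} shows each $V_j-k_j$ is concave on $[0,\bar p_j]$. Per-channel feedback decouples the problem into $m$ one-dimensional bandit concave problems with bounded noisy values. A known $\tilde O(\sqrt T)$ one-dimensional bandit convex optimization algorithm applied per channel and summed gives $\tilde O(m\sqrt T)$. The $\Omega(\sqrt T)$ lower bound follows by embedding the standard two-point concave (e.g.\ linear-reward) bandit lower bound into one channel, choosing a density that makes $V_1$ linear with slope $\pm\epsilon$. For (iv), the computation is direct. A type is deterred iff $D_1\le p_1\Delta$ and $D_2\le p_2\Delta$, which by independence has probability $p_1p_2$. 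The Hessian of $p_1p_2$ has eigenvalues $\pm1$, so the objective is indefinite and not concave.
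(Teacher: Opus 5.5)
Your parts (i), (ii), and (iv) follow the paper's proof, and the one substantive difference is the lower bound in (iii). In (ii), your paired perturbation is exactly the paper's ``move mass $\epsilon$ from type $i+1$ to type $i$.'' The step you flag is immediate: only the prefix mass on $[\theta_i,\theta_{i+1})$ changes, so $V^i(p)=V^0(p)+\epsilon\,\indic\{\theta_i\le p<\theta_{i+1}\}$, and $\theta_i$ is the unique optimum by margin exactly $\epsilon$.

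Two routine details. First, average only over the middle half of indices, so that the prefix mass $i/K$ stays in $[\tfrac14,\tfrac34]$ and the per-round KL is genuinely $O(\epsilon^2)$. Second, in (i) the hypothesis $T\ge m^{m+2}$ is not there to make the grid nontrivial. It keeps $K=\lceil T^{1/(m+2)}\rceil^m\le eT^{m/(m+2)}$ once $1/w$ is rounded to an integer.

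For the lower bound in (iii), the paper uses a linearly decreasing density. This makes the per-channel objective a strongly concave quadratic whose maximizer shifts with an unknown parameter, and the paper then imports Shamir's $\Omega(\sqrt T)$ bound for quadratics with unknown shift. You instead take a constant density $a_1m_1\equiv(\kappa\pm\epsilon)/\Delta$ on $[0,\Delta]$, so the net objective $V_1(p)-\kappa p=\pm\epsilon p$ is linear, and run a two-point argument. Note that $V_1$ itself has slope $\Delta a_1m_1\ge0$, so your $\pm\epsilon$ must refer to the net objective.

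Your argument is valid. At any play $p$ the two instantaneous regrets sum to $\epsilon$. For, e.g., $\kappa=\tfrac12$, the per-round KL between $\mathrm{Ber}((\kappa+\epsilon)p)$ and $\mathrm{Ber}((\kappa-\epsilon)p)$ is $O(\epsilon^2)$. Taking $\epsilon\asymp T^{-1/2}$ then gives $\Omega(\sqrt T)$.

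Your route is elementary and self-contained. The paper's buys a stronger conclusion: the $\sqrt T$ rate persists even for strictly decreasing densities with an interior, strongly concave optimum, so curvature does not rescue the learner.
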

\begin{proof}[Proof sketch]
(i) Deterrence is monotone in $p$, so rounding every profile up to the grid $G_w$ loses at most $mw$ per round in cost and nothing in deterrence; running Exp3 over the $|G_w|\le eT^{m/(m+2)}$ grid points with payoffs rescaled to $[0,1]$ gives regret $O(m\sqrt{T|G_w|\log|G_w|})$ against the best grid point, and $w=T^{-1/(m+2)}$ balances the two terms. (ii) Take $N=\lceil T^{1/3}\rceil$ equally likely types with rents $j/N$ and $\kappa=1$, so that $V(p)=\lfloor Np\rfloor/N-p\le0$ with equality exactly at the corners $j/N$; moving mass $\varepsilon=\Theta(\sqrt{N/T})$ from type $i+1$ to type $i$ makes corner $i$ the unique optimum by $\varepsilon$ and changes the observed Bernoulli at a single corner. Playing a non-corner point is dominated by the corner to its left, so the problem is an $N$-armed bandit with one $\varepsilon$-better arm, and the standard change-of-measure argument gives regret $\Omega(\sqrt{NT})=\Omega(T^{2/3})$. (iii) By Theorem~\ref{thm:auditdesign}, each channel's objective is concave and Lipschitz, so one-dimensional bandit convex optimization \cite{agarwal2011bandit} applies channel by channel; the lower bound is that of \cite{shamir2013complexity} for a concave quadratic with unknown shift. (iv) The deterrence probability is $p_1p_2$, whose Hessian has eigenvalues $\pm1$. Details are in Appendix~\ref{app:learning}.
\end{proof}

Intuitively, the theorem states that the shape of the rent distribution determines the learning rate. A single threshold per channel gives logarithmic regret \cite{verma2019censored}; a smooth, decreasing spread of thresholds gives $\sqrt T$, since the concavity of Theorem~\ref{thm:auditdesign} turns learning into one-dimensional convex bandit optimization; and atoms force $T^{2/3}$ already on a single channel, since each atom is a corner that the learner has to locate. Overlapping routes remove the concavity for the same reason that they create the hardness in Theorem~\ref{thm:hardness}: deterrence is a covering condition, and a covering condition is a product rather than a sum. We note that the upper bound in (i) is exponential in $m$, and its grid has $T^{m/(m+2)}$ points; since the offline cost problem is polynomial (Proposition~\ref{prop:softbudget}), the obstacle to a $\mathrm{poly}(m)$-time learner is exploration over an exponential grid rather than offline hardness, and whether the supermodular structure can be exploited online remains open. Finally, if the same agents recur across rounds rather than being drawn afresh, a long-lived population could under-game to mislead the learner; the reduction of \cite{haghtalab2022nonmyopic} from non-myopic to myopic agents through delayed feedback applies to (i) without change.

\section{Welfare analysis and domain separation}\label{sec:welfare}
We now ask what the audit instrument is worth. Let $\Fcost$ be the set of $(\mathrm{FP},\mathrm{TP})$ outcomes reachable by varying $(w,\theta)$ with $p\equiv0$, and $\Ftwo$ its analogue when the firm can also audit.

Containment $\Fcost\subseteq\Ftwo$ is trivial; the question is when it is strict, and what behavior creates the gap.

\begin{thm}[Perfect-audit benchmark]\label{thm:dominance}
Assume \textnormal{(Lin)} and perfect audits in the sense of no false positives. Fix a cost-only Pareto point at which a positive mass of strategic agents games on a binding channel. Let their rents be uniformly spread with density $\mu_j$ on $[0,D_{\max}]$, so that the total gaming mass is $\mu_g=\mu_jD_{\max}$, and suppose that $D_{\max}\le\Delta$. If each deterred gamer is worth one unit of correct-allocation welfare, auditing that channel up to $p^\star=D_{\max}/\Delta$ deters the entire gaming mass and raises welfare by exactly
\begin{equation}\label{eq:welfare-lift}
W'-W=\int_0^{p^\star}\mu_j\Delta\,dp
=\mu_g>0 .
\end{equation}
Moreover, under the regularity conditions \textnormal{(R1)}--\textnormal{(R4)} of Appendix~\ref{app:welfare} (local exposedness of the cost-only boundary and local openness of the design-to-rate map), $\Ftwo\setminus\Fcost$ contains a positive-measure region in the $(\mathrm{FP},\mathrm{TP})$ plane.
\end{thm}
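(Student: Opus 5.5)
The plan has two parts: an exact computation for the welfare lift, and a perturbation argument for the strict frontier gap.

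\emph{Welfare lift.} Fix the cost-only Pareto point $(w,\theta)$ and its binding channel $j$. Keep the classifier fixed and raise only $p_j$ from $0$ to $p^\star=D_{\max}/\Delta$, holding every other audit rate at zero. By Theorem~\ref{thm:rentsuff}, firm value depends on $(w,\theta)$ only through the rent profile, so nothing else in the population moves. By Lemma~\ref{lem:deterrence-frontier}, a gamer with rent $d$ on channel $j$ is deterred exactly when $p_j\Delta\ge d$.

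Raising $p_j$ makes channel $j$ more expensive and never lowers the price of any other channel. A type whose binding channel is $j$ either stays on $j$ or stops gaming; it could only switch if another channel beat honest recourse, but such a channel would already have been its binding channel. So the fixed-channel condition of Theorem~\ref{thm:auditdesign} holds.

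The deterred mass at intensity $p$ is $\int_0^{p\Delta}\mu_j\,dd=\mu_j p\Delta$. Differentiating in $p$ and integrating from $0$ to $p^\star$ gives $\int_0^{p^\star}\mu_j\Delta\,dp=\mu_jD_{\max}=\mu_g$. The hypothesis $D_{\max}\le\Delta$ guarantees $p^\star\le1$, so this profile is feasible.

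Each deterred gamer switches to honest improvement, since $\PhiI\le u_+$ in the rent definition. By assumption this is worth one unit, and with perfect audits no honest agent is flagged, so there is no offsetting loss. This gives \eqref{eq:welfare-lift}. If $D$ is defined through the abstention branch, the deterred agent is rejected rather than accepted, but it is still a correct decision (an unqualified gamer rejected), so it again scores one unit.

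\emph{Strict frontier gap.} I would argue by contradiction. The audited design shifts the $(\mathrm{FP},\mathrm{TP})$ point: gamers who were accepted as false positives become correct decisions. So the new point moves strictly toward the ideal corner, by an amount proportional to $\mu_g$.

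\textnormal{(R1)} (local exposedness) supplies a supporting line at the cost-only point that strictly separates $\Fcost$ near that point. The displaced point lies strictly beyond this line, so it is not in $\Fcost$ near the original point. To rule out its appearing in $\Fcost$ elsewhere, I would use exposedness again: a supporting functional is maximized over $\Fcost$ at the original point, and the new point has a strictly larger value of that functional.

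To upgrade from a single point to a positive-measure region, apply local openness of the design-to-rate map (\textnormal{(R2)}--\textnormal{(R4)}) at $(w,\theta,p^\star e_j)$. This yields a neighborhood in the $(\mathrm{FP},\mathrm{TP})$ plane contained in $\Ftwo$. For a small enough radius, that neighborhood still lies strictly beyond the supporting line, so it misses $\Fcost$.

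\emph{Main obstacle.} I expect the hard part to be this last step. One must check that the regularity conditions really give openness at the audited design and not only at cost-only designs, and that the separation margin, of order $\mu_g$, dominates the radius of that open neighborhood. I would handle this by choosing the radius explicitly as a fraction of the margin.
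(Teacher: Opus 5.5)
Your route matches the paper's: you integrate the moving frontier $p\Delta$ against the uniform rent density to get $\mu_jD_{\max}=\mu_g$, then combine a supporting line at the cost-only point with local openness. The lift computation itself is fine, but the fixed-channel step you add is false. ``Beating honest recourse'' means having positive rent, not having the lowest price. So a type bound to $j$ can have a second channel $i$ with $0<D_{i,t}<D_{j,t}$. With $p_i=0$, once $p_j\Delta>D_{j,t}-D_{i,t}$ it switches to $i$ and keeps gaming.

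For instance, take $g_0=w_i=w_j=1$, $c^G_j=0.1$, $c^G_i=0.3$, and $\PhiI=0.6\le u_+$. At $p_j\Delta=D_j=0.5$ the agent fakes on $i$ at price $0.3<0.6$. Auditing $j$ alone up to $p^\star$ therefore does not deter it, and the lift falls short of $\mu_g$.

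The paper's proof does not derive no-switching. It integrates as if the $\mu_j$-mass had no other profitable route, i.e.\ it reads the hypothesis in the fixed-channel regime of Theorem~\ref{thm:auditdesign}. You should state that as an assumption, say $D_{i,t}\le0$ for $i\ne j$ on the gaming mass. Raising the other rates too would restore deterrence, but it can also deter gamers bound to those channels, making the lift at least, not exactly, $\mu_g$. Separately, the appeal to Theorem~\ref{thm:rentsuff} is beside the point: with $(w,\theta)$ fixed, nothing else moves simply because raising $p_j$ raises only channel $j$'s price.

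For the frontier gap, the paper's conditions are as follows: (R1) is continuity of the design-to-rate map $R$, and (R2) is local exposedness, a line $\ell$ through $z_0$ with $\Fcost\subseteq H^-$ near $z_0$. (R3) says the audited point $z_1$ lies in $\mathrm{int}\,H^+$, and (R4) is local openness of $R$ at the audited design $(w_0,\theta_0,\pi^\star)$ itself. So openness at the audited design is assumed outright, and the radius issue is settled by shrinking the image neighborhood into $\mathrm{int}\,H^+$.

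Two of your derivations do not follow without these conditions. First, moving toward the ideal corner does not put $z_1$ strictly beyond a supporting line. If the normal of $\ell$ has zero weight on the coordinate that strictly improves (FP falls, TP unchanged, $\ell$ horizontal), then $z_1\in\ell$; ruling this out is exactly what (R3) posits. Second, ``a supporting functional is maximized over $\Fcost$ at the original point'' is a global claim, whereas exposedness is only local. Since $z_1$ is displaced by order $\mu_g$, not infinitesimally, a nonconvex $\Fcost$ could contain it away from $z_0$. The paper's argument claims only disjointness from the local cost-only frontier. To get your global exclusion you need a globally supporting line at $z_0$ (e.g.\ $\Fcost$ convex). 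Alternatively, keep the audited point inside the exposedness neighborhood, e.g.\ by using a small audit level $p\in(0,p^\star]$, which displaces $z_0$ only by order $\mu_j\Delta p$, together with openness at that design.
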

\begin{proof}[Proof sketch]
Raising $p$ by $dp$ moves the deterrence frontier by $\Delta\,dp$ in rent space and flips mass $\mu_j\Delta\,dp$, each unit of which is worth one; integrating to $p^\star$ gives $\mu_g$. The positive-measure statement follows from local exposedness of $\Fcost$ and local openness of the design-to-rate map (Appendix~\ref{app:welfare}).
\end{proof}

The lift in \eqref{eq:welfare-lift} should be viewed as a benchmark: it is the most the audit instrument can buy when verification is free and never errs. Audit costs and false positives erode it, and as we show in Section~\ref{sec:risk-tuning}, welfare then becomes single-peaked in the audit intensity.

\textit{The price of cost-only design.}
How much welfare does a designer forgo by ignoring the audit instrument? Let $W^{\mathrm{two}}$ and $W^{\mathrm{cost}}$ be the optimal Stackelberg values of the problems with and without audit, net of audit costs and false-flag losses, and let the \emph{informational value of the gameable coordinates} be $V_{\Gset}:=W^{\mathrm{ub}}-W^{-\Gset}$, where $W^{\mathrm{ub}}$ is the pointwise utopia value (each agent assigned its best feasible outcome) and $W^{-\Gset}$ the honest-response value of the best classifier that zeroes every gameable coordinate.
For tightness, use the scalar family of \S\ref{sec:risk-tuning}: qualified mass $q\le\tfrac12$ sits at the cap of a bounded purely fakeable feature $x\in[0,1]$, unqualified agents sit at the bottom, gaming costs $c_G<u_+$, and unqualified agents have no honest recourse.

\begin{thm}[Price of cost-only design]\label{thm:price}
(i) In general, $0\le W^{\mathrm{two}}-W^{\mathrm{cost}}\le V_{\Gset}$.
(ii) The bound is tight and the ratio unbounded. In the scalar family above, every deterministic cost-only design has value $\max(0,2q-1)=0$, while auditing at the deterrence level $p_g=(u_+-c_G)/\Delta$ yields $W^{\mathrm{two}}\ge q(1-\alpha p_g)-c_ap_g$.

With a perfect audit, the gap equals $q=V_{\Gset}$, the entire attainable welfare.
\end{thm}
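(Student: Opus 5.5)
The plan is to prove (i) by a sandwich argument and (ii) by direct computation in the scalar family, which also yields tightness.

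\textbf{Part (i), lower bound.} The lower bound $W^{\mathrm{two}}\ge W^{\mathrm{cost}}$ holds because the audit-free profile $p\equiv0$ is feasible in the two-instrument problem and incurs zero audit cost and zero false-flag loss. The two-instrument optimum therefore weakly dominates any cost-only design.

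\textbf{Part (i), upper bound.} I will show the two inequalities $W^{\mathrm{two}}\le W^{\mathrm{ub}}$ and $W^{\mathrm{cost}}\ge W^{-\Gset}$; subtracting them gives the claim.
\begin{itemize}
\item The first is immediate. Audit costs $K(p)\ge0$ and false-flag harms are nonnegative, and each agent's realized outcome scores at most its best feasible outcome. Integrating over $\rho$ gives the inequality.
\item For the second, I take the best classifier $w$ with $w_j=0$ for all $j\in\Gset$ and use it as a cost-only design. Under (W$>$0), a zero weight on a coordinate means moving it raises no score. By Theorem~\ref{thm:portfolio}, no gaming support can then help, since every channel has infinite price $g_0c^G_j/w_j$. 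The agent's best response is therefore honest (improve or abstain), and its cost-only value coincides with the honest-response value defining $W^{-\Gset}$.
\end{itemize}
The step needing care is the boundary $w_j=0$, where (W$>$0) fails formally. I will handle it by noting that a zero-weight coordinate simply drops out of the covering constraint in \eqref{eq:PhiM}, so the minimizing gaming support is empty.

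\textbf{Part (ii), cost-only value.} In the scalar family, qualified mass $q$ sits at $x=1$ and unqualified mass $1-q$ at $x=0$ with no honest recourse. The feature is bounded, so the firm cannot set the bar above the cap. Any deterministic threshold design falls into one of two cases.
\begin{itemize}
\item If $\theta\le0$ (or $w\le0$), everyone is accepted, giving value $q-(1-q)=2q-1$ in the signed accounting that yields $\max(0,2q-1)$.
\item If $0<\theta\le w$, unqualified agents face deficit $\theta$ with gaming cost at most $c_G<u_+$ under normalization of the gap. They all fake to the cap and pool with the qualified, again giving $2q-1$.
\end{itemize}
Rejecting everyone gives $0$. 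Since $q\le\tfrac12$, the maximum is $0$. The key point is that boundedness rules out Spence separation, i.e.\ a bar above what fakers can reach but which the qualified clear.

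\textbf{Part (ii), two-instrument value.} I set the bar at the cap and audit at $p_g=(u_+-c_G)/\Delta$. The unqualified agents' rent is $D=u_+-c_G$, since their fallback is abstention. By Lemma~\ref{lem:deterrence-frontier} they are deterred, so gaming stops. The qualified are accepted except for false flags at rate $\alpha p_g$, and the audit costs $c_ap_g$. This gives $W^{\mathrm{two}}\ge q(1-\alpha p_g)-c_ap_g$.

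\textbf{Tightness.} With a perfect audit ($\alpha=0$, free inspection), the gap equals $q$. I then compute $W^{\mathrm{ub}}=q$ and $W^{-\Gset}=0$: the only feature is gameable, so zeroing it makes the classifier constant, with value $\max(0,2q-1)=0$. Hence $V_\Gset=q$ and the bound is attained. The ratio $W^{\mathrm{two}}/W^{\mathrm{cost}}$ is then unbounded because the denominator is $0$.

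\textbf{Main obstacle.} I expect the main obstacle to be fixing the welfare normalization, so that the cost-only accounting really gives $\max(0,2q-1)$ and every threshold case is covered consistently. I would first settle that scoring convention and then run the case check above.
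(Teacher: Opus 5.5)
Your proposal follows the paper's proof essentially step for step: the sandwich $W^{\mathrm{cost}}\ge W^{-\Gset}$ and $W^{\mathrm{two}}\le W^{\mathrm{ub}}$ for (i); the threshold case split in the scalar family under the signed $\mathrm{TP}-\mathrm{FP}$ accounting; deterrence at $p_g$ via the abstention-fallback rent $u_+-c_G$; and $W^{\mathrm{ub}}=q$, $W^{-\Gset}=0$ for tightness.

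There is one slip in the cost-only case check. For $w\le 0$ it is not true that everyone is accepted. With $w<0$ the acceptance set is the down-set $\{x\le\theta/w\}$, which can accept only the unqualified, for example $w=-1$, $\theta=-\tfrac12$. With $w=0$ and $\theta>0$ nobody is accepted. So argue as the paper does: when $w\le0$, gaming ($\gamma\ge0$) cannot raise the score, and the acceptance set contains $x=0$ whenever it contains $x=1$. The value is therefore at most $\max(0,2q-1)$.

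Relatedly, the cap does not stop the firm from choosing $w>0$ with $\theta/w>1$. That design simply rejects everyone, and it is the source of your value $0$.
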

\begin{proof}[Proof sketch]
(i) $p\equiv0$ is feasible with audit, and every agent's contribution is at most its best feasible outcome, so $W^{\mathrm{cost}}\le W^{\mathrm{two}}\le W^{\mathrm{ub}}$; zeroing the gameable coordinates makes gaming useless, so $W^{\mathrm{cost}}\ge W^{-\Gset}$. (ii) In the scalar family, any cost-only threshold $\tau\le1$ is reached by every unqualified agent at cost $c_G\tau<u_+$, and any $\tau>1$ is reached by nobody, so the cost-only value is $\max(0,2q-1)$; auditing at $\tau=1$ with $p_g$ deters all gaming and yields the stated value. Details and the randomization claim are in Appendix~\ref{app:welfare}.
\end{proof}

We note that boundedness of the signal in (ii) is essential. With an unbounded signal, the cost-only firm can raise the threshold one gaming budget above the qualified position and restore Spence separation without any audit; moreover, even randomized acceptance rules that deter gaming forgo at least $q(1-c_G/u_+)$ of the gain from auditing (Remark~\ref{rmk:headroom}, Appendix~\ref{app:welfare}). In other words, the theorem does not say that every fakeable feature deserves to be audited: if the fakeable information is weak, the gain is small, and if an unaudited threshold can separate types through an unbounded signal, the gain collapses. Audit recovers information, rather than merely punishing gaming, for a signal that is predictive, cheap to imitate, and bounded enough that the firm cannot screen by raising the bar.

\textit{Domain separation.}
Our next result concerns identification, i.e., which cross-domain behaviors can and cannot be explained by a cost-only model. A domain is a population with a common classifier, cost structure, and acceptance utility $(w,\theta,\Iset,\Gset,C_I,C_G,u_+)$, and its own audit profile $(p,u_-)$. Holding the shared structure fixed fixes the gaming rent $D$.

\begin{thm}[Audit-induced domain separation]\label{thm:domainsep}
Consider two domains $A$ and $B$ with the same costs, classifier, causal structure, and acceptance utility, hence the same rent $D>0$ on the binding gaming channel. Assume honest improvement is feasible and individually rational, $u_+-\PhiI\ge0$. Suppose the observed behavior differs: in $A$ the agent games, while in $B$ the agent improves.
\begin{enumerate}
\item No cost-only model with these primitives can rationalize the pair: with $p\equiv0$, every domain with the same cost and causal primitives makes the same prediction.
\item Our model rationalizes the pair iff $p_A\Delta_A<D\le p_B\Delta_B$. If $p_A$ and $p_B$ are observed, the risk primitives are sharply partially identified by $\Delta_A<D/p_A$, $\Delta_B\ge D/p_B$. If $p_A=p_B$, the improving domain has a strictly worse caught state: $u_{-,B}<u_{-,A}$.
\end{enumerate}
\end{thm}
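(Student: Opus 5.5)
The proof rests on the deterrence frontier of Lemma~\ref{lem:deterrence-frontier} and the portfolio characterization of Theorem~\ref{thm:portfolio}. Since both domains share $(w,\theta,\Iset,\Gset,C_I,C_G,u_+)$, the type's deficit $g_0$ is common to $A$ and $B$. So are the honest price $\PhiI$ and every cost-only channel price $g_0c^G_j/w_j$. Because $u_+-\PhiI\ge0$, the fallback is improvement, and by \eqref{eq:rent} the rent on the binding channel is $D=\PhiI-g_0c^G_j/w_j$, the same in both domains. Both parts of the statement then reduce to comparing $p\Delta$ with $D$.

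For part~1, I will set $p\equiv0$ in both domains. The agent's problem \eqref{eq:PhiM-decomp} then reads $\min_S\Phi(S)$, and every $\Phi(S)$ is a function of the shared primitives only. The maximizing branch in Theorem~\ref{thm:portfolio} is therefore identical in $A$ and $B$, and so is the tie-breaking convention, since any fixed rule gives the same answer on the same inputs. Moreover, $D>0$ means the cheapest fake strictly beats improvement. Hence the cost-only model predicts gaming in both domains, and it cannot predict improvement in $B$. More generally, any cost-only model with these primitives predicts the same action in both domains, whatever that action is.

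For part~2, I will apply Lemma~\ref{lem:deterrence-frontier} to each domain.
\begin{itemize}
\item The agent games in $A$ iff the binding channel is not deterred, that is, iff $p_A\Delta_A<D$. Since the binding channel has the largest rent, all other channels then carry smaller rents. The minimizer in \eqref{eq:PhiM-decomp} is then a nonempty support whose value beats $\PhiI$, so the agent strictly prefers gaming.
\item The agent improves in $B$ iff no channel is profitable, that is, iff $p_{B,j}\Delta_B\ge D_j$ for every $j$.
\end{itemize}

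The main obstacle is this second condition. The statement reduces it to the single inequality $D\le p_B\Delta_B$ on the binding channel, and that reduction is automatic only if the other channels are also deterred. I will take the statement's phrase ``the binding gaming channel'' to mean a single relevant channel, or equivalently that $p_B$ is uniform. Under a uniform rate, $D_j\le D$ for all $j$ gives $p_B\Delta_B\ge D\ge D_j$, so every channel is deterred. At equality $D=p_B\Delta_B$ the agent is indifferent, and I will invoke the weak-deterrence convention in Lemma~\ref{lem:deterrence-frontier} to resolve the tie in favor of improvement.

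Sufficiency is then the converse of the two bullets, and I will exhibit explicit values to show the pair is attainable. For example, $p_A=0$ gives gaming in $A$, and $p_B=1$ with $\Delta_B=D$ gives improvement in $B$, using $\Delta_{\max}\ge D$.

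For sharp partial identification with $p_A,p_B>0$ observed, I will divide each inequality by the observed rate to get $\Delta_A<D/p_A$ and $\Delta_B\ge D/p_B$. Every pair $(\Delta_A,\Delta_B)$ in this product set rationalizes the observed behavior, and no pair outside it does, which is what sharpness requires.

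Finally, suppose $p_A=p_B=p$. The two inequalities give $p\Delta_A<D\le p\Delta_B$. This forces $p>0$, because $p=0$ would give $D\le0$, contradicting $D>0$. Dividing by $p$ yields $\Delta_A<\Delta_B$. Since $\Delta=u_+-u_-$ and $u_+$ is shared, this is equivalent to $u_{-,B}<u_{-,A}$, which is the claimed strictly worse caught state in the improving domain.
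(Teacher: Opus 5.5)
Your proposal is correct and follows the paper's own argument. Part 1 holds because $p\equiv0$ makes the deterrence inequality of Lemma~\ref{lem:deterrence-frontier} fail identically in both domains, and part 2 follows by applying that lemma domain by domain and dividing by the observed rates; your extra care is a genuine tightening: on non-binding channels, which the paper handles only implicitly by reading the scalar $p_A,p_B$ as the binding channel's rate (you need the same reading in $A$ for the ``only if'' half of your first bullet), and on $p>0$ before dividing. One small slip in your optional attainability example: $\Delta_B=D$ is not admissible, since $u_-\le0$ and positive gaming costs give $\Delta\ge u_+>D$, so take $p_B=1$ with any admissible $\Delta_B$ instead.
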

\begin{proof}
When $p\equiv0$, the condition $p\Delta\ge D$ of Lemma~\ref{lem:deterrence-frontier} fails in every domain, so the binding type behaves identically in $A$ and $B$, which proves (1). For (2), Lemma~\ref{lem:deterrence-frontier} gives that $A$ games iff $p_A\Delta_A<D$ and $B$ improves iff $p_B\Delta_B\ge D$. Every pair of differentials satisfying these inequalities reproduces the observed behavior and every pair violating them contradicts one observation, which is the sharp set. If $p_A=p_B=p$, then $\Delta_A<D/p\le\Delta_B$, and since $u_+$ is common, $u_{-,B}<u_{-,A}$.
\end{proof}

Intuitively, a loan office and an admissions office can induce different behavior without any difference in the cost of effort or in the causal meaning of a feature, but only in what happens when a fake is caught. We note that the rent $D$ is treated as known here; the maintained assumption that the domains share their costs, classifier, and causal structure is an identifying restriction that the two behaviors cannot themselves test, and without it, a split could also reflect unmodeled cost differences. Figure~\ref{fig:domain-audit-field} illustrates the split on a synthetic population; its construction is in Appendix~\ref{app:figure2}.

\section{Detection versus penalty}\label{sec:risk-tuning}
Verification has false positives: honest applicants can be wrongly flagged, and the penalty can fall on them. We show that welfare is then single-peaked in the audit intensity, and that the firm and the planner, while agreeing on how much deterrence to buy, disagree on the mix of detection and penalty that buys it.

\textit{False positives.}
In the scalar specialization, a qualified mass $q$ clears honestly and an unqualified mass $1-q$ can game at cost $c_G$. Audit intensity $p$ catches a gamer with probability $p$ and wrongly flags an honest applicant with probability $\alpha p$, $\alpha\in(0,1)$, so a qualified applicant's utility is $EU_{\mathrm{hon}}(p)=u_+-\alpha p\Delta$. Let $p_g:=\frac{u_+-c_G}{\Delta}$ and $p_q:=\frac{u_+}{\alpha\Delta}$. $p_g$ is the audit level that deters gaming; $p_q$ is the level at which qualified applicants are chilled out. Since $\alpha<1$, $p_g<p_q$.

\begin{prop}[Over-penalization trap]\label{prop:overpen}
Assume the deterrence jump dominates the cost-only corner, $1-q\alpha p_g>q$. With audit false positives, firm accuracy is single-peaked in $p$ and uniquely maximized at the deterrence-binding level $p^\star=p_g$: below $p_g$ gamers enter; on $[p_g,p_q)$ additional audit only wrongly flags honest applicants; beyond $p_q$ the qualified abstain and accuracy falls to its floor. The same holds for utilitarian welfare subtracting false-accusation harm, under the analogous corner condition.
\end{prop}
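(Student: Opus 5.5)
The plan is to write firm accuracy $A(p)$ as an explicit piecewise function of $p$ on the three regimes cut out by $p_g<p_q$, and then compare the pieces. Agent behaviour in each regime comes from Lemma~\ref{lem:deterrence-frontier} specialized to the scalar family.

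An unqualified agent has no honest recourse, so its fallback is abstention. Its rent is therefore $u_+-c_G$, and it games iff $p\Delta<u_+-c_G$, i.e.\ iff $p<p_g$.

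A qualified agent clears honestly. It participates iff $EU_{\mathrm{hon}}(p)=u_+-\alpha p\Delta\ge0$, i.e.\ iff $p\le p_q$ (I will treat the tie at $p_q$ by the convention in the statement, placing $p_q$ in the abstain region).

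Scoring outcomes with the accuracy convention of \eqref{eq:firm-objective} gives the following.
\begin{itemize}
\item On $[0,p_g)$: every gamer enters. Each is caught with probability $p$, and a caught gamer scores $1$. Each honest agent is wrongly flagged with probability $\alpha p$ and scores $0$ then. So the accuracy is $q(1-\alpha p)+(1-q)p$ under the outcome objective, or $q(1-\alpha p)$ if caught gamers earn nothing. In either reading, its supremum as $p\uparrow p_g$ is at most $q(1-\alpha p_g)+(1-q)p_g$. More usefully, it never exceeds the value at $p_g$, because $(1-q)p<1-q$.
\item On $[p_g,p_q)$: unqualified agents abstain and are correctly rejected, contributing $1-q$. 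Qualified agents contribute $q(1-\alpha p)$. So $A(p)=1-q\alpha p$, which is strictly decreasing.
\item On $[p_q,1]$: everyone abstains, so $A(p)=1-q$, the floor.
\end{itemize}

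\textbf{Single-peakedness.} The plan has four checks.
\begin{enumerate}
\item On $[0,p_g)$, $A$ is nondecreasing in $p$. Under the outcome reading its slope is $(1-q)-q\alpha$; I will check the sign of this, and otherwise use the deterrence reading, where $A$ is decreasing but bounded by $q$ at $p=0$.
\item At $p_g$ there is an upward jump to $1-q\alpha p_g$. The hypothesis $1-q\alpha p_g>q$ is exactly what makes this value beat the cost-only corner $A(0)=q$ and every value on $[0,p_g)$.
\item $A$ decreases on $[p_g,p_q)$.
\item $A$ drops to the floor $1-q$ at $p_q$. This floor lies below $1-q\alpha p_g$ because $\alpha p_g>0$ only scales $q$ by less than one.
\end{enumerate}
Together these show that $p_g$ is the unique maximizer. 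They also show that $A$ rises (or is dominated) up to $p_g$ and is nonincreasing after it, which is the single-peaked shape in the sense of a unique peak with monotone decline to the right.

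\textbf{Main obstacle.} The delicate step is the left region. Whether accuracy there is monotone depends on the objective reading and on the sign of $(1-q)-q\alpha$, so single-peakedness in the strict sense may need the phrase "dominated by the peak" rather than literal monotonicity. I would state it as $A(p)<A(p_g)$ for all $p\ne p_g$, which is what the corner condition delivers.

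\textbf{Welfare version.} Repeat the computation subtracting the false-accusation harm $\kappa\varphi(\Delta)\,q\alpha p$ (and adding agent surplus if utilitarian). This only adds a term that is decreasing in $p$ on every piece. The right-hand monotonicity is therefore preserved, and the analogous corner condition, with the harm included at $p_g$, replaces $1-q\alpha p_g>q$.
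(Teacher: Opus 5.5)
Your plan follows the paper's proof essentially step for step: the same piecewise accuracy $A(p)$, the jump at $p_g$, slope $-q\alpha$ on $[p_g,p_q)$, floor $1-q$, and the corner hypothesis to beat $A(0)=q$; your caveat that ``single-peaked'' holds only in the unique-maximizer sense when $1-q<q\alpha$ is accurate, and that sense is all the paper's argument delivers too. The step to tighten is the left region: the model fixes the scoring (a caught gamer is a correct rejection, so there is no second reading to switch to), $A$ is affine on $[0,p_g)$ so its supremum is $\max\{A(0),A(p_g^-)\}$, and $A(p_g)$ beats these two via the hypothesis and via $(1-q)p_g<1-q$ respectively, whereas $(1-q)p<1-q$ alone does not suffice because the qualified term $q(1-\alpha p)$ is larger for $p<p_g$; similarly, in the welfare version the harm vanishes once the qualified abstain, so welfare can jump up at $p_q$ and the ``analogous corner condition'' must also make $W(p_g)$ beat the floor $1-q$, which the paper likewise leaves implicit.
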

\begin{proof}
An unqualified agent games iff $u_+-p\Delta-c_G>0$, i.e., $p<p_g$; a qualified agent applies iff $u_+-\alpha p\Delta\ge0$, i.e., $p\le p_q$. Accuracy is therefore $A(p)=q(1-\alpha p)\indic\{p\le p_q\}+(1-q)[p\indic\{p<p_g\}+\indic\{p\ge p_g\}]$: it rises with slope $(1-q)-q\alpha$ or falls before $p_g$, jumps by $(1-q)(1-p_g)$ at $p_g$, falls with slope $-q\alpha$ on $[p_g,p_q)$, and drops to $1-q$ at $p_q$. The unique maximizer is $p_g$ whenever $A(p_g)=1-q\alpha p_g>q=A(0)$. A convex false-accusation harm only steepens the decline after $p_g$.
\end{proof}

Intuitively, above $p_g$ the marginal audit has no gamer left to deter, and only honest applicants left to harm; Figure~\ref{fig:overpenalization} in Appendix~\ref{app:risk} plots the resulting diagnostic. As we show next, however, agreement on the level of audit does not imply agreement on how to deliver it.

\textit{The detection-penalty mix.}
Recall that the agents' behavior depends only on the product $p\Delta$, so that every pair with $p\Delta=D$ deters the marginal gamer. These pairs are not welfare-equivalent, however, since detection costs resources while a harsher penalty imposes a convex harm on the wrongly accused. Let the detection cost be $c_ap$ and the false-accusation harm be $\kappa\varphi(\Delta)$, with $\varphi$ increasing and strictly convex, and restrict attention to deterrence-binding designs with $p\Delta=D$ and $\Delta\in[D,\Delta_{\max}]$, where $\Delta_{\max}$ is the penalty cap of Section~\ref{sec:model}. (Without the cap and without false-accusation harm, nothing would stop the firm from choosing $p\to0$ and $\Delta\to\infty$.)

Among such designs the firm objective $U_f(\Delta)=1-(q\alpha+c_a)D/\Delta$ is strictly increasing in $\Delta$, so a private firm chooses maximal admissible punishment and minimal detection, whereas utilitarian welfare has an interior optimum: for $\varphi(\Delta)=\Delta^2/2$, $\Delta^\star=\sqrt{2(q\alpha+c_a)/(\kappa q\alpha)}$, independent of the gaming rent $D$, so higher rents should raise detection $p^\star=D/\Delta^\star$, not severity (Proposition~\ref{prop:composition}, Appendix~\ref{app:risk}).

In other words, a firm that does not bear the harm of false accusations chooses the Becker corner \cite{becker1968crime} of severe penalties and little detection, whereas the planner, whose interior penalty is pinned down by the honest applicants flagged at rate $\alpha p$, audits more and penalizes less \cite{polinsky2000enforcement}. Both mixes deter the marginal gamer through the same product $p\Delta$, but they distribute the errors differently; where the falsely accused cannot be compensated, severity is therefore not a free substitute for detection. Figure~\ref{fig:risk-landscape} in Appendix~\ref{app:risk} plots the wedge.

\textit{Robustness.}
Our analysis has relied on three idealizations: audits that never flag the honest, magnitude-independent detection, and purely fakeable features. Figure~\ref{fig:robustness} relaxes each of these in a heterogeneous Stackelberg simulation and reports the \emph{audit dividend}, i.e., the welfare gap between the optimal design with audit and the best cost-only design. In these experiments, the dividend remains nonnegative: with audit false positives (panel (a)), the firm's value is single-peaked with an interior optimum at the deterrence level of Proposition~\ref{prop:overpen}, now in the full design rather than the scalar family, and the firm's optimum lies above the planner's because the firm does not bear the false-accusation harm; with magnitude-dependent detection $p_k(\gamma)=p\,(1-e^{-(\beta\gamma)^2})$ (panel (b)), fakes diversify from singleton to multi-channel as predicted by Proposition~\ref{prop:magnitude}, and the dividend vanishes only when even large fakes evade detection; and with causal leakage $\rho$ (panel (c)), pure fakeability is the worst case, and the dividend shrinks toward zero as gaming increasingly turns gamers into qualified agents. The construction and parameters are in Appendix~\ref{app:experiments}.

\begin{figure}[t]
\centering
\includegraphics[width=\linewidth]{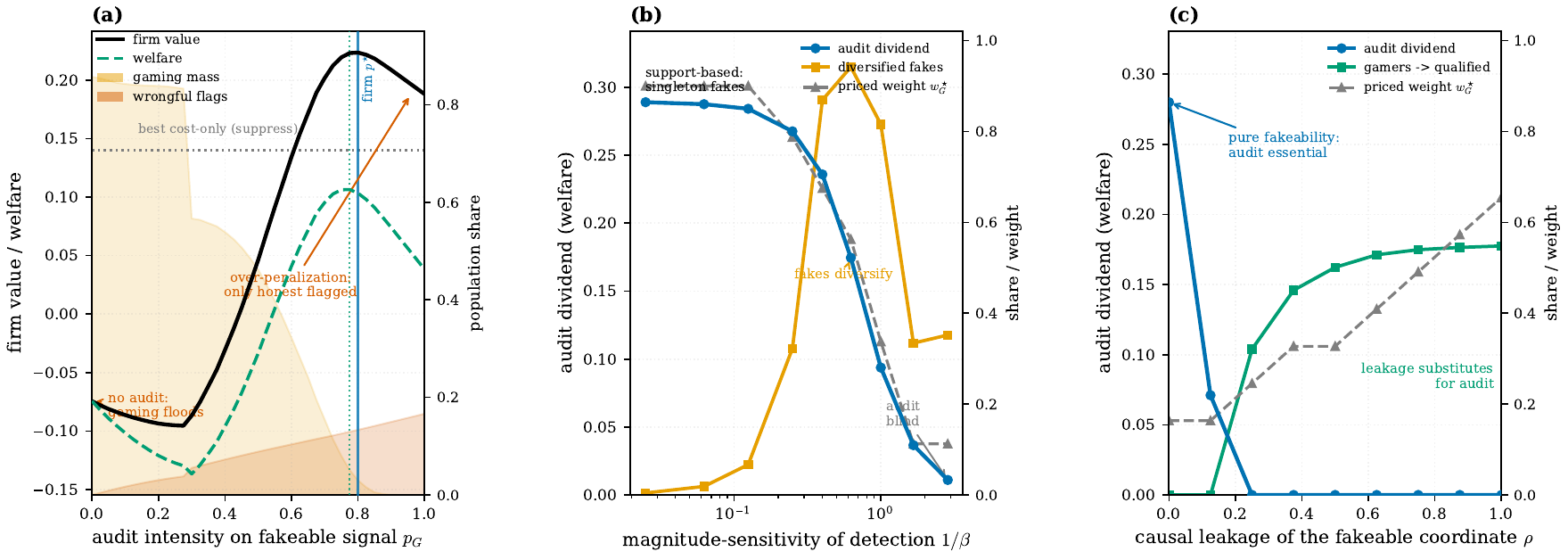}
\caption{Robustness of the price-and-audit result to the three idealizations the model isolates, measured by the audit dividend (optimal welfare with audit minus the best cost-only design) in a fixed-seed heterogeneous Stackelberg simulation. (a) Audits that wrongly flag honest applicants at rate $\alpha p$: firm value and welfare are single-peaked in audit intensity; the dotted line is the best cost-only design, the shaded areas are gaming mass and wrongful flags, and the firm's optimum sits above the welfare optimum. (b) Magnitude-dependent detection with sensitivity $1/\beta$: the dividend persists and fakes diversify from singleton (support-based regime, left) to multi-channel, collapsing only when detection is blind to large fakes (right). (c) Causal leakage $\rho$: the dividend is largest at pure fakeability and shrinks as faking the signal increasingly converts gamers into qualified agents, while the signal stays priced throughout.}
\label{fig:robustness}
\end{figure}

\section{Discussion and Conclusion}\label{sec:disc}
We have proposed a model of strategic classification in which the firm designs not only a classifier, but also an audit profile over the fakeable features, with detection probability and penalty as separate instruments. This model enabled us to ask how the two instruments interact, how audit should be allocated, what it costs to compute and to learn, and what it is worth. At a technical level, the key structural result is that the classifier affects the audit problem only through the distribution of gaming rents it induces (Theorem~\ref{thm:rentsuff}). This is what allowed us to hold the manipulation technology and the causal structure fixed while varying institutional risk alone, and it is also where the computational difficulty lives: audit allocation is exactly solvable for few channels (Proposition~\ref{prop:vertex}), is Densest-$k$-Subgraph-hard when gaming routes overlap under an inspection cap (Theorem~\ref{thm:hardness}), and the joint design is polynomial in fixed dimension (Theorem~\ref{thm:fixeddim}).

Our findings suggest a message that is narrower than ``audit more.'' Verification is valuable when it prices a rent that would otherwise force the classifier to discard a useful signal. Once that rent is priced out, additional audit buys no deterrence and falls on honest applicants, so that welfare is single-peaked in the audit intensity (Proposition~\ref{prop:overpen}); moreover, how a given level of deterrence is delivered matters, since the firm prefers severity and the planner prefers detection (Proposition~\ref{prop:composition}). In other words, the right target is the least verification that prices a rent out, rather than the most that the firm can afford.

We conclude with a discussion of our modeling assumptions and directions for future work. First, our model is a one-shot Stackelberg game against a known type distribution, in the mechanism-design strand of this literature rather than its learning strand \cite{dong2018revealed,zrnic2021leads,sundaram2021pac}. Theorem~\ref{thm:rentsuff} identifies what a learner would need to estimate, namely the rent profile, Theorem~\ref{thm:quantile-audit} bounds the regret in terms of the Kolmogorov distance of the estimated risk distribution, and Section~\ref{sec:learning} gives the bandit rates when the only feedback is the audit itself; repeated interaction with the same agents, audit reputation, and strategic delay remain outside the model. Second, we take the audit profile to be committed and observed. This is descriptively accurate where verification policy is disclosed or institutionally fixed (e.g., published audit criteria or mandated proctoring), and it is otherwise a commitment benchmark: a concealed audit makes the agent respond to a belief $\hat p_j\le p_j$, which weakly relaxes the frontier of Lemma~\ref{lem:deterrence-frontier} to $\hat p_j\Delta\ge D_j$, so that the priced rent $p_j\Delta$ bounds what any hidden-audit scheme buys at the same expected inspection cost. Concealment can help only through channels that our one-shot model excludes, such as correlated beliefs across agents or ambiguity aversion. Third, the extensive-margin structure of the agent's response relies on magnitude-independent detection (Propositions~\ref{prop:magnitude} and~\ref{prop:neardecouple}), and our welfare analysis does not model appeals, remediation, or legal limits on penalties beyond the cap $\Delta_{\max}$. Finally, the complexity of audit allocation under the full outcome objective, and of the nested-prefix instances that arise under a common gaming technology, remain open (Proposition~\ref{prop:catch-credit} and Corollary~\ref{cor:inapprox}). Extending the two-instrument decomposition to dynamic, appealable, and concealed enforcement, and understanding how it deforms there, is an interesting direction for future work.

\bibliographystyle{plainnat}
\bibliography{references}

@inproceedings{hardt2016strategic,
  title     = {Strategic Classification},
  author    = {Hardt, Moritz and Megiddo, Nimrod and Papadimitriou, Christos and Wootters, Mary},
  booktitle = {Proc.\ ACM Conf.\ on Innovations in Theoretical Computer Science (ITCS)},
  year      = {2016},
  pages = {111--122},
  publisher = {ACM},
  address = {New York, NY, USA},
}

@inproceedings{kleinberg2019induce,
  title     = {How Do Classifiers Induce Agents To Invest Effort Strategically?},
  author    = {Kleinberg, Jon and Raghavan, Manish},
  booktitle = {Proc.\ ACM Conf.\ on Economics and Computation (EC)},
  year      = {2019},
  note      = {Journal version: ACM TEAC 2020. arXiv:1807.05307},
  pages = {825--844},
  publisher = {ACM},
  address = {New York, NY, USA},
}

@inproceedings{miller2020causal,
  title     = {Strategic Classification is Causal Modeling in Disguise},
  author    = {Miller, John and Milli, Smitha and Hardt, Moritz},
  booktitle = {Proc.\ Int.\ Conf.\ on Machine Learning (ICML)},
  year      = {2020},
  note      = {arXiv:1910.10362},
  pages = {6917--6926},
  publisher = {PMLR},
  address = {Vienna, Austria},
}

@inproceedings{jagadeesan2021microfoundations,
  title     = {Alternative Microfoundations for Strategic Classification},
  author    = {Jagadeesan, Meena and Mendler-D{\"u}nner, Celestine and Hardt, Moritz},
  booktitle = {Proc.\ Int.\ Conf.\ on Machine Learning (ICML)},
  year      = {2021},
  note      = {arXiv:2106.12705},
  pages = {4687--4697},
  publisher = {PMLR},
  address = {Virtual Event},
}

@inproceedings{dong2018revealed,
  title     = {Strategic Classification from Revealed Preferences},
  author    = {Dong, Jinshuo and Roth, Aaron and Schutzman, Zachary and Waggoner, Bo and Wu, Zhiwei Steven},
  booktitle = {Proc.\ ACM Conf.\ on Economics and Computation (EC)},
  pages     = {55--70},
  year      = {2018},
  note      = {arXiv:1710.07887},
  publisher = {ACM},
  address = {New York, NY, USA},
}

@misc{chen2020strategyaware,
  title  = {Learning Strategy-Aware Linear Classifiers},
  author = {Chen, Yiling and Liu, Yang and Podimata, Chara},
  year   = {2020},
  note   = {NeurIPS 2020. arXiv:1911.04004}
}

@inproceedings{levanon2021practical,
  title     = {Strategic Classification Made Practical},
  author    = {Levanon, Sagi and Rosenfeld, Nir},
  booktitle = {Proc.\ Int.\ Conf.\ on Machine Learning (ICML)},
  year      = {2021},
  note      = {arXiv:2103.01826},
  pages = {6243--6253},
  publisher = {PMLR},
  address = {Virtual Event},
}

@inproceedings{shavit2020causal,
  title     = {Causal Strategic Linear Regression},
  author    = {Shavit, Yonadav and Edelman, Benjamin and Axelrod, Brian},
  booktitle = {Proc.\ Int.\ Conf.\ on Machine Learning (ICML)},
  year      = {2020},
  note      = {arXiv:2002.10066},
  pages = {8676--8686},
  publisher = {PMLR},
  address = {Vienna, Austria},
}

@inproceedings{harris2022strategic,
  title     = {Strategic Instrumental Variable Regression: Recovering Causal Relationships From Strategic Responses},
  author    = {Harris, Keegan and Ngo, Dung Daniel T. and Stapleton, Logan and Heidari, Hoda and Wu, Zhiwei Steven},
  booktitle = {Proc.\ Int.\ Conf.\ on Machine Learning (ICML)},
  year      = {2022},
  note      = {arXiv:2107.05762},
  pages = {8641--8661},
  publisher = {PMLR},
  address = {Baltimore, MD, USA},
}

@inproceedings{bechavod2021gaming,
  title     = {Gaming Helps! Learning from Strategic Interactions in Natural Dynamics},
  author    = {Bechavod, Yahav and Ligett, Katrina and Wu, Zhiwei Steven and Ziani, Juba},
  booktitle = {Proc.\ Int.\ Conf.\ on Artificial Intelligence and Statistics (AISTATS)},
  year      = {2021},
  note      = {arXiv:2002.07024},
  pages = {1234--1242},
  publisher = {PMLR},
  address = {Virtual Event},
}

@incollection{avenhaus2002inspection,
  title     = {Inspection Games},
  author    = {Avenhaus, Rudolf and von Stengel, Bernhard and Zamir, Shmuel},
  booktitle = {Handbook of Game Theory with Economic Applications},
  editor    = {Aumann, Robert J. and Hart, Sergiu},
  volume    = {3},
  pages     = {1947--1987},
  publisher = {Elsevier},
  year      = {2002},
  address = {Amsterdam, The Netherlands},
}

@inproceedings{blocki2013audit,
  title     = {Audit Games},
  author    = {Blocki, Jeremiah and Christin, Nicolas and Datta, Anupam and Procaccia, Ariel D. and Sinha, Arunesh},
  booktitle = {Proc.\ Int.\ Joint Conf.\ on Artificial Intelligence (IJCAI)},
  year      = {2013},
  pages = {41--47},
  publisher = {AAAI Press},
  address = {Beijing, China},
}

@article{becker1968crime,
  title   = {Crime and Punishment: An Economic Approach},
  author  = {Becker, Gary S.},
  journal = {Journal of Political Economy},
  volume  = {76},
  number  = {2},
  pages   = {169--217},
  year    = {1968}
}

@article{townsend1979csv,
  title   = {Optimal Contracts and Competitive Markets with Costly State Verification},
  author  = {Townsend, Robert M.},
  journal = {Journal of Economic Theory},
  volume  = {21},
  number  = {2},
  pages   = {265--293},
  year    = {1979}
}

@article{reinganum1985compliance,
  title   = {Income Tax Compliance in a Principal-Agent Framework},
  author  = {Reinganum, Jennifer F. and Wilde, Louis L.},
  journal = {Journal of Public Economics},
  volume  = {26},
  number  = {1},
  pages   = {1--18},
  year    = {1985}
}

@article{mookherjee1989auditing,
  title   = {Optimal Auditing, Insurance, and Redistribution},
  author  = {Mookherjee, Dilip and Png, Ivan},
  journal = {Quarterly Journal of Economics},
  volume  = {104},
  number  = {2},
  pages   = {399--415},
  year    = {1989}
}

@article{benporath2014verification,
  title   = {Optimal Allocation with Costly Verification},
  author  = {Ben-Porath, Elchanan and Dekel, Eddie and Lipman, Barton L.},
  journal = {American Economic Review},
  volume  = {104},
  number  = {12},
  pages   = {3779--3813},
  year    = {2014}
}

@inproceedings{perdomo2020performative,
  title     = {Performative Prediction},
  author    = {Perdomo, Juan C. and Zrnic, Tijana and Mendler-D{\"u}nner, Celestine and Hardt, Moritz},
  booktitle = {Proc.\ Int.\ Conf.\ on Machine Learning (ICML)},
  year      = {2020},
  note      = {arXiv:2002.06673},
  pages = {7599--7609},
  publisher = {PMLR},
  address = {Vienna, Austria},
}

@article{elzayn2025disparities,
  title   = {Measuring and Mitigating Racial Disparities in Tax Audits},
  author  = {Elzayn, Hadi and Smith, Evelyn and Hertz, Thomas and Guage, Cameron and Ramesh, Arun and Fisher, Robin and Ho, Daniel E. and Goldin, Jacob},
  journal = {Quarterly Journal of Economics},
  volume  = {140},
  number  = {1},
  pages   = {113--163},
  year    = {2025}
}

@book{eubanks2018automating,
  title     = {Automating Inequality: How High-Tech Tools Profile, Police, and Punish the Poor},
  author    = {Eubanks, Virginia},
  publisher = {St.\ Martin's Press},
  year      = {2018},
  address = {New York, NY, USA},
}

@inproceedings{ebrahimi2025double,
  title={The double-edged sword of behavioral responses in strategic classification: Theory and user studies},
  author={Ebrahimi, Raman and Vaccaro, Kristen and Naghizadeh, Parinaz},
  booktitle={Proceedings of the 2025 ACM Conference on Fairness, Accountability, and Transparency},
  pages={868--886},
  year={2025},
  publisher = {ACM},
  address = {New York, NY, USA},
}

@inproceedings{ahmadi2021perceptron,
  title     = {The Strategic Perceptron},
  author    = {Ahmadi, Saba and Beyhaghi, Hedyeh and Blum, Avrim and Naggita, Keziah},
  booktitle = {Proc.\ ACM Conf.\ on Economics and Computation (EC)},
  year      = {2021},
  pages = {6--25},
  publisher = {ACM},
  address = {New York, NY, USA},
}

@inproceedings{alon2020multiagent,
  title     = {Multiagent Evaluation Mechanisms},
  author    = {Alon, Tal and Dobson, Magdalen and Procaccia, Ariel D. and Talgam-Cohen, Inbal and Tucker-Foltz, Jamie},
  booktitle = {Proc.\ AAAI Conf.\ on Artificial Intelligence (AAAI)},
  year      = {2020},
  pages = {1774--1781},
  publisher = {AAAI Press},
  address = {Palo Alto, CA, USA},
}

@book{basu2006algorithms,
  title     = {Algorithms in Real Algebraic Geometry},
  author    = {Basu, Saugata and Pollack, Richard and Roy, Marie-Fran{\c c}oise},
  edition   = {2nd},
  publisher = {Springer},
  year      = {2006},
  address = {Berlin, Germany},
}

@inproceedings{braverman2020randomness,
  title     = {The Role of Randomness and Noise in Strategic Classification},
  author    = {Braverman, Mark and Garg, Sumegha},
  booktitle = {Proc.\ Symposium on Foundations of Responsible Computing (FORC)},
  year      = {2020},
  pages = {9:1--9:10},
  publisher = {Schloss Dagstuhl -- Leibniz-Zentrum f{\"u}r Informatik},
  address = {Dagstuhl, Germany},
}

@misc{burnat2026dpaudit,
  title  = {Differentially Private Auditing under Strategic Response},
  author = {Burnat, F. A. D.},
  year   = {2026},
  note   = {arXiv:2605.07674}
}

@inproceedings{ceppi2019partial,
  title     = {Partial Verification as a Substitute for Money},
  author    = {Ceppi, Sofia and Kash, Ian and Frongillo, Rafael},
  booktitle = {Proc.\ AAAI Conf.\ on Artificial Intelligence (AAAI)},
  year      = {2019},
  publisher = {AAAI Press},
  address = {Palo Alto, CA, USA},
  numpages = {8},
}

@inproceedings{dai2025audits,
  title     = {Non-Monetary Mechanism Design without Priors: Achieving Efficiency via Adaptive Costly Audits},
  author    = {Dai, Yan and Blanchard, Moise and Jaillet, Patrick},
  booktitle = {Proc.\ Conf.\ on Learning Theory (COLT)},
  year      = {2025},
  publisher = {PMLR},
  address = {Lyon, France},
  numpages = {40},
}

@inproceedings{ghalme2021dark,
  title     = {Strategic Classification in the Dark},
  author    = {Ghalme, Ganesh and Nair, Vineet and Eilat, Itay and Talgam-Cohen, Inbal and Rosenfeld, Nir},
  booktitle = {Proc.\ Int.\ Conf.\ on Machine Learning (ICML)},
  year      = {2021},
  pages = {3672--3681},
  publisher = {PMLR},
  address = {Virtual Event},
}

@article{green1986verifiable,
  title   = {Partially Verifiable Information and Mechanism Design},
  author  = {Green, Jerry R. and Laffont, Jean-Jacques},
  journal = {Review of Economic Studies},
  volume  = {53},
  number  = {3},
  pages   = {447--456},
  year    = {1986}
}

@inproceedings{haghtalab2020welfare,
  title     = {Maximizing Welfare with Incentive-Aware Evaluation Mechanisms},
  author    = {Haghtalab, Nika and Immorlica, Nicole and Lucier, Brendan and Wang, Jack Z.},
  booktitle = {Proc.\ Int.\ Joint Conf.\ on Artificial Intelligence (IJCAI)},
  year      = {2020},
  pages = {160--166},
  publisher = {ijcai.org},
  address = {Yokohama, Japan},
}

@article{hart2017evidence,
  title   = {Evidence Games: Truth and Commitment},
  author  = {Hart, Sergiu and Kremer, Ilan and Perry, Motty},
  journal = {American Economic Review},
  volume  = {107},
  number  = {3},
  pages   = {690--713},
  year    = {2017}
}

@inproceedings{hu2019disparate,
  title     = {The Disparate Effects of Strategic Manipulation},
  author    = {Hu, Lily and Immorlica, Nicole and Vaughan, Jennifer Wortman},
  booktitle = {Proc.\ ACM Conf.\ on Fairness, Accountability, and Transparency (FAccT)},
  year      = {2019},
  pages = {259--268},
  publisher = {ACM},
  address = {New York, NY, USA},
}

@article{khot2006ptas,
  title   = {Ruling Out PTAS for Graph Min-Bisection, Dense $k$-Subgraph, and Bipartite Clique},
  author  = {Khot, Subhash},
  journal = {SIAM Journal on Computing},
  volume  = {36},
  number  = {4},
  pages   = {1025--1071},
  year    = {2006}
}

@inproceedings{krishnaswamy2021withheld,
  title     = {Classification with Strategically Withheld Data},
  author    = {Krishnaswamy, Anilesh K. and Li, Haoming and Rein, David and Zhang, Hanrui and Conitzer, Vincent},
  booktitle = {Proc.\ AAAI Conf.\ on Artificial Intelligence (AAAI)},
  year      = {2021},
  pages = {5514--5522},
  publisher = {AAAI Press},
  address = {Palo Alto, CA, USA},
}

@inproceedings{manurangsi2017dks,
  title     = {Almost-Polynomial Ratio ETH-Hardness of Approximating Densest $k$-Subgraph},
  author    = {Manurangsi, Pasin},
  booktitle = {Proc.\ ACM Symposium on Theory of Computing (STOC)},
  year      = {2017},
  pages = {954--961},
  publisher = {ACM},
  address = {New York, NY, USA},
}

@inproceedings{milli2019social,
  title     = {The Social Cost of Strategic Classification},
  author    = {Milli, Smitha and Miller, John and Dragan, Anca D. and Hardt, Moritz},
  booktitle = {Proc.\ ACM Conf.\ on Fairness, Accountability, and Transparency (FAccT)},
  year      = {2019},
  pages = {230--239},
  publisher = {ACM},
  address = {New York, NY, USA},
}

@article{mylovanov2017expost,
  title   = {Optimal Allocation with Ex Post Verification and Limited Penalties},
  author  = {Mylovanov, Tymofiy and Zapechelnyuk, Andriy},
  journal = {American Economic Review},
  volume  = {107},
  number  = {9},
  pages   = {2666--2694},
  year    = {2017}
}

@article{polinsky2000enforcement,
  title   = {The Economic Theory of Public Enforcement of Law},
  author  = {Polinsky, A. Mitchell and Shavell, Steven},
  journal = {Journal of Economic Literature},
  volume  = {38},
  number  = {1},
  pages   = {45--76},
  year    = {2000}
}

@inproceedings{sinha2018security,
  title     = {Stackelberg Security Games: Looking Beyond a Decade of Success},
  author    = {Sinha, Arunesh and Fang, Fei and An, Bo and Kiekintveld, Christopher and Tambe, Milind},
  booktitle = {Proc.\ Int.\ Joint Conf.\ on Artificial Intelligence (IJCAI)},
  year      = {2018},
  pages = {5494--5501},
  publisher = {ijcai.org},
  address = {Stockholm, Sweden},
}

@inproceedings{sundaram2021pac,
  title     = {PAC-Learning for Strategic Classification},
  author    = {Sundaram, Ravi and Vullikanti, Anil and Xu, Haifeng and Yao, Fan},
  booktitle = {Proc.\ Int.\ Conf.\ on Machine Learning (ICML)},
  year      = {2021},
  pages = {9978--9988},
  publisher = {PMLR},
  address = {Virtual Event},
}

@inproceedings{zrnic2021leads,
  title     = {Who Leads and Who Follows in Strategic Classification?},
  author    = {Zrnic, Tijana and Mazumdar, Eric and Sastry, Shankar and Jordan, Michael I.},
  booktitle = {Proc.\ Advances in Neural Information Processing Systems (NeurIPS)},
  year      = {2021},
  pages = {15257--15269},
  publisher = {Curran Associates},
  address = {Red Hook, NY, USA},
}

@inproceedings{estornell2021audits,
  author    = {Estornell, Andrew and Das, Sanmay and Vorobeychik, Yevgeniy},
  title     = {Incentivizing Truthfulness Through Audits in Strategic Classification},
  booktitle = {Proceedings of the AAAI Conference on Artificial Intelligence},
  volume    = {35(6)},
  pages     = {5347--5354},
  year      = {2021},
  publisher = {AAAI Press},
  address = {Palo Alto, CA, USA},
}

@inproceedings{estornell2023recourse,
  author    = {Estornell, Andrew and Chen, Yatong and Das, Sanmay and Liu, Yang and Vorobeychik, Yevgeniy},
  title     = {Incentivizing Recourse through Auditing in Strategic Classification},
  booktitle = {Proceedings of the Thirty-Second International Joint Conference on Artificial Intelligence (IJCAI)},
  pages     = {400--408},
  year      = {2023},
  publisher = {ijcai.org},
  address = {Macao, China},
}

@article{cezar2020adversarial,
  author    = {Cezar, Asunur and Raghunathan, Srinivasan and Sarkar, Sumit},
  title     = {Adversarial Classification: Impact of Agents' Faking Cost on Firms and Agents},
  journal   = {Production and Operations Management},
  volume    = {29},
  number    = {12},
  pages     = {2789--2807},
  year      = {2020}
}

@inproceedings{fotakis2016selective,
  author    = {Fotakis, Dimitris and Tzamos, Christos and Zampetakis, Emmanouil},
  title     = {Mechanism Design with Selective Verification},
  booktitle = {Proceedings of the 2016 ACM Conference on Economics and Computation (EC)},
  pages     = {771--788},
  year      = {2016},
  publisher = {ACM},
  address = {New York, NY, USA},
}

@inproceedings{caragiannis2012probabilistic,
  author    = {Caragiannis, Ioannis and Elkind, Edith and Szegedy, Mario and Yu, Lan},
  title     = {Mechanism Design: From Partial to Probabilistic Verification},
  booktitle = {Proceedings of the 13th ACM Conference on Electronic Commerce (EC)},
  pages     = {266--283},
  year      = {2012},
  publisher = {ACM},
  address = {New York, NY, USA},
}

@inproceedings{ball2019probabilistic,
  author    = {Ball, Ian and Kattwinkel, Deniz},
  title     = {Probabilistic Verification in Mechanism Design},
  booktitle = {Proceedings of the 2019 ACM Conference on Economics and Computation (EC)},
  pages     = {389--390},
  year      = {2019},
  publisher = {ACM},
  address = {New York, NY, USA},
}

@inproceedings{kephart2016reporting,
  author    = {Kephart, Andrew and Conitzer, Vincent},
  title     = {The Revelation Principle for Mechanism Design with Reporting Costs},
  booktitle = {Proceedings of the 2016 ACM Conference on Economics and Computation (EC)},
  pages     = {85--102},
  year      = {2016},
  publisher = {ACM},
  address = {New York, NY, USA},
}

@article{auer2002nonstochastic,
  author  = {Auer, Peter and Cesa-Bianchi, Nicol{\`o} and Freund, Yoav and Schapire, Robert E.},
  title   = {The Nonstochastic Multiarmed Bandit Problem},
  journal = {SIAM Journal on Computing},
  volume  = {32},
  number  = {1},
  pages   = {48--77},
  year    = {2002}
}

@book{lattimore2020bandit,
  author    = {Lattimore, Tor and Szepesv{\'a}ri, Csaba},
  title     = {Bandit Algorithms},
  publisher = {Cambridge University Press},
  address   = {Cambridge, UK},
  year      = {2020}
}

@inproceedings{agarwal2011bandit,
  author    = {Agarwal, Alekh and Foster, Dean P. and Hsu, Daniel and Kakade, Sham M. and Rakhlin, Alexander},
  title     = {Stochastic Convex Optimization with Bandit Feedback},
  booktitle = {Advances in Neural Information Processing Systems 24 (NeurIPS)},
  pages     = {1035--1043},
  publisher = {Curran Associates},
  address   = {Red Hook, NY, USA},
  year      = {2011}
}

@inproceedings{shamir2013complexity,
  author    = {Shamir, Ohad},
  title     = {On the Complexity of Bandit and Derivative-Free Stochastic Convex Optimization},
  booktitle = {Proceedings of the 26th Annual Conference on Learning Theory (COLT)},
  pages     = {3--24},
  publisher = {PMLR},
  address   = {Princeton, NJ, USA},
  year      = {2013}
}

@inproceedings{verma2019censored,
  author    = {Verma, Arun and Hanawal, Manjesh K. and Rajkumar, Arun and Sankaran, Raman},
  title     = {Censored Semi-Bandits: A Framework for Resource Allocation with Censored Feedback},
  booktitle = {Advances in Neural Information Processing Systems 32 (NeurIPS)},
  year      = {2019},
  pages = {14526--14536},
  publisher = {Curran Associates},
  address = {Red Hook, NY, USA},
}

@inproceedings{haghtalab2022nonmyopic,
  author    = {Haghtalab, Nika and Lykouris, Thodoris and Nietert, Sloan and Wei, Alexander},
  title     = {Learning in Stackelberg Games with Non-myopic Agents},
  booktitle = {Proceedings of the 23rd ACM Conference on Economics and Computation (EC)},
  pages     = {917--918},
  year      = {2022},
  publisher = {ACM},
  address = {New York, NY, USA},
}

@article{schrijver2000sfm,
  author  = {Schrijver, Alexander},
  title   = {A Combinatorial Algorithm Minimizing Submodular Functions in Strongly Polynomial Time},
  journal = {Journal of Combinatorial Theory, Series B},
  volume  = {80},
  number  = {2},
  pages   = {346--355},
  year    = {2000}
}

@article{iwata2001sfm,
  author  = {Iwata, Satoru and Fleischer, Lisa and Fujishige, Satoru},
  title   = {A Combinatorial Strongly Polynomial Algorithm for Minimizing Submodular Functions},
  journal = {Journal of the ACM},
  volume  = {48},
  number  = {4},
  pages   = {761--777},
  year    = {2001}
}

@book{grotschel1988geometric,
  author    = {Gr{\"o}tschel, Martin and Lov{\'a}sz, L{\'a}szl{\'o} and Schrijver, Alexander},
  title     = {Geometric Algorithms and Combinatorial Optimization},
  publisher = {Springer},
  address   = {Berlin, Germany},
  year      = {1988}
}

\clearpage
\appendix
\section{Assumption ledger}\label{app:assumptions}

Table~\ref{tab:conditions} records where each condition is in force. The ledger below records what each one buys and what changes when it is relaxed.

\begin{table}[t]
\centering\small
\caption{Which conditions are in force where. A dot means the condition is assumed in every result of that section.}
\label{tab:conditions}
\setlength{\tabcolsep}{3pt}
\resizebox{\linewidth}{!}{%
\begin{tabular}{lccccc}
\toprule
 & \S\ref{sec:agent} agent & \S\ref{sec:firm} firm & \S\ref{sec:computation} comput. & \S\ref{sec:welfare} welfare & \S\ref{sec:risk-tuning} tuning \\
\midrule
Committed, observed, support-based audit & $\bullet$ & $\bullet$ & $\bullet$ & $\bullet$ & $\bullet$ \\
\textnormal{(Lin)} & Thm.~\ref{thm:portfolio} (last part), Lem.~\ref{lem:deterrence-frontier} & Thm.~\ref{thm:rentsuff} & $\bullet$ & $\bullet$ & $\bullet$ \\
\textnormal{(SC)} & shared-coordinate mix & -- & -- & -- & -- \\
\textnormal{(W$>$0)} & $\bullet$ & $\bullet$ & $\bullet$ & $\bullet$ & $\bullet$ \\
Finitely many types & -- & -- & $\bullet$ & -- & -- \\
Continuum of rents or risk types & -- & Thms.~\ref{thm:auditdesign}, \ref{thm:quantile-audit} & -- & Thm.~\ref{thm:dominance} & -- \\
Audit cost $K(p)$ explicit & -- & $\bullet$ & budget $B$ & Thm.~\ref{thm:price} & $c_ap$ \\
False positives on honest agents & -- & -- & -- & Thm.~\ref{thm:price} & $\bullet$ \\
\bottomrule
\end{tabular}}
\end{table}

\begin{description}
\item[Committed, observed audit.] Buys: deterrence is a Stackelberg threshold $p_j\Delta\ge D_j$, and the priced rent $p_j\Delta$ is the commitment-value upper bound on deterrence (Section~\ref{sec:disc}). Relaxed: hidden or uncommitted inspection becomes an inspection game with mixed strategies; agents respond to a belief $\hat p_j\le p_j$, which weakly relaxes the frontier to $\hat p_j\Delta\ge D_j$.
\item[Support-based detection.] Buys: audit is an extensive-margin charge and intensities solve ordinary covering programs (Theorem~\ref{thm:portfolio}). Relaxed: magnitude-dependent detection enters the first-order conditions and can make split gaming optimal (Proposition~\ref{prop:magnitude}); the factorization of Theorem~\ref{thm:rentsuff} then holds approximately, with error vanishing in the magnitude sensitivity (Proposition~\ref{prop:neardecouple}).
\item[Severity cap $\Delta\le\Delta_{\max}$.] Buys: a well-posed composition problem in Section~\ref{sec:risk-tuning}. Relaxed: a firm that bears no false-accusation harm drives $p\to0$ and $\Delta\to\infty$ along $p\Delta=D$ (Proposition~\ref{prop:composition}).
\item[\textnormal{(Lin)}.] Buys: a singleton gaming support and the rent vector $D_j$ of \eqref{eq:rent}. Relaxed: convex costs preserve the support/intensity separation under support-based audit, but the optimal support need not collapse to the same price list.
\item[Additive firm objective \eqref{eq:firm-objective}.] Buys: the rent profile is a sufficient statistic for joint design (Theorem~\ref{thm:rentsuff}). Relaxed: capacity, congestion, or quota objectives couple types and require a richer state than $\nu_{w,\theta}$.
\item[Fixed-channel regime.] Buys: water-filling is a global allocation rule over the audited interval (Theorem~\ref{thm:auditdesign}). Relaxed: with channel switching the same equations hold locally, with best responses and rent densities recomputed after each marginal update; with overlapping channels the exact problem is NP-hard (Theorem~\ref{thm:hardness}).
\item[$\Delta_t$ independent of $D$.] Buys: a uniform audit deters a clean upper tail of risk differentials (Theorem~\ref{thm:quantile-audit}). Relaxed: correlated risk and recourse costs require a joint law over $(D,\Delta)$ and a two-dimensional allocation rule.
\item[Perfect-audit benchmark.] Buys: the closed-form dividend $W'-W=\mu_g$ (Theorem~\ref{thm:dominance}). Relaxed: false positives and audit costs erode the dividend and create the over-penalization trap (Proposition~\ref{prop:overpen}).
\item[Shared domain primitives.] Buys: opposite observed behavior refutes every cost-only model with the same primitives (Theorem~\ref{thm:domainsep}). Relaxed: unmodeled cost or causal differences can also explain the split, and the identification claim disappears.
\item[Local exposedness and openness \textnormal{(R1)}-\textnormal{(R4)}.] Buys: a pointwise audit improvement becomes a positive-measure frontier gap (Theorem~\ref{thm:dominance}). Relaxed: the benchmark lift survives but not the open reachable region.
\end{description}

\section{Agent-side proofs}\label{app:agent}

\subsection{Proof of Theorem~\ref{thm:portfolio}}

Any non-covering action yields rejection and weakly negative utility, so abstention weakly dominates it. A covering action with no gaming has best value $u_+-\PhiI$. A covering action with gaming support $S$ has detection probability $P_S$ independent of the effort levels, hence best value $u_+-P_S\Delta-\Phi(S)$. Maximizing over $S$ gives $u_+-\PhiM(\pi)$ by \eqref{eq:PhiM-decomp}. Taking the maximum over abstention, honest cover, and cover-with-gaming proves the value expression and the gaming criterion.

On a fixed support, the optimization defining $\Phi(S)$ is a convex covering program. The score constraint binds, since otherwise some positive effort can be reduced. With multiplier $\lambda>0$, stationarity gives
\[
c^I_j\phi'_j(\delta_j^\star)=\lambda w_j,\qquad
c^G_j\psi'_j(\gamma_j^\star)=\lambda w_j
\]
on active coordinates; the audit term is constant on $S$, so it does not enter these first-order conditions.

Under \textnormal{(Lin)}, $\Phi(S)$ is a linear program with one covering constraint. There exists an optimum loading one cheapest active coordinate. If a support contains coordinates not used by such an optimum, deleting them preserves movement cost and weakly lowers detection risk. Thus there exists an optimal gaming support that is a singleton, up to ties and zero-audit redundant coordinates. For singleton $\{j\}$, the total risk-adjusted covering price is $g_0c^G_j/w_j+p_j\Delta$, proving the resulting equation. \qed

\subsection{Proof of Lemma~\ref{lem:deterrence-frontier}}

Under \textnormal{(Lin)} and \textnormal{(W$>$0)}, by Theorem~\ref{thm:portfolio} the agent's non-gaming value is $\max\{0,u_+-\PhiI\}=u_+-\min\{\PhiI,u_+\}$, and its best value from gaming on $j$ is $u_+-g_0c^G_j/w_j-p_j\Delta$. Gaming on $j$ is weakly deterred exactly when the latter is at most the former,
\[
g_0c^G_j/w_j+p_j\Delta\ \ge\ \min\{\PhiI,u_+\},
\]
which rearranges to $p_j\Delta\ge D_j$ with $D_j$ as in \eqref{eq:rent}. When $\PhiI\le u_+$ the fallback is honest improvement and $D_j=\PhiI-g_0c^G_j/w_j$; otherwise the fallback is abstention and $D_j=u_+-g_0c^G_j/w_j$. The substitution rates follow from differentiating $p_j\Delta=D_j$ along the frontier: $dD_j=\Delta\,dp_j+p_j\,d\Delta$. \qed

\subsection{Shared-coordinate formula}

For a shared active coordinate under costs $\phi_j(t)=\psi_j(t)=t^a/a$ with $a>1$, the fake-to-improve ratio on an active support is
\begin{equation}\label{eq:mix-power}
\frac{\gamma^\star_j}{\delta^\star_j}=\left(\frac{c^I_j}{c^G_j}\right)^{1/(a-1)} .
\end{equation}
The fixed-support first-order conditions give
\[
c^I_j(\delta_j^\star)^{a-1}=\lambda w_j,\qquad
c^G_j(\gamma_j^\star)^{a-1}=\lambda w_j .
\]
Dividing and taking the $(a-1)$st root gives \eqref{eq:mix-power}. If $c^G_j<c^I_j$, then the fake share is increasing in $(c^I_j/c^G_j)^{1/(a-1)}$ and this expression decreases to $1$ as $a\to\infty$, giving convergence to an even split. As $a\to1^+$ it diverges, giving the near-linear gaming corner.

\begin{figure}[t]
\centering
\includegraphics[width=\linewidth]{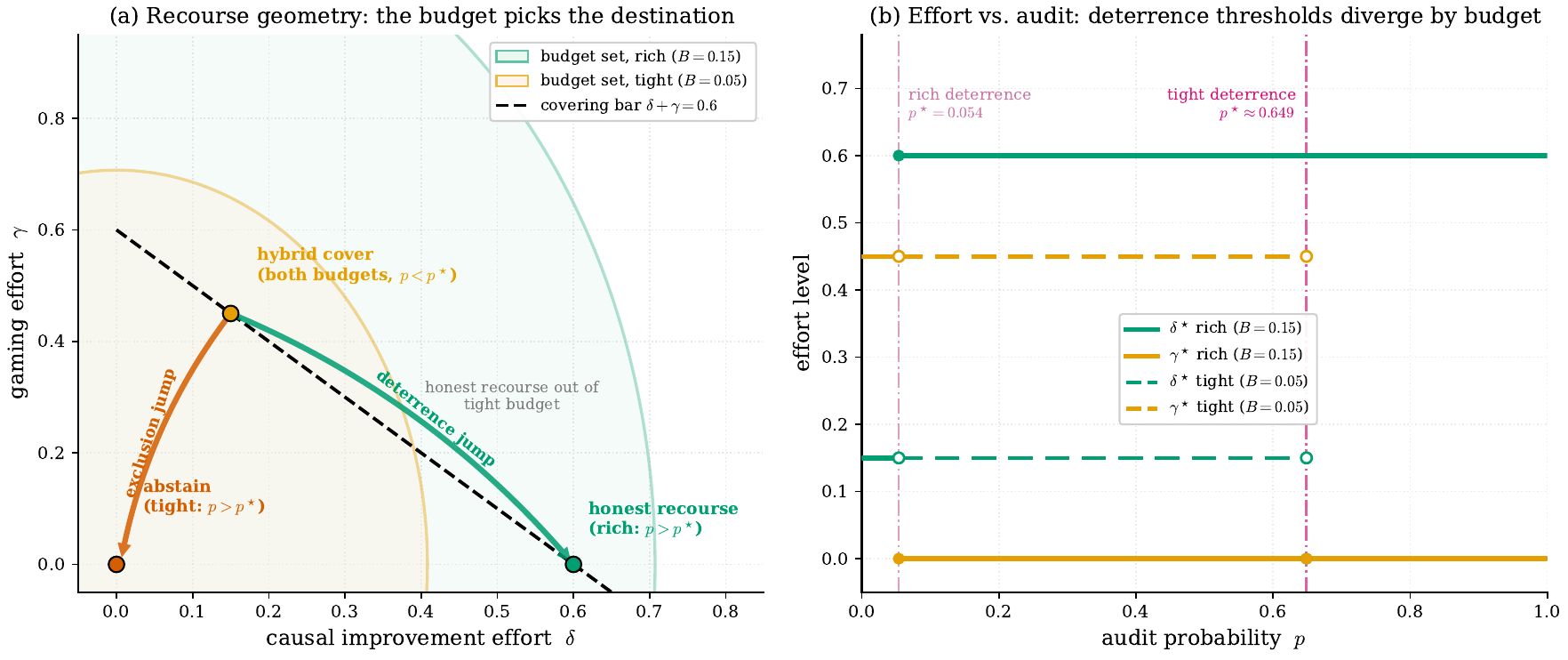}
\Description{See caption.}
\caption{Shared-coordinate mixing. On an active shared coordinate, convex costs create an interior improve/fake mix; audit changes the active support and can drop the fake coordinate discontinuously.}
\label{fig:shared-coordinate-mixing}
\end{figure}

\subsection{Generation protocol for Figure~\ref{fig:shared-coordinate-mixing}}

Figure~\ref{fig:shared-coordinate-mixing} is generated by \texttt{fig\_g\_shared\_coordinate\_mixing.py}. It is a deterministic calculation, not a Monte Carlo simulation. The script fixes one shared coordinate with quadratic improvement and gaming costs, $C_I(\delta)=\frac12c_I\delta^2$ and $C_G(\gamma)=\frac12c_G\gamma^2$, using $c_I=0.6$, $c_G=0.2$, score gap $g=0.6$, acceptance utility $u_+=1$, and caught-state differential $\Delta=1.5$. The unconstrained hybrid cover is computed from the first-order conditions,
\[
\delta^\star=\frac{gc_G}{c_I+c_G}=0.15,\qquad
\gamma^\star=\frac{gc_I}{c_I+c_G}=0.45.
\]
Its movement cost is $C_G^{\mathrm{hyb}}=0.027$, while the all-honest cover $\delta=g$ costs $C_I^{\mathrm{hon}}=0.108$.

The figure overlays two budget regimes. The ``rich'' agent has budget $B=0.15$, so honest recourse is feasible once audit deters the fake component; its deterrence threshold is $(C_I^{\mathrm{hon}}-C_G^{\mathrm{hyb}})/\Delta=0.054$. The ``tight'' agent has budget $B=0.05$, so honest recourse is infeasible after gaming is deterred; its threshold is instead $(u_+-C_G^{\mathrm{hyb}})/\Delta\simeq0.649$, after which the agent abstains. Panel (a) draws the two quadratic budget ellipses, the covering line $\delta+\gamma=0.6$, the common hybrid point, and the post-deterrence jump destinations. Panel (b) plots the piecewise-constant effort trajectories in $p$: solid lines for the rich agent, dashed lines for the tight agent, green for causal effort, and orange for fake effort. The purpose is to isolate the extensive-margin role of audit: the hybrid intensity is fixed by costs, while audit decides when the fake component exits the support.

\subsection{Magnitude-dependent audit}

\begin{prop}[Magnitude-dependent audit couples the intensive margin]\label{prop:magnitude}
Suppose detection is magnitude-dependent, $P(\gamma)=1-\prod_{j\in S(\gamma)}(1-p_j(\gamma_j))$ with each $p_j$ nondecreasing, differentiable on $\gamma_j>0$, and $p_j(0)=0$. On a fixed active support the gaming first-order condition acquires an audit term,
\begin{equation}\label{eq:magnitude-foc}
c^{G}_{j}\psi'_{j}(\gamma_j^\star)+\Delta\,p_j'(\gamma_j^\star)\!\!\prod_{i\in S,\,i\ne j}\!\!\big(1-p_i(\gamma_i^\star)\big)=\lambda w_j ,
\end{equation}
so the support/intensity separation of Theorem~\ref{thm:portfolio} - and hence the factorization of Theorem~\ref{thm:rentsuff} - is preserved in general only by magnitude-independent detection ($p_j'\equiv0$ on the interior). Moreover, even under \textnormal{(Lin)} the optimal gaming support need not be a singleton: with two equal-cost channels and a symmetric convex $p(\cdot)$, splitting the fake strictly dominates concentrating it.
\end{prop}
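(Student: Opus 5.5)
The plan has three parts: derive the first-order condition, explain why it breaks the separation, and build an explicit example where splitting wins.

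\textbf{First-order condition.} Fix an active support $S$. Because $p_j$ now depends on $\gamma_j$, the agent's covering problem becomes
\[
\min\ C_I(\delta)+C_G(\gamma)+\Delta\Big[1-\prod_{j\in S}\big(1-p_j(\gamma_j)\big)\Big]
\quad\text{s.t.}\quad \inn{w}{\delta+\gamma}\ge g_0 .
\]
On the interior of the support each $\gamma_j>0$, so the objective is differentiable there. The constraint binds, by the same argument as in Theorem~\ref{thm:portfolio}: any slack would let us lower some positive effort. Writing the Lagrangian with multiplier $\lambda$ and differentiating in $\gamma_j$ gives \eqref{eq:magnitude-foc}. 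The product rule produces the audit term $\Delta p_j'(\gamma_j)\prod_{i\ne j}(1-p_i(\gamma_i))$, because only the $j$-th factor depends on $\gamma_j$.

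\textbf{Why separation fails.} In Theorem~\ref{thm:portfolio}, intensities were pinned down by costs alone because the audit term did not enter the first-order conditions. In \eqref{eq:magnitude-foc} the audit term vanishes for every admissible intensity only if $p_j'\equiv0$ on the interior. Whenever $p_j'>0$, moving $p_j$ shifts $\gamma_j^\star$, and hence shifts $\PhiM$ in a way that is not an additive support charge. The mechanism behind Theorem~\ref{thm:rentsuff} is that the branch values are affine in the rent coordinates plus $p_j\Delta$. That mechanism is lost, so the factorization holds in general only under magnitude-independent detection. I would state this as the logical contrapositive rather than attempt a stronger impossibility claim.

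\textbf{Non-singleton example.} Take (Lin), two channels with $w_1=w_2=1$, $c^G_1=c^G_2=c$, and $g_0=1$. Use a common detection function $p(\cdot)$ with $p(0)=0$ that is strictly convex on $[0,1]$, for instance $p(x)=\bar p x^2$ with $\bar p\le1$.
\begin{itemize}
\item The movement cost is $c$ for any split $\gamma_1+\gamma_2=1$, so only detection matters.
\item Concentrating the fake gives detection $p(1)$.
\item The even split gives $1-(1-p(1/2))^2\le 2p(1/2)$.
\item Strict convexity with $p(0)=0$ gives $2p(1/2)<p(1)$, since $p(1/2)<\tfrac12 p(0)+\tfrac12 p(1)$.
\end{itemize}
So the even split has strictly lower expected penalty at equal movement cost, and splitting strictly dominates concentrating. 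With $p(x)=\bar p x^2$ the numbers are explicit: the split yields $\bar p/2-\bar p^2/16<\bar p$.

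\textbf{Main obstacle.} The delicate step is making ``preserved only by magnitude-independent detection'' precise. I would phrase it as a necessity statement: if $p_j'(\gamma)>0$ at some interior $\gamma$, then there exist cost parameters under which the optimal intensity depends on $p$. Such parameters can be built by choosing $\lambda w_j$ so that this $\gamma$ is cost-optimal when the audit term is ignored; the extra positive term then moves the optimum.
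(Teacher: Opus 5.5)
Your proposal is correct. For the first-order condition and the separation claim it matches the paper: the same binding-constraint and product-rule derivation, and the same contrapositive (the audit term vanishes exactly for the step $\bar p_j\indic\{\gamma_j>0\}$, and otherwise it enters the intensities).

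The real difference is in the non-singleton example. On the same instance, the paper restricts to the segment $\gamma_1+\gamma_2=1$ and uses convexity of $p$ to make $\log(1-p)$ concave. It concludes that the symmetric split maximizes $(1-p(\gamma_1))(1-p(\gamma_2))$, and only then checks $1-(1-a/4)^2<a$; so it locates the optimal split. You instead compare the even split directly with the concentrated fake, using $1-(1-x)^2\le 2x$ and $2p(\tfrac12)<p(1)$.

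Your route is shorter and more general. It works for any convex $p$ with $p(0)=0<p(1)$, not only $a\gamma^2$. The term $-p(\tfrac12)^2$ gives strictness whenever $p(\tfrac12)>0$, and the case $p(\tfrac12)=0$ is trivial, so strict convexity is not needed; this matches the statement's ``convex''. What your route does not produce is the optimizer. To conclude that the optimal support is $\{1,2\}$, you need to add two things: any single-channel cover $\gamma_1\ge1$ is weakly worse than $\gamma_1=1$ (linear cost, nondecreasing $p$), and a minimizer exists by continuity on the compact segment. The paper's log-concavity step gives the optimizer on the segment directly.

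Finally, your necessity construction needs a second active coordinate in the covering program, e.g.\ an improvement coordinate $i$ with strictly convex cost beside the fake. On a lone gaming coordinate, the binding constraint fixes $\gamma_j=g_0/w_j$ regardless of $p_j$. With $i$ active, suppose the cost-only point stayed optimal. Then $c^I_i\phi_i'(\delta_i^\star)=\lambda w_i$ would pin $\lambda$ at its cost-only value, leaving $\Delta p_j'(\gamma_j^\star)>0$ unbalanced, so the optimum must move.
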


\subsubsection*{Proof of Proposition~\ref{prop:magnitude}}

\emph{First-order condition.} Fix a support $S$ and consider $\min_{\delta,\gamma\ge0}C_I(\delta)+C_G(\gamma)+\Delta P(\gamma)$ subject to $\inn{w}{\delta+\gamma}\ge g_0$ and $\mathrm{supp}(\gamma)\subseteq S$. On the interior $P(\gamma)=1-\prod_{j\in S}(1-p_j(\gamma_j))$ is differentiable, and the product rule gives
\[
\frac{\partial P}{\partial\gamma_j}=p_j'(\gamma_j)\!\!\prod_{i\in S,\,i\ne j}\!\!\big(1-p_i(\gamma_i)\big)\ge0 .
\]
With multiplier $\lambda$ on the binding score constraint, stationarity on an active gaming coordinate is $c^G_j\psi'_j(\gamma_j^\star)+\Delta\,\partial P/\partial\gamma_j=\lambda w_j$, which is \eqref{eq:magnitude-foc}; the improvement coordinates retain $c^I_j\phi'_j(\delta_j^\star)=\lambda w_j$. If $p_j$ is the step $\bar p_j\indic\{\cdot>0\}$ then $p_j'\equiv0$ on $\gamma_j>0$, the audit term vanishes, and the support-level charge of \eqref{eq:PhiS}--\eqref{eq:PhiM-decomp} is recovered; otherwise the audit enters the intensities and the cost-only covering FOCs no longer characterize the mix.

\emph{Non-singleton optimum.} Take two purely-gameable channels with $w_1=w_2=1$, $g_0=1$, equal linear unit cost $c^G_1=c^G_2=c$, and symmetric $p_1(\cdot)=p_2(\cdot)=p$ with $p(\gamma)=\min(a\gamma^2,1)$, $a\in(0,1]$. Since movement cost is linear it equals $c(\gamma_1+\gamma_2)$, minimized on the binding constraint $\gamma_1+\gamma_2=1$ (raising the total only adds cost and detection). On that segment detection is $1-(1-p(\gamma_1))(1-p(\gamma_2))$, so minimizing detection maximizes $(1-p(\gamma_1))(1-p(\gamma_2))$. Because $p$ is convex, $1-p$ is concave, $\log(1-p)$ is concave, and $\gamma\mapsto\log(1-p(\gamma))$ is concave; the symmetric point $\gamma_1=\gamma_2=\tfrac12$ maximizes the sum of logs on the segment. There the detection is $1-(1-a/4)^2=\tfrac{a}{2}-\tfrac{a^2}{16}<a=p(1)$, the singleton detection at equal movement cost. Hence the split strictly dominates and the optimal support is $\{1,2\}$. The gap is structural: $1-\prod_j(1-p_j)=\sum_j p_j-\sum_{i<j}p_ip_j+\cdots$ is submodular in the $p_j$, so spreading detection across channels is discounted. \qed

\section{Firm-side proofs}\label{app:firm}

\subsection{Proof of Theorem~\ref{thm:auditdesign}}

The fixed-channel assumption means the channel-$j$ rent density in \eqref{eq:cum-deterrence} is a valid state variable over the stated audit interval: raising $p_j$ moves the deterrence cutoff through that density rather than inducing reassignment to other gaming channels. Under that maintained regime, the proof is a concave resource-allocation calculation.

For $0<p_j<\bar p_j$, the value function \eqref{eq:cum-deterrence} has derivative
\[
V_j'(p_j)=\Delta\,a_j(p_j\Delta)m_j(p_j\Delta)-\ell_j'(p_j),
\]
where the derivative is one-sided at the cap. Since $d\mapsto a_j(d)m_j(d)$ is nonincreasing and $\ell_j+k_j$ is convex, each summand $V_j(p_j)-k_j(p_j)$ is concave on its feasible interval. Hence KKT conditions are necessary and sufficient, and the optimal value is unique. The program separates across channels, so uniqueness of the allocation is decided channel by channel: if $a_jm_j$ is strictly decreasing or $\ell_j+k_j$ is strictly convex, the summand for channel $j$ is strictly concave and its maximizer is unique; if channel $j$ has a flat marginal value and linear costs on an interval, every point of that interval is optimal for that channel regardless of what holds elsewhere. Strictness on one channel therefore does not deliver uniqueness of the whole allocation, which requires strictness on every active channel.

For an interior active channel, stationarity is $V'_j(p_j^\star)-k'_j(p_j^\star)=0$, giving \eqref{eq:waterfilling-foc}. If $V'_j(0)\le k'_j(0)$, the channel is dark; if $V'_j(\bar p_j)\ge k'_j(\bar p_j)$, it saturates. For the shared-budget variant, maximizing $\sum_j V_j(p_j)$ subject to $\sum_j k_j(p_j)\le B_a$, the Lagrangian is
\[
\sum_j V_j(p_j)-\lambda\sum_j k_j(p_j)+\lambda B_a ,
\]
so interior active channels satisfy $V'_j(p_j^\star)=\lambda k'_j(p_j^\star)$; equivalently, they equalize net marginal deterrence value per marginal dollar,
\begin{equation}\label{eq:waterfilling-budget}
\frac{\Delta\,a_j(p_j^\star\Delta)m_j(p_j^\star\Delta)-\ell_j'(p_j^\star)}
{k_j'(p_j^\star)}=\lambda .
\end{equation} If $k_j(p)=\beta_jp$, the marginal value per dollar is $[\Delta a_j(p_j\Delta)m_j(p_j\Delta)-\ell_j'(p_j)]/\beta_j$, so an optimal greedy allocation spends each next dollar on the channel with highest current net marginal value per dollar. Because this value declines as $p_j$ rises, several channels can be partially funded; the KKT equalization condition remains the correct characterization. \qed

\subsection{Proof of Corollary~\ref{cor:outcome}}

\begin{proof}
Write $\mu_j(d):=a_j(d)m_j(d)$ and $M_j(x):=\int_x^{\bar D_j}\mu_j(d)\,dd$. From \eqref{eq:cum-outcome}, for $0<p<\bar p_j$,
\begin{align*}
\frac{d}{dp}V^{\mathrm{out}}_j(p)&=\Delta\mu_j(p\Delta)+M_j(p\Delta)-p\Delta\mu_j(p\Delta)-\ell_j'(p)\\
&=(1-p)\Delta\mu_j(p\Delta)+M_j(p\Delta)-\ell_j'(p),
\end{align*}
which is \eqref{eq:outcome-foc} once set equal to $k_j'(p)$. Where $\mu_j$ is differentiable,
\begin{align*}
\frac{d^2}{dp^2}V^{\mathrm{out}}_j(p)&=-\Delta\mu_j(p\Delta)+(1-p)\Delta^2\mu_j'(p\Delta)-\Delta\mu_j(p\Delta)-\ell_j''(p)\\
&=-2\Delta\mu_j(p\Delta)+(1-p)\Delta^2\mu_j'(p\Delta)-\ell_j''(p)\le0,
\end{align*}
since $\mu_j\ge0$, $\mu_j'\le0$, $p\le1$, and $\ell_j$ is convex; at a downward jump of $\mu_j$ the first derivative drops, which preserves concavity. So $V^{\mathrm{out}}_j-k_j$ is concave and the KKT conditions are again necessary and sufficient.

For the comparison, both $V_j-k_j$ and $V^{\mathrm{out}}_j-k_j$ are concave in $p$, so the outcome optimum lies weakly above an interior deterrence optimum $p^{\mathrm{det}}$ iff the outcome objective is nondecreasing at $p^{\mathrm{det}}$. At that point the deterrence first-order condition gives $\Delta\mu_j(p^{\mathrm{det}}\Delta)=\ell_j'+k_j'$, so
\[
\frac{d}{dp}\big[V^{\mathrm{out}}_j-k_j\big](p^{\mathrm{det}})
=M_j(p^{\mathrm{det}}\Delta)-p^{\mathrm{det}}\Delta\mu_j(p^{\mathrm{det}}\Delta),
\]
whose sign is \eqref{eq:outcome-direction}. The uniform-audit statement is the same computation with $U^{\mathrm{out}}(p)=v[1-G(D/p)]+vpG(D/p)-c_ap$: differentiating,
\[
U^{\mathrm{out}\prime}(p)=v(1-p)h(D/p)D/p^2+vG(D/p)-c_a ,
\]
and at the deterrence optimum $p^\star$, where $vh(D/p^\star)D/(p^\star)^2=c_a$, this equals $vG(D/p^\star)-c_ap^\star$.
\end{proof}

\subsection{Proof of Theorem~\ref{thm:quantile-audit}}

For $p>0$, type $t$ is deterred iff $p\Delta_t\ge D$, equivalently $\Delta_t\ge D/p$. Thus the deterred mass is $1-G(D/p)$ and the objective is \eqref{eq:Uquantile} on the feasible interval. Differentiating gives
\[
U'(p)=v h(D/p)\frac{D}{p^2}-c_a .
\]
The stated single-crossing assumption makes this derivative cross zero once from positive to negative on the interior interval, so any interior maximizer is unique and satisfies the first-order condition. If the root falls outside the feasible interval, quasiconcavity puts the optimum at the corresponding boundary. Setting $\Delta_m=D/p^\star$ gives the equivalent marginal-type equation.

For robustness, only the term $v[1-G(D/p)]$ depends on the distribution. Therefore for every $p$,
\[
|U_G(p)-U_{\hat G}(p)|\le v\|G-\hat G\|_\infty .
\]
Let $p^\star$ maximize the true objective and $\hat p$ maximize the estimated objective. Then
\begin{align*}
U_G(p^\star)-U_G(\hat p)
&\le [U_G(p^\star)-U_{\hat G}(p^\star)]
+[U_{\hat G}(\hat p)-U_G(\hat p)]\\
&\le 2v\|G-\hat G\|_\infty ,
\end{align*}
where optimality of $\hat p$ makes the omitted middle term nonpositive. \qed

\subsection{Proof of Theorem~\ref{thm:rentsuff}}

Fix $(w,\theta,\pi)$ and write $\xi_t=\big(\PhiI(t),(D_{j,t})_{j\in\Gset},\Delta_t,y_t,y^I_t\big)\in\Xi$ for the type's rent coordinates, whose law under $\rho$ is the rent profile $\nu_{w,\theta}$. A type with $g_{0,t}\le0$ is accepted and inert, contributing a label-only constant absorbed into $\Psi$; fix a strategic type $g_{0,t}>0$.

\emph{Step 1 (best response is a function of $\xi_t$ and $\pi$).} By Theorem~\ref{thm:portfolio} under \textnormal{(Lin)} the agent's value is the largest of three branch values, which we write directly in the rent coordinates:
\[
A:=0,\ \ H(\xi):=u_+-\PhiI,\ \ G(\xi,\pi):=(u_+-\PhiI)+\max_{j\in\Gset}\big(D_j-p_j\Delta\big),
\]
using $g_{0,t}c^G_j/w_j=\PhiI(t)-D_{j,t}$ to rewrite the gaming value as $(u_+-\PhiI)+(D_j-p_j\Delta)$. Each of $A,H,G$ is continuous - indeed piecewise affine - in $\xi$ for fixed $\pi$, hence Borel measurable on $\Xi$, and the maximizing branch (and, within $G$, the maximizing channel $j^\star$) is a measurable selection. Thus the agent's action class $\alpha(\xi,\pi)\in\{\mathrm{abstain},\mathrm{honest},\,\mathrm{game}\text{-}j^\star\}$ depends on the classifier only through $\xi$.

\emph{Step 2 (the indifference set is null under a non-degenerate population).} Let $T_\pi$ be the set of $\xi\in\Xi$ at which two of $A,H,G$ attain the maximum, or two channels attain the inner $\max_{j}(D_j-p_j\Delta)$.
$T_\pi$ is a finite union of the affine slices $\{A=H\}$, $\{A=G\}$, $\{H=G\}$, and $\{D_i-p_i\Delta=D_j-p_j\Delta\}$ for $i\ne j$, each of codimension at least one in the $\big(\PhiI,(D_j)_j,\Delta\big)$ coordinates. We assume the population is \emph{non-degenerate}: $\nu_{w,\theta}$ assigns zero mass to every such indifference hyperplane (equivalently $\rho(\{t:\xi_t\in T_\pi\})=0$). We flag explicitly that mere atomlessness of $\rho$ is \emph{not} sufficient for this - an atomless law can still concentrate on a lower-dimensional affine set - whereas non-degeneracy holds whenever the law of $\big(\PhiI(t),(D_{j,t})_j,\Delta_t\big)$ is absolutely continuous in those coordinates, the maintained regularity in our running specifications. On $\Xi\setminus T_\pi$ the maximizing branch and channel are unique, so $\alpha(\cdot,\pi)$ is single-valued there.

\emph{Step 3 (tie-breaking is immaterial).} On the null set $T_\pi$ we fix the Stackelberg leader-favorable selection: among the agent's maximizers, $\alpha$ returns the action the firm most prefers. This is a measurable selection (a fixed priority order on the finite action set, intersected with the measurable $\arg\max$), and it makes the leader's supremum attained. Because $\rho(T_\pi)=0$, any other measurable tie-break agrees with it $\rho$-a.e.\ and yields the same value of every integral below; the leader-favorable choice is for definiteness only.

\emph{Step 4 (the per-type kernel is bounded and measurable).} The firm's realized per-type payoff is determined by the action class together with the labels carried in $\xi$: an abstainer is scored by $y_t$, an honest improver by the post-improvement label $y^I_t$ (two classifiers with equal $\PhiI(t)$ may load different improvable coordinates and so differ in $y^I_t$, which is exactly why $y^I_t$ is carried in $\xi_t$ and is audit-independent), an accepted gamer is a false positive (gaming leaves the true label unchanged), and a caught gamer is a correct rejection. Only the audit coin is stochastic, and only on the played channel $j^\star$: gaming yields the correct-rejection payoff with probability $p_{j^\star}$ and the false-positive payoff with probability $1-p_{j^\star}$. Writing the finitely many outcome payoffs as bounded constants, define
\[
f(\xi,\pi):=
\begin{cases}
F_{\mathrm{abs}}(y) & \alpha(\xi,\pi)=\mathrm{abstain},\\
F_{\mathrm{imp}}(y^I) & \alpha(\xi,\pi)=\mathrm{honest},\\
p_{j^\star}\,r_{\mathrm{caught}}+(1-p_{j^\star})\,r_{\mathrm{FP}} & \alpha(\xi,\pi)=\mathrm{game}\text{-}j^\star .
\end{cases}
\]
As a finite sum $\sum_{a}\indic\{\alpha(\xi,\pi)=a\}\cdot(\text{branch value})$ of products of Borel functions (indicators of the measurable action cells times continuous branch values), $f(\cdot,\pi)$ is Borel; and since all outcome payoffs lie in a bounded range $[\underline r,\overline r]$ with $\bar a:=\overline r-\underline r<\infty$, $f$ is bounded, $|f|\le\max(|\underline r|,|\overline r|)$. By construction the audit coin is the only randomness left, so $\E\big[F(\mathrm{out}_t,y_t)\mid t\big]=f(\xi_t,\pi)$ for every strategic type, and the same identity holds for inert types with the label-only branch. Therefore $\E[F(\mathrm{out}_t,y_t)]=f(\xi_t,\pi)$ for $\rho$-a.e.\ $t$.

\emph{Step 5 (integration).} By additive separability of the firm objective,
\[
\Pio(w,\theta,\pi)=\int f(\xi_t,\pi)\,d\rho(t)=\int f(\xi,\pi)\,d\nu_{w,\theta}(\xi)=:\Psi(\nu_{w,\theta},\pi),
\]
since the integrand depends on $t$ only through $\xi_t$, whose law is $\nu_{w,\theta}$. This is rent-sufficiency, and \eqref{eq:two-stage} is the chained maximization. Exact decoupling requires $\arg\max_\pi\Psi(\nu_{w,\theta},\pi)$ to be invariant in $(w,\theta)$, i.e. $\nu_{w,\theta}$ constant; but $\nu_{w,\theta}$ is built from $g_{0,t}=\theta-\inn{w}{x^0_t}$, which moves with every weight and with $\theta$, so rent-invariant directions are non-generic. \qed

\subsection{Approximate rent-sufficiency under near-support detection}\label{app:neardecouple}

The factorization of Theorem~\ref{thm:rentsuff} is stated under support-based detection, and Proposition~\ref{prop:magnitude} shows it can fail under magnitude-dependent detection. We now show the failure is graceful: near the support-based regime the factorization degrades continuously, and the firm value tracks its rent-profile surrogate to first order in the magnitude sensitivity.

Index magnitude-dependent detection by a sharpness $\beta\in(0,\infty]$ and write
\begin{equation}\label{eq:sharp-family}
p_j(\gamma_j;\beta)=\bar p_j\,\sigma(\beta\, w_j\gamma_j/g_0),\qquad \bar p_j:=p_j(g_0/w_j;\infty),
\end{equation}
where $\sigma:[0,\infty)\to[0,1]$ is nondecreasing with $\sigma(0)=0$, $\sigma(x)\uparrow1$ as $x\to\infty$, and $\sigma(x)=1$ for $x\ge1$ (full single-channel cover is detected at the nameplate rate $\bar p_j$). The exemplar is $\sigma(x)=1-e^{-x^2}$ of Figure~\ref{fig:robustness}(b). The magnitude sensitivity is $1/\beta$: as $\beta\to\infty$, $p_j(\gamma_j;\beta)\to\bar p_j$ for every fixed $\gamma_j>0$ and \eqref{eq:sharp-family} recovers the support-based charge $P_S=1-\prod_{j\in S}(1-\bar p_j)$. Let $d:=|\Gset|$ be the gameable dimension.

\begin{lem}[No vanishing fake; detection floor]\label{lem:floor}
If a fake covers score mass $\inn{w}{\gamma}\ge\tau>0$, then some coordinate $k$ has $w_k\gamma_k\ge\tau/d$, and its detection satisfies $P_S(\gamma)\ge p_k(\gamma_k;\beta)\ge\bar p_k\,\sigma(\beta\tau/(dg_0))$. In particular, a fake that covers the whole gap on its own (the pure-gaming case $\tau=g_0$) places a coordinate at normalized magnitude $\ge1/d$, so
\begin{equation}\label{eq:detfloor}
P_S(\gamma)\ \ge\ \bar p_k\,\sigma(\beta/d)\ \xrightarrow[\beta\to\infty]{}\ \bar p_k .
\end{equation}
\end{lem}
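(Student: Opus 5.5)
The argument has three short steps: pigeonhole, monotonicity of the product form, and specialization to $\tau=g_0$.

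\emph{Step 1 (pigeonhole).} Let $S=S(\gamma)$ be the support of the fake. Since $\gamma$ vanishes off $S\subseteq\Gset$ and $|\Gset|=d$, we have
\[
\tau\le\inn{w}{\gamma}=\sum_{j\in S}w_j\gamma_j\le |S|\max_{j\in S}w_j\gamma_j\le d\,\max_{j\in S}w_j\gamma_j .
\]
Every term is nonnegative because of \textnormal{(W$>$0)} and $\gamma\ge0$. So the coordinate $k$ attaining the maximum satisfies $w_k\gamma_k\ge\tau/d>0$, and in particular $k\in S$.

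\emph{Step 2 (monotonicity of detection).} Each factor $1-p_i(\gamma_i;\beta)$ lies in $[0,1]$. Dropping every factor except the $k$-th can therefore only increase the product, which gives
\[
1-\prod_{i\in S}(1-p_i)\ \ge\ 1-(1-p_k)=p_k(\gamma_k;\beta).
\]
Next we use \eqref{eq:sharp-family}. The argument of $\sigma$ is $\beta w_k\gamma_k/g_0\ge\beta\tau/(dg_0)$, and $\sigma$ is nondecreasing, so
\[
p_k(\gamma_k;\beta)=\bar p_k\,\sigma(\beta w_k\gamma_k/g_0)\ge\bar p_k\,\sigma(\beta\tau/(dg_0)).
\]
This is the first claim.

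\emph{Step 3 (pure gaming and the limit).} Set $\tau=g_0$. The coordinate $k$ then has normalized magnitude $w_k\gamma_k/g_0\ge1/d$, and Step 2 gives $P_S(\gamma)\ge\bar p_k\,\sigma(\beta/d)$. As $\beta\to\infty$ with $d$ fixed, $\sigma(\beta/d)\uparrow1$. Once $\beta\ge d$, the normalization $\sigma(x)=1$ for $x\ge1$ makes the bound equal $\bar p_k$ exactly, which is \eqref{eq:detfloor}.

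One subtlety deserves care: $k$ depends on $\gamma$. The limit statement should be read for fixed $k$, or else uniformly as $P_S(\gamma)\ge\min_j\bar p_j\,\sigma(\beta/d)$. I would state the uniform version as well, since that is the form Proposition~\ref{prop:neardecouple} will need.
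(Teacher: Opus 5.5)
Your proposal is correct and follows the paper's proof essentially step for step: a pigeonhole bound on $\sum_j w_j\gamma_j$ (you pick the maximizing coordinate directly, where the paper argues by contradiction), then $P_S(\gamma)\ge p_k(\gamma_k;\beta)$ by dropping factors from the product together with monotonicity of $\sigma$, and finally the specialization $\tau=g_0$. Your two extra remarks are correct but not needed for the lemma as stated: the bound equals $\bar p_k$ exactly once $\beta\ge d$, and the uniform version with $\min_j\bar p_j$ holds.
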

\begin{proof}
If every coordinate had $w_j\gamma_j<\tau/d$, then $\inn{w}{\gamma}=\sum_{j}w_j\gamma_j<d\cdot\tau/d=\tau$, a contradiction. For the floor channel, $w_k\gamma_k/g_0\ge\tau/(dg_0)$ and $\sigma$ nondecreasing give $p_k(\gamma_k;\beta)=\bar p_k\sigma(\beta w_k\gamma_k/g_0)\ge\bar p_k\sigma(\beta\tau/(dg_0))$; since $P_S(\gamma)=1-\prod_j(1-p_j)\ge p_k(\gamma_k;\beta)$, the bounds follow, and $\tau=g_0$ gives \eqref{eq:detfloor}.
\end{proof}

The lemma is the mechanism behind the robustness: at a \emph{fixed} gameable dimension $d$, an agent cannot dilute a fake into undetectability by spreading it, because covering the score gap forces at least one coordinate to a magnitude $\ge g_0/(dw_k)$, at which sharp detection is near the nameplate rate. Diversification is the only channel through which magnitude dependence breaks Theorem~\ref{thm:portfolio} (Proposition~\ref{prop:magnitude}), and \eqref{eq:detfloor} caps the gain it can deliver.

\begin{prop}[Near-decoupling]\label{prop:neardecouple}
Assume \textnormal{(Lin)}, \textnormal{(W$>$0)}, an additively separable firm objective, a non-degenerate population in the sense of Theorem~\ref{thm:rentsuff} (the law of the rent coordinates charges no agent-indifference hyperplane; atomlessness alone is not enough, since behavior can jump on such a hyperplane and an atomless law can still concentrate mass there), and the detection family \eqref{eq:sharp-family}. Let $\Pio^{\beta}(w,\theta,\pi)$ be the firm value under sensitivity $1/\beta$ and let $\Psi(\nu_{w,\theta},\pi)$ be the support-based surrogate of Theorem~\ref{thm:rentsuff}. Then for $\rho$-a.e. type the gaming value $V_\beta(t)$ converges up to its support-based value $V_\infty(t)$ as $\beta\to\infty$, and
\begin{equation}\label{eq:neardecouple}
\Pio^{\beta}(w,\theta,\pi)\ \xrightarrow[\beta\to\infty]{}\ \Psi(\nu_{w,\theta},\pi)\qquad\text{for every }(w,\theta,\pi).
\end{equation}
Equivalently, rent-sufficiency holds approximately, with an error that vanishes in the magnitude sensitivity $1/\beta$; exact rent-sufficiency (Theorem~\ref{thm:rentsuff}) is the $\beta=\infty$ specialization.
\end{prop}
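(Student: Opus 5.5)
The argument runs type by type: show that each type's best response converges as $\beta\to\infty$, then pass the limit through the integral by dominated convergence.

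Fix $(w,\theta,\pi)$ and a strategic type $t$. Write $V_\beta(t)$ for the optimal value of the gaming branch, $\min$ over covering $(\delta,\gamma)$ with $\gamma\neq0$ of $C_I+C_G+\Delta P_\beta(\gamma)$. Write $V_\infty(t)$ for the support-based analogue $\min_{S\neq\varnothing}[\Phi(S)+P_S\Delta]$ from \eqref{eq:PhiM-decomp}. Since $\sigma\le1$, we have $p_j(\gamma_j;\beta)\le\bar p_j$ pointwise, so $P_\beta(\gamma)\le P_{S(\gamma)}$ and hence $V_\beta(t)\le V_\infty(t)$ for every $\beta$. Because $\sigma$ is nondecreasing, $V_\beta$ is also nondecreasing in $\beta$.

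For the matching lower bound, take any near-minimizer $(\delta^\beta,\gamma^\beta)$ of the $\beta$-problem. Split on how much score mass the fake carries, using a threshold $\tau\in(0,g_0]$.

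\emph{Case $\langle w,\gamma^\beta\rangle\ge\tau$.} Lemma~\ref{lem:floor} gives a coordinate $k$ with $P_\beta(\gamma^\beta)\ge\bar p_k\sigma(\beta\tau/(dg_0))$. This is not yet enough, because $V_\infty$ charges the whole support, not just one channel. Under (Lin), I will instead prune $\gamma^\beta$ to its coordinates with $w_j\gamma_j\ge\eta$. The pruned-away part carries score mass at most $d\eta$, and I cover it by honest improvement or by the cheapest kept channel at extra linear cost $O(d\eta)$. Every retained coordinate then has detection at least $\bar p_j\sigma(\beta\eta/g_0)$. Letting $\beta\to\infty$ and then $\eta\to0$ gives $\liminf V_\beta\ge V_\infty$ on this case.

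\emph{Case $\langle w,\gamma^\beta\rangle<\tau$.} The fake is small, so the plan is covered almost entirely by improvement. Its cost is therefore at least $\PhiI-O(\tau)$. This branch converges to the honest value as $\tau\to0$; it is absorbed into the $\max$ with $H$ and does not count as gaming in the limit.

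Combining the two cases, the agent's overall value $\max\{0,H,u_+-V_\beta\}$ converges to its $\beta=\infty$ counterpart.

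Next, convert convergence of values into convergence of actions. For $\xi_t\notin T_\pi$, the support-based maximizing branch and channel are strict, with a positive gap. Once $\beta$ is large, the $\beta$-optimal action class therefore coincides with $\alpha(\xi_t,\pi)$ from Theorem~\ref{thm:rentsuff}. In the gaming case, the realized detection probability also converges to $\bar p_{j^\star}$. The near-minimizer concentrates on $j^\star$ (non-singleton supports are strictly worse at $\beta=\infty$ off $T_\pi$), and the magnitude there is $g_0/w_{j^\star}$, where $\sigma=1$. Hence the per-type payoff $f_\beta(t)\to f(\xi_t,\pi)$ for every type off $T_\pi$. By non-degeneracy this holds $\rho$-a.e.

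Finally, all payoffs are uniformly bounded, so dominated convergence together with additive separability gives $\Pio^\beta=\int f_\beta\,d\rho\to\int f(\xi_t,\pi)\,d\rho=\Psi(\nu_{w,\theta},\pi)$. This is \eqref{eq:neardecouple}.

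The main obstacle is the lower bound $\liminf V_\beta\ge V_\infty$ in the first case. Lemma~\ref{lem:floor} controls only one channel, while a diversified fake may carry many tiny components with vanishing detection. The pruning step must show that shedding these components costs only $O(d\eta)$ uniformly in $\beta$. That is where (Lin) and the fixed dimension $d$ are used, and where I would be careful with the order of limits ($\beta$ first, then $\eta$).
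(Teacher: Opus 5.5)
Your argument is correct. It has the same skeleton as the paper's proof: the pointwise bound $P_\beta\le P_{S(\gamma)}$ gives one direction, strict gaps off $T_\pi$ pass from values to action classes, and bounded convergence finishes. Where you differ is in how the reverse inequality is proved.

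\emph{The paper's route.} The paper argues softly, by contradiction. It takes minimizers on a compact feasible set and extracts a convergent subsequence $a_n\to a^\star$. It then drops from the detection product every coordinate outside the limit support $S^\star$. Using $p_j(\gamma_{n,j};\beta_n)\to\bar p_j$ on $S^\star$ and continuity of movement cost, it gets $\liminf_n J_{\beta_n}(a_n)\ge J_\infty(a^\star)$.

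\emph{Your route.} Your threshold set $\{j:w_j\gamma_j\ge\eta\}$ is a finite-$\beta$ stand-in for $S^\star$. Re-covering the discarded mass under (Lin) makes the idea quantitative for every plan: the pruned plan's support-based cost exceeds its $\beta$-cost by at most $O(d\eta)+O\big(d\Delta(1-\sigma(\beta\eta/g_0))\big)$. With the normalization $\sigma=1$ on $[1,\infty)$ and the choice $\eta=g_0/\beta$, this is an explicit $O(d\,g_0/\beta)$ bound on each type's value gap. It needs no compactness and no attained minima. The paper obtains only convergence; Lemma~\ref{lem:floor} gives a rate only for fakes that carry a fixed share of the gap.

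\emph{What you should add.} The compactness route yields almost for free that limit points of $\beta$-minimizers minimize $J_\infty$, and this is what the action step uses. In your version, say explicitly that the pruned plan is an $O(\eta)$-minimizer of the support-based problem. Then, off $T_\pi$, for small $\eta$ and large $\beta$, its support is $\{j^\star\}$ in the gaming case and empty in the honest case. With that in place, your case split and your appeal to Lemma~\ref{lem:floor} are redundant, because an empty pruned support already compares the plan with $\PhiI$.

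\emph{One line owed on the honest branch.} The paper glosses this point too. If $\sigma'(0^+)=0$, as for $1-e^{-x^2}$, a type with positive rent strictly gains at every finite $\beta$ by improving to just short of the bar and faking the residual. So its $\beta$-optimal plan is never literally honest. Your bound shows that this plan's fake mass and detection probability vanish: its movement cost is at least $\PhiI-O(d\eta)$, while its total cost is at most $\PhiI$. Even so, $f_\beta\to f$ on this branch additionally requires the post-move label of such a plan to tend to $y^I_t$.
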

\begin{proof}
Fix a strategic type $t$ and work on its compact feasible set $K_t=\{(\delta,\gamma)\ge0:\inn{w}{\delta+\gamma}\ge g_{0,t}\}$ intersected with the box $\{\text{movement cost}\le u_+\}$ (beyond which abstention dominates), which is compact under \textnormal{(Lin)}. The objective $J_\beta(\delta,\gamma)=C_I(\delta)+C_G(\gamma)+\Delta_t P_S(\gamma;\beta)$ has $V_\beta(t)=u_+-\min_{K_t}J_\beta$ on the gaming branch.

\emph{Monotone direction.} Since $\sigma\le1$, $p_j(\gamma_j;\beta)\le\bar p_j$ for $\gamma_j>0$, so $P_S(\cdot;\beta)\le P_S(\cdot;\infty)$ pointwise and $J_\beta\le J_\infty$, giving $V_\beta(t)\ge V_\infty(t)$.

\emph{Upper limit.} Suppose, for contradiction, $\limsup_\beta V_\beta(t)\ge V_\infty(t)+2\eta$ for some $\eta>0$ along $\beta_n\to\infty$ with minimizers $a_n=(\delta_n,\gamma_n)\in K_t$. By compactness pass to $a_n\to a^\star=(\delta^\star,\gamma^\star)\in K_t$, feasible by closedness. Let $S^\star=\mathrm{supp}(\gamma^\star)$, and bound detection below by restricting the product to the non-vanishing coordinates (omitted factors are at most $1$):
\[
P_{S}(\gamma_n;\beta_n)=1-\prod_{j}\big(1-p_j(\gamma_{n,j};\beta_n)\big)\ \ge\ 1-\prod_{j\in S^\star}\big(1-p_j(\gamma_{n,j};\beta_n)\big).
\]
For $j\in S^\star$, $\gamma_{n,j}\to\gamma^\star_j>0$, so $p_j(\gamma_{n,j};\beta_n)=\bar p_j\sigma(\beta_n w_j\gamma_{n,j}/g_{0})\to\bar p_j$, and the right side $\to 1-\prod_{j\in S^\star}(1-\bar p_j)=P_S(\gamma^\star;\infty)$, with the convention $P_\varnothing=0$ when $S^\star=\varnothing$ (the case in which the fake vanishes in the limit and $a^\star$ is a pure improvement/abstention action). Either way $\liminf_n P_S(\gamma_n;\beta_n)\ge P_S(\gamma^\star;\infty)$. With continuity of $C_I,C_G$,
\begin{align*}
\limsup_n V_{\beta_n}(t)&=u_+-\liminf_n J_{\beta_n}(a_n)\\
&\le u_+-\big(C_I(\delta^\star)+C_G(\gamma^\star)+\Delta_t P_S(\gamma^\star;\infty)\big)\\
&=u_+-J_\infty(a^\star)\le V_\infty(t),
\end{align*}
the last step since $a^\star\in K_t$ is feasible for the support-based problem. This contradicts the assumption, so $V_\beta(t)\to V_\infty(t)$ for every strategic type (and trivially for inert and honest-optimal types, whose values do not depend on $\beta$). Convergence of values does not by itself give convergence of actions: a type whose support-based gaming value ties its honest value can switch behavior discontinuously at every finite $\beta$. This is where non-degeneracy enters. Off the indifference hyperplanes the support-based maximizer is unique and strict, so for $\beta$ large the $\beta$-maximizer lies in the same action class, and the per-type payoff converges; the indifference set is $\rho$-null by assumption. Lemma~\ref{lem:floor} sharpens the rate: whenever the agent games to cover a non-vanishing share $\tau$ of its gap, some coordinate carries detection $\ge\bar p_k\sigma(\beta\tau/(dg_0))$, so the gaming value exceeds its support-based counterpart by at most $\Delta_t\,\bar p_{\max}\big(1-\sigma(\beta\tau/(dg_0))\big)$.

\emph{Integration.} Per-type firm payoffs are bounded by the payoff range $\bar a<\infty$, and the played channel's detection enters the payoff continuously; thus the per-type expected payoff $f_\beta(\xi_t,\pi)\to f_\infty(\xi_t,\pi)=f(\xi_t,\pi)$ for $\rho$-a.e.\ $t$. By bounded convergence,
$\Pio^{\beta}=\int f_\beta\,d\rho\to\int f\,d\rho=\Psi(\nu_{w,\theta},\pi)$ for each fixed classifier. (Convergence is pointwise in $(w,\theta,\pi)$, not uniform: a type that improves to just below the bar and fakes a vanishing residual is caught with probability $\to\bar p_j$ only as $\beta\to\infty$, and the mass of such near-bar gamers depends on the rent profile the classifier induces. The error is nonetheless controlled on the deterrence-relevant mass: a type that fakes to cover a share $\tau$ of its gap contributes a detection discrepancy at most $\bar p_{\max}(1-\sigma(\beta\tau/(dg_0)))$ by Lemma~\ref{lem:floor}.)
\end{proof}

The convergence is fast and the regime where it bites is narrow. In the deterministic simulations (\texttt{fig\_near\_decoupling.py}, with the worst-case heterogeneous instances found by random search) the firm-value gap decays \emph{super-linearly} in $1/\beta$ and the singleton becomes \emph{exactly} optimal at a finite sharpness ($\beta\approx12$ for $d\in\{2,3\}$): once full-cover fakes are reliably detected, no split survives. The bound degrades only when $\beta=O(1)$ - when even full-magnitude manipulations evade detection - which is precisely the regime in which Figure~\ref{fig:robustness}(b) shows the audit dividend itself collapsing. The two-stage design of Theorem~\ref{thm:rentsuff} is therefore not an isolated knife-edge but the $\beta=\infty$ end of a continuum on which it holds approximately. \qed

\section{Computational proofs}\label{app:computation}

Throughout this appendix, \emph{leader-favorable tie-breaking} means that whenever an agent is indifferent among optimal actions (including indifference between two gaming channels), the realized action is one maximizing the firm's payoff; this is the standard Stackelberg selection and is the convention under which weak deterrence inequalities ($p_j\Delta_t\ge D_{j,t}$) deter.

\subsection{Proof of Proposition~\ref{prop:vertex}}

\emph{Covering characterization.} By Theorem~\ref{thm:portfolio} under \textnormal{(Lin)}, type $t$'s best gaming value is
\[
u_+-\min_{j\in\Gset_t}\big(g_{0,t}c^G_{j,t}/w_j+p_j\Delta_t\big),
\]
and its best honest value is $u_+-\PhiI(t)\ge0$ by honest participation. Gaming is (weakly) suboptimal iff $g_{0,t}c^G_{j,t}/w_j+p_j\Delta_t\ge\PhiI(t)$ for every $j$, i.e.\ $p_j\Delta_t\ge D_{j,t}$ for every $j\in J_t$ (channels outside $J_t$ have $D_{j,t}\le0$ and are never strictly profitable).

\emph{Action patterns are constant on cells.} An \emph{action pattern} $\sigma$ assigns each type an action in $\{\text{improve},\text{abstain}\}\cup\{\text{game-}j:j\in\Gset_t\}$. By honest participation, improving is weakly optimal whenever abstaining is, at every $p$; the improve-vs-abstain comparison and the firm's preference between the two are $p$-independent, so we may restrict attention to patterns in which each type's non-gaming action is fixed once and for all (leader-favorably), and the gaming-vs-abstain comparison never binds before the gaming-vs-improve one. The set of audit profiles at which such a pattern $\sigma$ is weakly optimal for every type is
\[
R_\sigma=\bigcap_t\{p:\text{$\sigma(t)$'s value}\ge\text{every alternative's value at }p\},
\]
an intersection of weak inequalities each of which is, in $p$, either trivial, a deterrence half-space $p_j\Delta_t\gtrless D_{j,t}$, or a switch half-space $p_j\Delta_t-p_i\Delta_t\gtrless g_{0,t}(c^G_{i,t}/w_i-c^G_{j,t}/w_j)$. Hence each $R_\sigma$ is a closed polyhedron whose facets lie on hyperplanes of $\mathcal H$, intersected with the box and the budget half-space.

\emph{Affine objective on each $R_\sigma$.} Fix $\sigma$. The deterrence-value objective $\sum_{t:\sigma(t)\ne\text{game}}a_t$ is constant on $R_\sigma$. The outcome objective is
\[
V_\sigma(p)=\sum_{t:\sigma(t)\,\text{honest}}\text{const}_t
\;-\!\!\sum_{t:\sigma(t)=\text{game-}j(t)}\!\!(1-p_{j(t)})\,a_t ,
\]
affine in $p$. (The budget enters as a constraint; with an audit cost in the objective instead, subtract $\sum_j\kappa_jp_j$, which preserves affineness.)

\emph{Optimum at a vertex.} Let $V(p)$ denote the firm value under leader-favorable selection. For any $p$, the selected pattern $\sigma'$ is weakly optimal at $p$, so $p\in R_{\sigma'}$ and $V(p)=V_{\sigma'}(p)$; conversely for any $\sigma$ and $p\in R_\sigma$, $\sigma$ is weakly optimal at $p$, so the leader-favorable value satisfies $V(p)\ge V_\sigma(p)$. Therefore
\[
\sup_{p\ \text{feasible}}V(p)=\max_\sigma\ \max_{p\in R_\sigma\cap\text{feasible}}V_\sigma(p),
\]
a finite maximum of affine functions over compact polytopes; each inner maximum is attained at a vertex of $R_\sigma\cap\{\sum_j\kappa_jp_j\le B\}\cap[0,1]^m$. Every facet of this polytope lies on a hyperplane of $\mathcal H$ (including budget and box), so every vertex is a vertex of the arrangement $\mathcal H$.

\emph{Complexity.} $\mathcal H$ contains at most $Nm$ deterrence, $N\binom m2$ switch, $2m$ box, and one budget hyperplane: $H=O(Nm^2)$. Each vertex is the solution of $m$ independent linear equations chosen among them, so there are at most $\binom Hm=(Nm)^{O(m)}$ candidates; each is checked for feasibility and evaluated (computing every type's best response and the objective) in $O(Nm+m^3)$ time. Enumeration therefore solves \textsc{Audit} exactly in $(Nm)^{O(m)}$ time. \qed

\subsection{Proof of Theorem~\ref{thm:hardness}}

\emph{Reduction.} Let $(G=(V,E),k,M)$ be a Densest $k$-Subgraph instance: does some $S\subseteq V$ with $|S|=k$ induce at least $M$ edges? Build an \textsc{Audit} instance with one channel per vertex ($m=|V|$, $w_j=1$, $\kappa_j=1$) and one type $t_e$ per edge $e=\{u,v\}$: gameable support $\Gset_{t_e}=\{u,v\}$ with unit costs $c^G_{u,t_e}=c^G_{v,t_e}=1$, score deficit $g_{0,t_e}=1$, honest price $\PhiI(t_e)=2$ (one improvable coordinate of unit weight and cost $2$), acceptance utility $u_+=2$ (honest participation holds), differential $\Delta_{t_e}=2$, and value $a_{t_e}=1$. Then $D_{j,t_e}=2-1=1$ on both channels, so by the covering characterization $t_e$ is deterred iff $p_u\ge\tfrac12$ and $p_v\ge\tfrac12$. Set the budget $B=k/2$.

\emph{Value preservation.} If $S$ induces $M'$ edges, the profile $p=\tfrac12\indic_S$ is feasible ($\sum_jp_j=k/2$) and deters exactly the types of edges inside $S$: value $M'$. Conversely, let $p$ be feasible with deterrence value $M'$ and set $S'=\{j:p_j\ge\tfrac12\}$; feasibility gives $|S'|\le B/(1/2)=k$, every deterred type is an edge inside $S'$, and padding $S'$ to size $k$ only adds induced edges. Hence the optimal values of the two instances are \emph{equal}, and the answer to the decision problem is yes iff \textsc{Audit}'s optimum is at least $M$. NP-hardness follows since Densest $k$-Subgraph contains Clique ($M=\binom k2$). \qed

\subsection{Proof of Corollary~\ref{cor:inapprox}}

\emph{Approximation transfer.} Any $\rho$-approximate solution to \textsc{Audit} on the reduced instance converts, by the rounding $S'$ above, into a $k$-set with at least $\mathrm{OPT}/\rho$ induced edges, since the rounding never loses deterred types and the two optima coincide. The PTAS exclusion \cite{khot2006ptas} and the ETH-based ratio \cite{manurangsi2017dks} transfer verbatim; the ratio is stated in the number of channels $m=|V|$.

\emph{Nestedness under a common gaming technology.} Suppose all types share unit costs and weights and let $\tilde c_j:=c^G_j/w_j$. Then $D_{j,t}=\PhiI(t)-g_{0,t}\tilde c_j$ is strictly decreasing in $\tilde c_j$ (as $g_{0,t}>0$), so $J_t=\{j:\tilde c_j<\PhiI(t)/g_{0,t}\}$ is a prefix of the channels sorted by $\tilde c_j$, and any two prefixes are nested. Two distinct edges $\{u,v\}\ne\{u',v'\}$ would require two distinct two-element profitable sets, which are nested only if equal; hence the reduction's instances need heterogeneous gameable supports (or heterogeneous unit costs). \qed

\subsection{Proof of Proposition~\ref{prop:catch-credit}}

\begin{proof}
Take the instance of Theorem~\ref{thm:hardness}, in which $\sum_ta_t=|E|=:N$, and let $\lambda\le1/(2N)$. In that instance both channels of an edge-type have equal cost, so an undeterred edge-type $t_e$ games on the channel with the lower rate and earns credit $\lambda\min\{p_u,p_v\}\le\lambda$. For any feasible $p$ with deterrence value $M'$, the \textsc{Audit}$_\lambda$ value therefore lies in $[M',\,M'+\lambda N]\subseteq[M',\,M'+\tfrac12]$.

Let $\mathrm{OPT}$ be the Densest $k$-Subgraph optimum and $p^\lambda$ an optimal solution of \textsc{Audit}$_\lambda$ with deterrence value $M^\lambda$. The profile $\tfrac12\indic_S$ for an optimal $k$-set $S$ has \textsc{Audit}$_\lambda$ value at least $\mathrm{OPT}$, so $M^\lambda+\tfrac12\ge\mathrm{OPT}$, i.e.\ $M^\lambda\ge\mathrm{OPT}-\tfrac12$, and since both are integers $M^\lambda\ge\mathrm{OPT}$. The rounding $S'=\{j:p^\lambda_j\ge\tfrac12\}$ of the value-preservation step gives $M^\lambda\le\mathrm{OPT}$, so $M^\lambda=\mathrm{OPT}$ and the decision problem reduces as before: NP-hardness of \textsc{Audit}$_\lambda$ follows.

For approximation, a solution with \textsc{Audit}$_\lambda$ value at least $\mathrm{OPT}^\lambda/\rho\ge\mathrm{OPT}/\rho$ has deterrence value at least $\mathrm{OPT}/\rho-\tfrac12$, and rounding yields a $k$-set with that many induced edges. If $\mathrm{OPT}/\rho\ge1$ this is at least $\mathrm{OPT}/(2\rho)$; if $\mathrm{OPT}/\rho<1$, any single edge (which a $k$-set with $k\ge2$ can contain whenever $E\ne\varnothing$) already achieves ratio $\rho$. Hence a $\rho$-approximation for \textsc{Audit}$_\lambda$ gives a $2\rho$-approximation for Densest $k$-Subgraph, which preserves the ETH-based ratio of Corollary~\ref{cor:inapprox} up to the constant.

For $\lambda$ bounded away from zero the argument fails at the first step, since the catch credit can exceed the value of a deterred type. Exhaustive search on random graphs with at most six vertices confirms that at $\lambda=1$ the optimal profile can spread the budget so as to deter no type while the Densest $k$-Subgraph optimum is positive; the complexity of \textsc{Audit}$_1$ on general instances is open.
\end{proof}

\subsection{Proof of Proposition~\ref{prop:softbudget}}

\emph{Lattice reduction.} Write $\theta_{j,t}:=D_{j,t}/\Delta_t$ for $j\in J_t$ and let $L_j:=\{0\}\cup\{\theta_{j,t}:t\text{ with }j\in J_t,\ \theta_{j,t}\le1\}$. For any $p\in[0,1]^m$ let $\underline p_j:=\max\{\ell\in L_j:\ell\le p_j\}$. Every deterrence inequality $p_j\ge\theta_{j,t}$ that holds at $p$ still holds at $\underline p$, because $\theta_{j,t}\in L_j$ and $\theta_{j,t}\le p_j$ imply $\theta_{j,t}\le\underline p_j$; and $\kappa\cdot\underline p\le\kappa\cdot p$. So $\underline p$ is weakly better, and an optimum lies on $L:=\prod_jL_j$, which has $|L_j|\le N+1$ levels per coordinate.

\emph{Supermodularity.} Order $L$ coordinatewise, with $p\vee p'$ and $p\wedge p'$ the coordinatewise maximum and minimum. For a type $t$ with profitable set $J_t$ and threshold vector $\theta_t$, put $\chi_t(p):=\indic\{p_j\ge\theta_{j,t}\ \forall j\in J_t\}$, the indicator of an up-set of $L$. For any $p,p'$: if $\chi_t(p)=\chi_t(p')=1$ then $p\vee p'$ and $p\wedge p'$ both dominate $\theta_t$ on $J_t$, so both sides of $\chi_t(p\vee p')+\chi_t(p\wedge p')\ge\chi_t(p)+\chi_t(p')$ equal $2$; if exactly one of $\chi_t(p),\chi_t(p')$ is $1$ then $p\vee p'$ dominates $\theta_t$ and the left side is at least $1$; if both are $0$ the right side is $0$. Hence each $\chi_t$ is supermodular, so is $\sum_ta_t\chi_t$ with $a_t>0$, and the linear cost is modular. The objective $F(p)=\sum_ta_t\chi_t(p)-\kappa\cdot p$ is supermodular on $L$.

\emph{Polynomial-time maximization.} Encode $p\in L$ by the set $S(p):=\{(j,\ell):\ell\in L_j,\ \ell\le p_j\}$ over the ground set $E:=\{(j,\ell):j\in\Gset,\ \ell\in L_j\setminus\{0\}\}$, $|E|\le Nm$. The map $p\mapsto S(p)$ is a lattice isomorphism onto the family $\mathcal R$ of sets that are down-closed within each chain $L_j$: $S(p\vee p')=S(p)\cup S(p')$ and $S(p\wedge p')=S(p)\cap S(p')$. $\mathcal R$ is closed under union and intersection, i.e.\ a ring family, and $G(S(p)):=-F(p)$ is submodular on it by the previous step. Minimizing a submodular function over a ring family is strongly polynomial in $|E|$ and the cost of a value oracle \cite{schrijver2000sfm,iwata2001sfm,grotschel1988geometric}, and evaluating $F$ costs $O(Nm)$. Hence \textsc{Audit}$^{\mathrm{cost}}$ is solvable in time polynomial in $N$ and $m$.

\emph{Outcome objective.} Under the outcome objective an undeterred type contributes $p_{j(t)}a_t$, where $j(t)$ is the channel of lowest risk-adjusted price. Supermodularity fails: exhaustive checking of random three-channel instances with six types finds lattice pairs $(p,p')$ with $F^{\mathrm{out}}(p\vee p')+F^{\mathrm{out}}(p\wedge p')<F^{\mathrm{out}}(p)+F^{\mathrm{out}}(p')$, because raising one channel's rate can move a gamer onto a less audited channel and lower the catch term discontinuously. The complexity of the cost version under the outcome objective is open. \qed

\subsection{Proof of Theorem~\ref{thm:fixeddim}}

Write $z=(w,\theta,p)$ ranging over the compact box
\[
Z=[w_{\min},w_{\max}]^d\times[\theta_{\min},\theta_{\max}]\times[0,1]^{m},
\]
with $w_{\min}>0$ and $m=|\Gset|\le d$, so $\dim Z=d+1+m\le 2d+1$.

\emph{A polynomial family controlling behavior.} We assume the post-improvement label is given by a semialgebraic qualification rule of constant degree: whether type $t$ improving on coordinate $i$ to the bar becomes qualified is the sign of a polynomial $q_{t,i}(w,\theta)$ of bounded degree (e.g.\ a linear qualification threshold on true features gives, after clearing $w_i>0$, the degree-$2$ polynomial $\inn{v}{x^0_t}w_i+v_ig_{0,t}-\eta w_i$). For each type $t$ consider the polynomials in $z$:
(i) $g_{0,t}(w,\theta)=\theta-\inn w{x^0_t}$ (strategic vs.\ inert; degree $1$);
(ii) for $i,i'\in\Iset_t$: $c^I_{i,t}w_{i'}-c^I_{i',t}w_i$ (which improvable coordinate is cheapest; degree $1$);
(iii) for $i\in\Iset_t$: $u_+w_i-g_{0,t}c^I_{i,t}$ (participation; degree $1$);
(iv) for $i\in\Iset_t$, $j\in\Gset_t$: $g_{0,t}c^I_{i,t}w_j-g_{0,t}c^G_{j,t}w_i-p_j\Delta_t w_iw_j$ (gaming on $j$ vs.\ honest improvement via $i$, after clearing the positive denominators $w_iw_j$; degree $3$);
(v) for $j,j'\in\Gset_t$: $g_{0,t}c^G_{j,t}w_{j'}-g_{0,t}c^G_{j',t}w_j+(p_j-p_{j'})\Delta_t w_jw_{j'}$ -- after clearing denominators, the channel-switch comparison (degree $3$);
(vi) for $j\in\Gset_t$: $u_+w_j-g_{0,t}c^G_{j,t}-p_j\Delta_tw_j$ (gaming vs.\ abstention, needed when participation fails; degree $2$);
(vii) the qualification polynomials $q_{t,i}$.
This family $\mathcal P$ has $P=O(Nd^2)$ polynomials of bounded degree in $\le2d+1$ variables. The sign vector of $\mathcal P$ at $z$ determines, for every type, whether it is strategic, its cheapest improvable coordinate (hence $\PhiI(t)$ and, via (vii), the post-improvement label $y^I_t$), participation, whether gaming beats honest recourse and abstention on each channel, and the ordering of gaming channels - hence the entire population action pattern under any fixed tie-breaking, and in particular the rent profile's combinatorial type in Theorem~\ref{thm:rentsuff}.

\emph{Cell enumeration.} By the Basu-Pollack-Roy bound and algorithm for realizable sign conditions \cite{basu2006algorithms}, the polynomials in $\mathcal P$ realize at most $(P\cdot3)^{O(2d+1)}=(Nd)^{O(d)}$ sign conditions on $Z$, and a list of all realizable sign conditions can be computed in time $(Nd)^{O(d)}$.

\emph{Per-cell optimization.} Fix a realizable sign condition $\sigma$ and let $\bar S_\sigma\subseteq Z$ be its weak closure. On $\bar S_\sigma$ the action pattern is fixed (weakly optimal, by continuity of the comparisons), so as in the proof of Proposition~\ref{prop:vertex} the firm objective equals
\[
V_\sigma(p)=\text{const}_\sigma-\!\!\sum_{t\ \text{games on }j(t)}\!\!(1-p_{j(t)})a_t-\sum_j\kappa_jp_j ,
\]
affine in $p$ and independent of $(w,\theta)$: outcome values depend only on actions, labels (including $y^I_t$, fixed by $\sigma$), and detection probabilities. The cell optimum $\max_{z\in\bar S_\sigma}V_\sigma(p)$ is a linear optimization over a semialgebraic set described by $P$ polynomial inequalities of degree $\le3$ in $\le2d+1$ variables, solvable exactly (optimal value as an algebraic number together with a sample point) in time $P^{O(d)}=(Nd)^{O(d)}$ by the critical-point/optimization algorithms of \cite{basu2006algorithms}.

\emph{Correctness.} Under leader-favorable selection, for every $z$ the realized value is $V_{\sigma'}(p)$ for some pattern $\sigma'$ weakly optimal at $z$, whence $z\in\bar S_{\sigma'}$; conversely on each $\bar S_\sigma$ the realized value is at least $V_\sigma(p)$. Therefore $\sup_ZV=\max_\sigma\max_{\bar S_\sigma}V_\sigma$, the maximum over the enumerated cells of the per-cell optima, and it is attained because each $\bar S_\sigma$ is compact. Multiplying the cell count by the per-cell cost gives total time $(Nd)^{O(d)}$, polynomial in $N$ for fixed $d$. \qed

\paragraph{Remark (what is and is not claimed).}
Theorem~\ref{thm:fixeddim} is a fixed-dimension result; the exponent's dependence on $d$ is unavoidable in this generality, since Theorem~\ref{thm:hardness} already makes the inner allocation NP-hard when the number of channels grows. The algorithm is an enumeration certificate, not a practical procedure; its role is to show that the rent-profile factorization \eqref{eq:two-stage} reduces joint design to finitely many combinatorial rent patterns, each carrying a one-dimensional-per-channel allocation problem.

\section{Learning proofs}\label{app:learning}

Throughout, $V(p)=\E_q\big[a_t\indic\{t\text{ deterred at }p\}\big]-\kappa\cdot p$ with $a_t\in[0,1]$ and $\kappa_j\le1$, so realized payoffs lie in $[-m,1]$. Deterrence at $p$ means $p_j\Delta_t\ge D_{j,t}$ for every profitable channel $j\in J_t$ (Theorem~\ref{thm:portfolio}), a condition that is monotone in $p$: if $t$ is deterred at $p$ and $p'\ge p$ coordinatewise, $t$ is deterred at $p'$.

\subsection{Proof of Theorem~\ref{thm:learning}(i)}

Let $L:=\lceil T^{1/(m+2)}\rceil$, $w:=1/L\le T^{-1/(m+2)}$, and $G_w:=\{w,2w,\dots,1\}^m$, so $K:=|G_w|=L^m\le\big(T^{1/(m+2)}+1\big)^m\le e^{\,mT^{-1/(m+2)}}\,T^{m/(m+2)}\le e\,T^{m/(m+2)}$, the last step using $T\ge m^{m+2}$. For $p\in[0,1]^m$ let $\lceil p\rceil_w\in G_w$ round each coordinate up to the grid. By monotonicity every type deterred at $p$ is deterred at $\lceil p\rceil_w$, and the cost rises by at most $\sum_j\kappa_j(\lceil p_j\rceil_w-p_j)\le mw$; hence
\[
\max_{g\in G_w}V(g)\ \ge\ \max_{p\in[0,1]^m}V(p)-mw .
\]
Run Exp3 \cite{auer2002nonstochastic} over the $K$ grid points, feeding it the realized payoff of the point played, rescaled from $[-m,1]$ to $[0,1]$. Its expected regret against the best grid point is at most $(m+1)\cdot2\sqrt{(e-1)TK\ln K}$. Adding the discretization loss,
\begin{align*}
\mathrm{Reg}_T\ &\le\ Tmw+2(m+1)\sqrt{(e-1)\,T\,K\ln K}\\
&\le\ mT^{\frac{m+1}{m+2}}+2(m+1)\sqrt{e(e-1)}\;\sqrt{T\cdot T^{\frac{m}{m+2}}\cdot\ln\!\big(eT^{\frac{m}{m+2}}\big)} ,
\end{align*}
which is $O\big(m\,T^{(m+1)/(m+2)}\sqrt{\log T}\big)$. No step used the number of types or the form of $q$. \qed

\subsection{Proof of Theorem~\ref{thm:learning}(ii)}

\emph{Instance family.} Take $m=1$, $\Delta_t\equiv1$, $\kappa=1$, and $N:=\lceil T^{1/3}\rceil$ types with rents $D_j=j/N$ and values $a_j=1$, $j=1,\dots,N$; type $j$ is deterred iff $p\ge c_j:=j/N$. Under the base law $q^0$ (uniform on the $N$ types),
\[
V^0(p)=\tfrac1N\big\lfloor Np\big\rfloor-p\ \le\ 0,
\]
with equality exactly on the corners $\{c_0,c_1,\dots,c_N\}$, $c_0:=0$. For $i$ in the index set $I:=\{i:\ N/4\le i\le3N/4-1\}$ define $q^i$ by moving mass $\varepsilon$ from type $i+1$ to type $i$, where $\varepsilon:=\tfrac14\sqrt{|I|/T}$; this is a probability vector since $\varepsilon\le\tfrac14\sqrt{N/T}\le1/N$ once $N^3\le16T$, which holds for $T\ge8$, and $\varepsilon\le1/8$. Under $q^i$,
\[
V^i(p)=V^0(p)+\varepsilon\,\indic\{c_i\le p<c_{i+1}\},
\]
so $c_i$ is the unique maximizer with value $\varepsilon$, every other corner has value $0$, and every non-corner point is strictly worse than the corner to its left.

\emph{Reduction to finitely many arms.} Fix any $p\in[c_j,c_{j+1})$. The deterrence indicator observed at $p$ is Bernoulli with parameter $Q_j:=\sum_{k\le j}q_k$, the same as at $c_j$, and the realized payoff at $p$ is that at $c_j$ minus the known constant $p-c_j\ge0$. An algorithm playing $p$ can therefore be simulated by one playing $c_j$ with the same information and weakly smaller regret. It suffices to lower-bound regret over algorithms that play corners, i.e.\ an $(N+1)$-armed bandit in which arm $j$ pays $B_j-c_j$ with $B_j\sim\mathrm{Bernoulli}(Q_j)$.

\emph{Prefix sums.} Under $q^i$ the prefix sums satisfy $Q^i_j=Q^0_j$ for all $j\ne i$ and $Q^i_i=Q^0_i+\varepsilon$: moving mass between types $i$ and $i+1$ changes only the $i$-th prefix. So the $|I|$ laws $q^i$ differ from $q^0$ in the reward distribution of one arm each, by a Bernoulli shift of $\varepsilon$ at a parameter $Q^0_i\in[\tfrac14,\tfrac34]$, where $\mathrm{KL}\big(\mathrm{Ber}(Q)\,\|\,\mathrm{Ber}(Q+\varepsilon)\big)\le\varepsilon^2/\big(Q(1-Q-\varepsilon)\big)\le 8\varepsilon^2$ for $\varepsilon\le1/8$.

\emph{Change of measure.} This is the standard multi-armed lower-bound family: $|I|$ candidate arms, all with mean $0$ except one with mean $\varepsilon$, the identity of which is unknown. By the argument of \cite[Theorem 5.1]{auer2002nonstochastic} (see also \cite[Chapter 15]{lattimore2020bandit}), with the KL bound above in place of the one for parameter $\tfrac12$, any algorithm suffers expected regret at least $c_0\sqrt{|I|\,T}$ against a law $q^i$ chosen uniformly from $I$, for an absolute constant $c_0>0$ (the constant $1/20$ of \cite{auer2002nonstochastic} becomes $c_0=1/(20\sqrt8)$ after the KL rescaling). For $T\ge2^{12}$ we have $N\ge16$, so $|I|\ge N/2-2\ge N/4\ge T^{1/3}/4$, and $\mathrm{Reg}_T\ge c\,T^{2/3}$ with $c=c_0/2$. The values and the cost are at most $1$, and the instance has $\lceil T^{1/3}\rceil$ types. \qed

\subsection{Proof of Theorem~\ref{thm:learning}(iii)}

In the fixed-channel regime with nonincreasing bounded rent densities, channel $j$'s expected payoff $U_j(p_j):=V_j(p_j)-\kappa_jp_j$, with $V_j$ from \eqref{eq:cum-deterrence} and $\ell_j\equiv0$, is concave on $[0,\bar p_j]$ by Theorem~\ref{thm:auditdesign} and Lipschitz with constant $\Delta\sup_d a_jm_j(d)+\kappa_j$. The objective is additively separable, $\sum_jU_j(p_j)$, and the per-channel realized payoff is an unbiased, bounded observation of $U_j(p_j)$. Running a one-dimensional stochastic bandit convex optimization algorithm independently on each channel, for instance the algorithm of \cite[Section 3]{agarwal2011bandit}, gives regret $\tilde O(\sqrt T)$ per channel and $\tilde O(m\sqrt T)$ in total.

For the lower bound take $m=1$ and a linearly decreasing rent density, $a\,m(d)=\alpha-\beta d$ on $[0,\alpha/\beta]$. Then $U(p)=\alpha\Delta p-\tfrac12\beta\Delta^2p^2-\kappa p$, a concave quadratic whose maximizer $(\alpha\Delta-\kappa)/(\beta\Delta^2)$ moves with $\kappa$ at fixed curvature. Bandit optimization of a quadratic with unknown shift from noisy bounded evaluations has minimax regret $\Omega(\sqrt T)$ \cite{shamir2013complexity}; the change-of-measure argument there uses only that the noisy observations at a point $p$ under two shifts have KL divergence $O((U(p)-U'(p))^2)$, which holds for our bounded payoff noise as it does for Gaussian noise. \qed

\subsection{Proof of Theorem~\ref{thm:learning}(iv)}

Let every type have profitable set $\{1,2\}$, value $1$, differential $\Delta$, and rents $(D_1,D_2)$ independent and uniform on $[0,\Delta]$. A type is deterred iff $p_1\Delta\ge D_1$ and $p_2\Delta\ge D_2$, an event of probability $p_1p_2$, so $V(p)=p_1p_2-\kappa(p_1+p_2)$. Its Hessian is $\big(\begin{smallmatrix}0&1\\1&0\end{smallmatrix}\big)$ with eigenvalues $\pm1$, so $V$ is neither concave nor convex; for instance $V(\tfrac12,\tfrac12)<\tfrac12\big(V(1,1)+V(0,0)\big)$ when $\kappa<\tfrac12$. \qed

\section{Welfare proofs}\label{app:welfare}

\subsection{Proof of Theorem~\ref{thm:dominance}}

Containment $\Fcost\subseteq\Ftwo$ holds because $p\equiv0$ is feasible in the two-lever model. For the lift, as audit rises from $p$ to $p+dp$, the deterrence frontier moves by $\Delta\,dp$ in rent space. With rent density $\mu_j$, this flips mass $\mu_j\Delta\,dp$. In the perfect-audit benchmark each newly flipped gamer contributes one unit of correct-allocation welfare. Integrating up to $p^\star=D_{\max}/\Delta$ gives
\[
W'-W=\int_0^{p^\star}\mu_j\Delta\,dp
=\mu_jD_{\max}
=\mu_g .
\]
This is positive whenever $\mu_g>0$.

For the positive-measure reachability gap we make the regularity of the design-to-rate map explicit. Let $R:(w,\theta,\pi)\mapsto(\mathrm{FP},\mathrm{TP})\in[0,1]^2$ denote the population false-positive and true-positive rates induced by a design, so $\Fcost=R(\cdot,\cdot,0)$ and $\Ftwo=R(\cdot,\cdot,\cdot)$. Assume: \textnormal{(R1)} $R$ is continuous; \textnormal{(R2)} at the base cost-only outcome $z_0=R(w_0,\theta_0,0)$ the set $\Fcost$ is locally supported, i.e. there is a line $\ell$ through $z_0$ with $\Fcost$ contained in one closed half-plane $H^-$ in a neighborhood of $z_0$ (local exposedness); \textnormal{(R3)} applying the deterring audit $\pi^\star$ of the theorem at the fixed classifier yields $z_1=R(w_0,\theta_0,\pi^\star)$ that strictly improves one frontier coordinate and weakly improves the other, placing $z_1$ in the open complementary half-plane $\mathrm{int}\,H^+$; \textnormal{(R4)} $R$ is locally open at $(w_0,\theta_0,\pi^\star)$, for example because two smooth design directions have a full-rank Jacobian for $(\mathrm{FP},\mathrm{TP})$. Under \textnormal{(R3)}, $z_1\notin\Fcost$ near $z_0$. By \textnormal{(R4)}, a neighborhood of $(w_0,\theta_0,\pi^\star)$ maps to a two-dimensional neighborhood of $z_1$ inside $\mathrm{int}\,H^+$ after shrinking if necessary. By \textnormal{(R2)}, this neighborhood is disjoint from the local cost-only frontier. Hence $\Ftwo\setminus\Fcost$ has positive area. The conclusion is conditional on \textnormal{(R1)}--\textnormal{(R4)}; (R1) holds whenever acceptance rates vary continuously with the design (e.g. atomless features), while (R2)--(R4) are the substantive geometric hypotheses that a deterring audit makes a locally two-dimensional Pareto-improving move at $z_0$. \qed

\subsection{Generation protocol for Figure~\ref{fig:domain-audit-field}}\label{app:figure2}

Figure~\ref{fig:domain-audit-field} illustrates the theorem on a fixed-seed synthetic population. The classifier, costs, budgets, causal equation, and initial distribution are identical across the two domains. Only the audit product $p\Delta$ changes. In the weak-audit domain, rejected agents can profitably move the fakeable signal to the acceptance boundary, producing a visible gaming ridge and a high false-positive rate. In the strong-audit domain, the same movement is no longer worth its risk; agents who can afford causal improvement move in the causal direction, while the rest exit.

\begin{figure}[ht]
\centering
\includegraphics[width=0.8\linewidth]{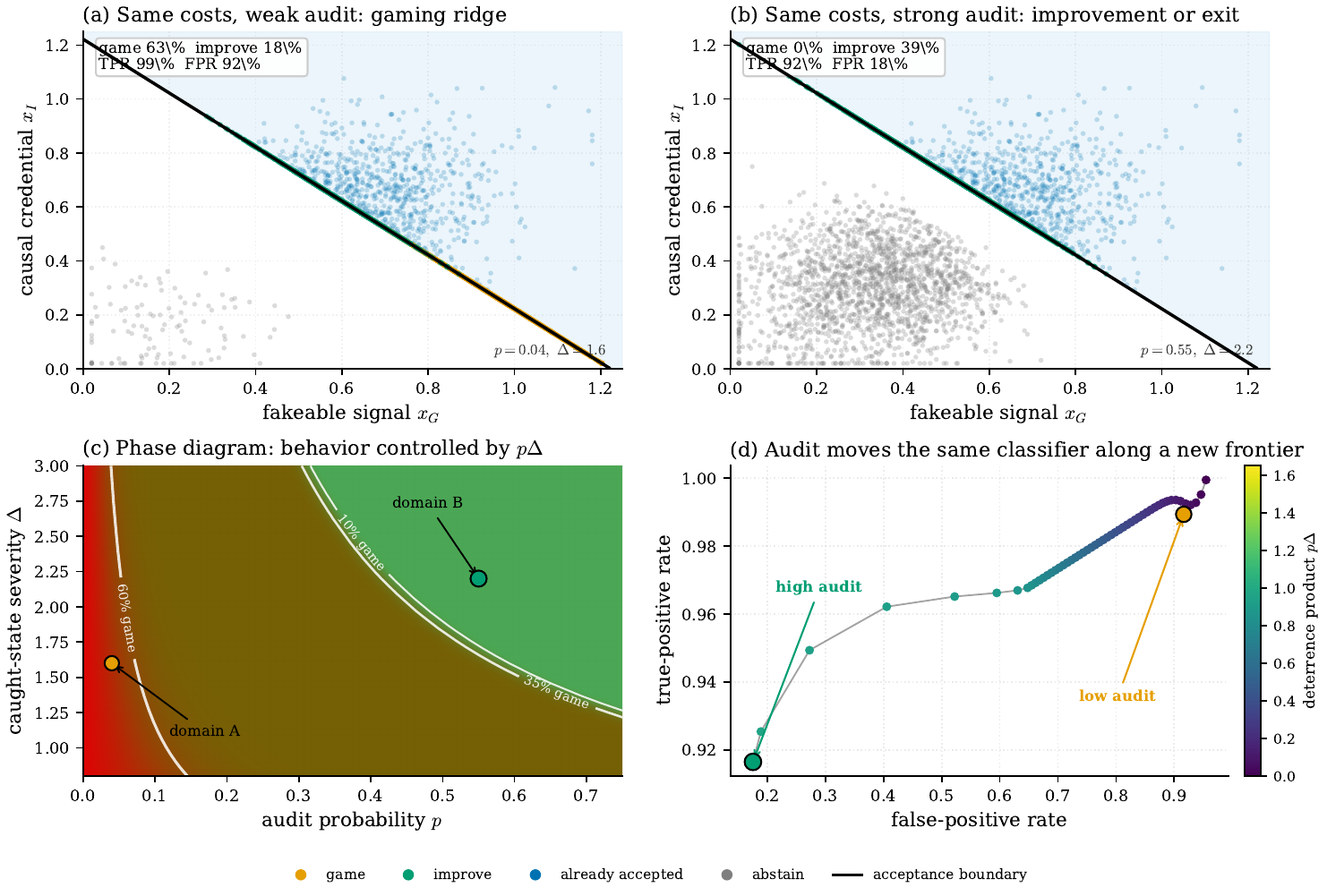}
\Description{See caption.}
\caption{Audit-induced domain separation in a heterogeneous population. The classifier, costs, budgets, and causal structure are fixed. Low audit creates a dense gaming ridge on the acceptance boundary; high audit removes gaming and leaves honest improvement plus exclusion. The lower panels show that the behavioral split is governed by the deterrence product $p\Delta$, and that changing audit moves the same classifier along a new TP/FP frontier.}
\label{fig:domain-audit-field}
\end{figure}

The lower panels are the identification diagnostic. Panel (c) varies only $p$ and $\Delta$ and shows that the aggregate action mix changes along the same deterrence frontier as Lemma~\ref{lem:deterrence-frontier}; panel (d) holds the classifier fixed and traces the induced movement in $(\mathrm{FP},\mathrm{TP})$ space as audit changes. A cost-only model has no parameter left to explain this split once costs and causal structure are held fixed. The two-lever model explains it by a single inequality, $p_A\Delta_A<D\le p_B\Delta_B$.

The script \texttt{fig\_domain\_audit\_field.py} generates Figure~\ref{fig:domain-audit-field}. The random seed is fixed at $20260607$ and the population size is $N=70{,}000$. The script first draws an ability variable $z\sim N(0,1)$ and two observed coordinates,
\begin{align*}
x_I&=\mathrm{clip}(0.43+0.15z+\varepsilon_I,0.02,1.18),\\
x_G&=\mathrm{clip}(0.46+0.13z+\varepsilon_G,0.02,1.18),
\end{align*}
with $\varepsilon_I\sim N(0,0.13^2)$ and $\varepsilon_G\sim N(0,0.15^2)$. The initial label is generated from the causal score
\[
y^0=\mathbf 1\{0.92x_I+0.28z+\eta>0.55\},\qquad \eta\sim N(0,0.16^2).
\]
Budgets are heterogeneous,
\[
B=\mathrm{clip}\{0.04+\mathrm{LogNormal}(\log 0.12,0.55),0.025,0.55\}.
\]
The displayed classifier is fixed throughout the figure: $s(x)=0.72x_I+0.72x_G$ with threshold $\theta=0.88$. Gaming and improvement costs are linear with $c_G=0.13$ and $c_I=0.46$, and acceptance utility is $u_+=1$.

For each agent below the threshold, the script computes the effort needed to cross the boundary through the fakeable coordinate and through the causal coordinate. Gaming has utility $u_+-c_Gg_0/0.72-p\Delta$ if it is within budget; honest improvement has utility $u_+-c_Ig_0/0.72$ if it is within budget; abstention has utility $0$. The chosen action is the utility-maximizing one, with accepted agents who already clear the threshold left unchanged. Gaming changes only the observed fakeable coordinate, while improvement changes the causal coordinate and recomputes post-improvement qualification using the same structural rule without the label noise term, $0.92x_I+0.28z>0.55$.

Panels (a) and (b) evaluate two domains, $(p,\Delta)=(0.04,1.60)$ and $(0.55,2.20)$, using the same population draw and the same classifier. The plotted points are a fixed subsample of $4{,}200$ agents; color encodes abstention, already-accepted status, gaming, or honest improvement after best response. Panel (c) recomputes the same best-response rule on a $125\times125$ grid with $p\in[0,0.75]$ and $\Delta\in[0.8,3.0]$, recording the shares of initially rejected agents who game, improve, or abstain. Panel (d) fixes $\Delta=2.20$, sweeps $p$ over $85$ values, and plots the deterministic TP/FP rates induced by the same classifier as $p\Delta$ changes. No fitted model is used in the plot; it is a deterministic evaluation of the structural best-response rule on a fixed synthetic population.

\subsection{Proof of Theorem~\ref{thm:domainsep}}

The shared structure fixes the rent $D$ and, by assumption, honest improvement is feasible and individually rational. In any cost-only model, $p=0$, so the deterrence condition $p\Delta\ge D$ fails whenever $D>0$. Therefore the binding type's behavior cannot differ across domains with the same cost and causal primitives; opposite observed behavior refutes the cost-only specification.

In the two-lever model, Lemma~\ref{lem:deterrence-frontier} gives: domain $A$ games iff $p_A\Delta_A<D$, and domain $B$ improves iff $p_B\Delta_B\ge D$. This is exactly the condition stated. If audit probabilities are observed, these inequalities imply the sharp set stated in the theorem. Sharpness holds because every pair of differentials satisfying the inequalities reproduces the behavior, and every pair violating them contradicts one of the two observations. If $p_A=p_B=p$, then $\Delta_A<D/p\le\Delta_B$. Since $\Delta=u_+-u_-$ and $u_+$ is common, this implies $u_{-,B}<u_{-,A}$. \qed

\subsection{Proof of Theorem~\ref{thm:price} and Remark~\ref{rmk:headroom}}

\begin{rmk}[Audit substitutes for signal headroom]\label{rmk:headroom}
Boundedness in (ii) is not cosmetic. If the same feature is unbounded, the cost-only firm can raise the threshold one gaming-budget above the qualified position: the qualified fake up cheaply from their head start, the unqualified hit their participation constraint, and Spence separation restores the full value $q$ with no audit. Randomization \cite{braverman2020randomness} narrows but cannot close the capped gap: any random acceptance schedule that deters gaming accepts the qualified with probability at most $c_G/u_+$ (apply the deterrence constraint at the cap), so it forgoes at least $q(1-c_G/u_+)$ of the audit dividend. The audit lever is most valuable exactly where screening-by-signaling fails - bounded credentials, capped scores, saturated reports - which is where verification is used in practice.
\end{rmk}

\emph{(i)} The lower bound holds because $p\equiv0$ is feasible in the two-lever problem. For the upper bound, under any design and any realized responses each agent's contribution to the firm objective is at most its best feasible outcome value, and audit costs are nonnegative, so $W^{\mathrm{two}}\le W^{\mathrm{ub}}$. Conversely the cost-only designer may zero every gameable coordinate; gaming then leaves the score unchanged, so no agent games (any positive gaming cost is wasted) and the design's value is its honest-response value, whence $W^{\mathrm{cost}}\ge W^{-\Gset}$. Subtracting gives $W^{\mathrm{two}}-W^{\mathrm{cost}}\le W^{\mathrm{ub}}-W^{-\Gset}=V_{\Gset}$.

\emph{(ii)} In the family, observed reports are capped: $x\in[0,1]$, qualified mass $q$ at $x^0=1$ with label $1$, unqualified mass $1-q$ at $x^0=0$ with label $0$, gaming cost $c_G<u_+$ per unit under \textnormal{(Lin)}, no improvable coordinates, common $u_+$ and $\Delta$.

\emph{Cost-only.} A design is $(w,\theta)$ on the single feature. If $w>0$, write $\tau=\theta/w$. For $\tau\le0$ everyone is accepted: value $2q-1$. For $\tau\in(0,1]$ the qualified are accepted at no cost while an unqualified agent fakes $\tau\le1$ units (feasible under the cap) at cost $c_G\tau\le c_G<u_+$, so all unqualified game and are accepted: value $2q-1$. For $\tau>1$ no observed report reaches $\tau$: value $0$. If $w\le0$, gaming ($\gamma\ge0$) cannot help and the acceptance set contains the unqualified position whenever it contains the qualified one, so the value is again at most $\max(0,2q-1)$. Hence $W^{\mathrm{cost}}=\max(0,2q-1)$, which is $0$ for $q\le\tfrac12$.

\emph{Two-lever.} Keep $\tau=1$ and audit the feature at intensity $p$. With no honest recourse, an unqualified agent games iff $c_G+p\Delta<u_+$, so $p_g=(u_+-c_G)/\Delta$ deters all gaming (weak inequality, leader-favorable). Wrongly flagged honest applicants lose acceptance at rate $\alpha p_g$ as in \S\ref{sec:risk-tuning}, and detection costs $c_ap_g$:
\[
W^{\mathrm{two}}\ \ge\ q(1-\alpha p_g)-c_ap_g .
\]
With a perfect audit ($\alpha=0$, $c_a=0$) this is $q$; since $W^{\mathrm{ub}}=q$ (the unqualified have no path to qualification), $W^{\mathrm{two}}=q$ exactly. Finally $W^{-\Gset}=\max(0,2q-1)=0$ because zeroing the only feature leaves a constant score, so $V_{\Gset}=q-0=q$ and the gap $W^{\mathrm{two}}-W^{\mathrm{cost}}=q$ attains the bound of (i). The ratio is unbounded: $W^{\mathrm{cost}}=0<q=W^{\mathrm{two}}$.

\emph{Remark~\ref{rmk:headroom}.} Remove the cap. Any $\tau^\star\in[u_+/c_G,\,u_+/c_G+1)$ is nonempty as an interval and gives: the qualified fake $\tau^\star-1$ units at cost $c_G(\tau^\star-1)<u_+$ and are accepted, while an unqualified agent would pay $c_G\tau^\star\ge u_+$ and abstains (at equality, abstention is the leader-favorable tie). The cost-only value is then $q$ with no audit: unbounded signal headroom restores Spence separation, and the gap of (ii) collapses.

For the randomization claim, restore the cap and let $P(\tau)$ be the acceptance probability of an observed report $\tau\in[0,1]$ under a random cost-only rule. If the rule deters gaming, then $u_+P(\tau)-c_G\tau\le0$ for every $\tau$ the unqualified could fake; taking $\tau=1$ gives $P(1)\le c_G/u_+$, so the qualified (whose report is $1$) are accepted with probability at most $c_G/u_+$ and the rule's value is at most $qc_G/u_+$, which is the stated shortfall relative to the perfect-audit value $q$. \qed

\section{Risk-tuning proofs}\label{app:risk}

\subsection{Proof of Proposition~\ref{prop:overpen}}

An unqualified agent games iff $u_+-p\Delta-c_G>0$, i.e. iff $p<p_g=(u_+-c_G)/\Delta$. A qualified agent applies iff $u_+-\alpha p\Delta\ge0$, i.e. iff $p\le p_q=u_+/(\alpha\Delta)$. Since $\alpha<1$ and $c_G>0$, $p_g<p_q$.

Accuracy is
\[
A(p)=q(1-\alpha p)\indic\{p\le p_q\}+(1-q)\big[p\,\indic\{p<p_g\}+\indic\{p\ge p_g\}\big].
\]
For $p<p_g$, gaming is active and the unqualified are correctly rejected only when caught. At $p_g$, the unqualified term jumps from $p_g$ to $1$. For $p_g\le p<p_q$, the unqualified term is already $1$ and the qualified term has slope $-q\alpha<0$. For $p\ge p_q$, the qualified abstain and accuracy is the floor $1-q$. Thus the unique maximizer is $p_g$ whenever the deterrence jump dominates the cost-only corner, $A(p_g^+)=1-q\alpha p_g>q=A(0)$, exactly the condition stated in the body. Adding a linear or convex false-accusation harm to wrongly flagged qualified agents only makes the post-$p_g$ slope more negative, so utilitarian welfare peaks at the same deterrence-binding point under the analogous corner-dominance condition. \qed

\subsection{Detection-penalty composition}

\begin{prop}[Detection-penalty composition]\label{prop:composition}
Among feasible deterrence-binding designs ($p\Delta=D$, $\Delta\in[D,\Delta_{\max}]$, detection cost $c_ap$, false-accusation harm $\kappa\varphi(\Delta)$ with $\varphi$ increasing and strictly convex), the firm objective $U_f(\Delta)=1-(q\alpha+c_a)D/\Delta$ is strictly increasing in $\Delta$, so a private firm chooses maximal admissible punishment and minimal detection. Utilitarian welfare $W(\Delta)=U_f(\Delta)-\kappa q\alpha(D/\Delta)\varphi(\Delta)$ has at most one interior maximizer; if it lies in $(D,\Delta_{\max})$, it is the unique social optimum and solves
\begin{equation}\label{eq:composition-optimum}
\kappa q\alpha\,[\Delta^\star\varphi'(\Delta^\star)-\varphi(\Delta^\star)]
=q\alpha+c_a .
\end{equation}
For $\varphi(\Delta)=\Delta^2/2$, $\Delta^\star=\sqrt{2(q\alpha+c_a)/(\kappa q\alpha)}$, independent of the gaming rent $D$; higher rents should raise detection $p^\star=D/\Delta^\star$, not severity.
\end{prop}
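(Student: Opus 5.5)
The plan is to work directly with the two objectives restricted to the deterrence-binding curve $p=D/\Delta$, $\Delta\in[D,\Delta_{\max}]$. On this curve the expected wrongful-flag loss is $q\alpha p=q\alpha D/\Delta$ and the detection cost is $c_aD/\Delta$. Substituting these gives the stated $U_f(\Delta)=1-(q\alpha+c_a)D/\Delta$. The expected false-accusation harm is $\kappa\varphi(\Delta)$ per wrongly flagged honest agent, so it adds $-\kappa q\alpha (D/\Delta)\varphi(\Delta)$, which gives the stated $W$. Throughout I take $D>0$, so that $\Delta\ge D>0$, and I assume $\varphi$ is differentiable, which the first-order condition \eqref{eq:composition-optimum} presupposes.

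First, the firm claim is immediate. We have $U_f'(\Delta)=(q\alpha+c_a)D/\Delta^2>0$. Hence $U_f$ is strictly increasing and the firm picks $\Delta=\Delta_{\max}$ and $p=D/\Delta_{\max}$, which is the Becker corner.

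Second, for welfare I differentiate and factor out the positive term $D/\Delta^2$:
\[
W'(\Delta)=\frac{D}{\Delta^2}\Big[(q\alpha+c_a)-\kappa q\alpha\,h(\Delta)\Big],\qquad h(\Delta):=\Delta\varphi'(\Delta)-\varphi(\Delta).
\]
The key step, and the only mild obstacle, is to show that $h$ is strictly increasing on $(0,\infty)$ using only strict convexity, without assuming $\varphi''$ exists. For $0<\Delta<\Delta'$, strict convexity gives $\varphi(\Delta')-\varphi(\Delta)<\varphi'(\Delta')(\Delta'-\Delta)$. It follows that
\[
h(\Delta')-h(\Delta)>\Delta'\varphi'(\Delta')-\Delta\varphi'(\Delta)-\varphi'(\Delta')(\Delta'-\Delta)=\Delta\big(\varphi'(\Delta')-\varphi'(\Delta)\big)\ge0,
\]
where the last inequality uses that $\varphi'$ is nondecreasing. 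If $\varphi$ happens to be $C^2$, the same conclusion follows at once from $h'(\Delta)=\Delta\varphi''(\Delta)$.

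Because $h$ is strictly increasing, the bracket in $W'$ is strictly decreasing, so $W'$ changes sign at most once, and only from positive to negative. Hence $W$ is strictly quasi-concave on $[D,\Delta_{\max}]$. It follows that $W$ has at most one interior critical point, and any interior critical point is the unique global maximizer on the interval. Setting the bracket to zero gives exactly \eqref{eq:composition-optimum}.

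Finally, for $\varphi(\Delta)=\Delta^2/2$ we get $h(\Delta)=\Delta^2-\Delta^2/2=\Delta^2/2$. Condition \eqref{eq:composition-optimum} then reads $\kappa q\alpha(\Delta^\star)^2/2=q\alpha+c_a$, which gives $\Delta^\star=\sqrt{2(q\alpha+c_a)/(\kappa q\alpha)}$. This expression contains no $D$, because $D$ entered $W'$ only through the positive prefactor $D/\Delta^2$. So whenever $\Delta^\star$ is interior, a higher rent $D$ is absorbed entirely by the detection rate $p^\star=D/\Delta^\star$, and the optimal severity does not change.
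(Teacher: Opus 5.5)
Your proof is correct and follows the paper's argument: restrict to $p=D/\Delta$, factor $W'(\Delta)=\frac{D}{\Delta^2}\big[(q\alpha+c_a)-\kappa q\alpha\,h(\Delta)\big]$, and use monotonicity of $h(\Delta)=\Delta\varphi'(\Delta)-\varphi(\Delta)$ to get a single sign change from positive to negative. Your first-order-only argument that $h$ is strictly increasing is slightly more careful than the paper's $h'(\Delta)=\Delta\varphi''(\Delta)>0$. That step tacitly assumes $\varphi$ is twice differentiable with $\varphi''>0$, which strict convexity alone does not provide.
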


\noindent On the deterrence-binding frontier $p\Delta=D$, so $p=D/\Delta$. With gaming deterred, accuracy is $1-q\alpha p=1-q\alpha D/\Delta$. The firm subtracts detection cost $c_ap$, hence
\[
U_f(\Delta)=1-(q\alpha+c_a)D/\Delta .
\]
Its derivative is $(q\alpha+c_a)D/\Delta^2>0$, so the firm chooses maximal allowed severity.

Social welfare subtracts false-accusation harm on mass $q\alpha p$:
\[
W(\Delta)=1-(q\alpha+c_a)D/\Delta-\kappa q\alpha(D/\Delta)\varphi(\Delta).
\]
Differentiating,
\[
W'(\Delta)=\frac{D}{\Delta^2}\Big[(q\alpha+c_a)-\kappa q\alpha(\Delta\varphi'(\Delta)-\varphi(\Delta))\Big].
\]
The function $g(\Delta)=\Delta\varphi'(\Delta)-\varphi(\Delta)$ has derivative $g'(\Delta)=\Delta\varphi''(\Delta)>0$ under strict convexity, so the bracket crosses zero at most once. If the root lies in the feasible interval $(D,\Delta_{\max})$, the derivative changes from positive to negative there and this root is the unique constrained social optimum, giving \eqref{eq:composition-optimum}. Otherwise the constrained optimum is the appropriate endpoint. For $\varphi(\Delta)=\Delta^2/2$, $g(\Delta)=\Delta^2/2$, yielding the displayed closed form. \qed

\subsection{Over-penalization diagnostic}

\begin{figure}[ht]
\centering
\includegraphics[width=.92\linewidth]{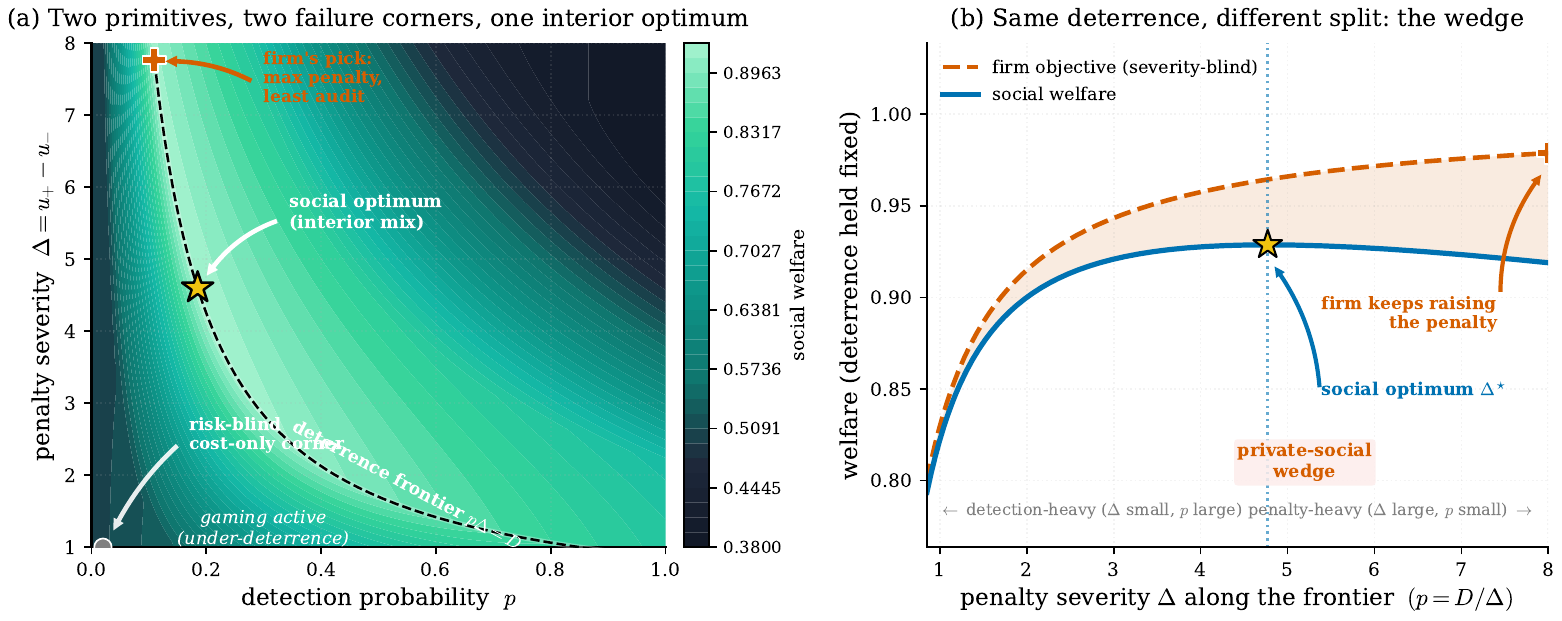}
\Description{See caption.}
\caption{Detection and punishment are behavioral substitutes but welfare complements. Along the frontier $p\Delta=D$, deterrence is fixed; private firm value drifts toward maximal punishment, while social welfare peaks at an interior mix.}
\label{fig:risk-landscape}
\end{figure}

\begin{figure}[ht]
\centering
\includegraphics[width=.92\linewidth]{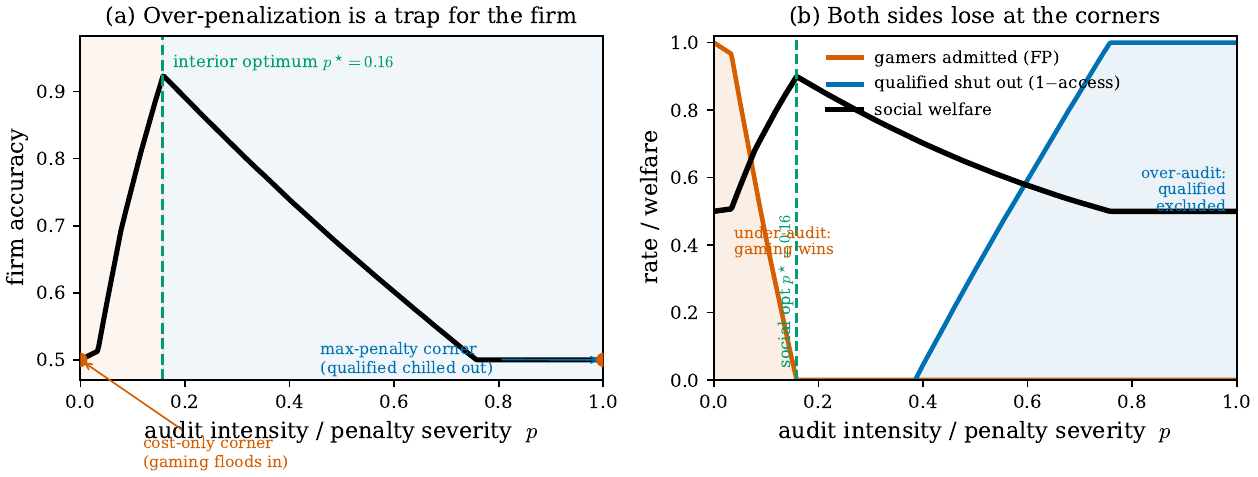}
\Description{See caption.}
\caption{Diagnostic over-penalization trap in the scalar model. Accuracy and welfare peak at the deterrence-binding audit. Below that point, gamers enter; above that point, there are no additional gamers to deter and false positives begin to chill qualified agents.}
\label{fig:overpenalization}
\end{figure}

\subsection{Generation protocols for Figures~\ref{fig:risk-landscape} and~\ref{fig:overpenalization}}

Figure~\ref{fig:risk-landscape} is generated by \texttt{fig\_risk\_landscape.py}. The seed is fixed at $11$ and the population size is $N=120{,}000$, split evenly between qualified and unqualified agents. Unqualified agents have heterogeneous gaming rents $D\sim\mathrm{Unif}[0.15,0.85]$ and are deterred when $p\Delta\ge D$. Qualified agents have outside options $o\sim\mathrm{Unif}[0,0.65]$ and apply when the honest expected utility $u_+-\alpha p\Delta$ exceeds $o$, with $u_+=1$ and $\alpha=0.16$. The firm pays audit operating cost $c_ap$ with $c_a=0.12$. Social welfare additionally subtracts false-accusation harm $\kappa q\alpha p\,\varphi(\Delta)$ on applying qualified agents, with $\kappa=0.22$ and $\varphi(\Delta)=\Delta^2/2$.

The script evaluates both the private firm objective and social welfare on a $240\times240$ grid with $p\in[0.001,1]$ and $\Delta\in[1,8]$. Panel (a) is the social-welfare surface on this grid, with the deterrence frontier for the largest rent, $p\Delta=0.85$, overlaid. The plotted markers are the global social optimum, the firm's severity-blind optimum, and the risk-blind low-audit corner. Panel (b) restricts attention to the deterrence frontier $p=0.85/\Delta$, recomputes firm value and social welfare along that curve, and shades the private-social wedge. Thus the panel separates a behavioral equivalence - the same $p\Delta$ deters the same marginal gamer - from a welfare nonequivalence: detection probability and penalty severity have different side effects.

Figure~\ref{fig:overpenalization} is generated by \texttt{fig\_overpenalization.py}. The seed is fixed at $3$ and the population size is $N=200{,}000$, again with qualified mass $q=0.5$. The caught-state differential is $\Delta=6$ from $u_+=1$ and $u_-=-5$, and the audit false-positive rate for honest applicants is $\alpha p$ with $\alpha=0.22$. Qualified outside options are drawn from $\mathrm{Unif}[0,0.9]$, so qualified agents apply iff $u_+-\alpha p\Delta\ge o$. Unqualified gaming costs are drawn from $\mathrm{Unif}[0.05,0.8]$, so an unqualified agent games iff $u_+-p\Delta-c>0$. The script evaluates $400$ audit levels $p\in[0,1]$, recording firm accuracy, qualified access, false-positive admission of gamers, and social welfare after subtracting false-accusation harm with weight $\kappa=1.6$. Panel (a) plots firm accuracy and marks the interior audit that maximizes it; panel (b) plots the two harms that flank the optimum - admitted gamers under low audit and excluded qualified agents under high audit - together with social welfare. The figure is a numerical diagnostic for Proposition~\ref{prop:overpen}, not an additional assumption in its proof.

Figure~\ref{fig:overpenalization} is a numerical illustration of Proposition~\ref{prop:overpen} in the scalar model. It separates two mistakes that are easy to conflate: the cost-only corner under-deters, while the surveillance corner over-penalizes. The optimum is the boundary between the two, not maximal audit.

\section{Robustness experiment: construction and parameters}\label{app:experiments}

This appendix documents the three-panel robustness experiment of Figure~\ref{fig:robustness}. All runs are deterministic with fixed seeds and are reproducible from \texttt{fig\_robustness\_stress\_test.py} (simulation) and \texttt{plot\_robustness\_stress\_test.py} (rendering). Each panel relaxes one idealization of the body and reports the \emph{audit dividend}, the optimal two-lever welfare minus the optimal cost-only welfare, so a positive value certifies that pricing-and-auditing a fakeable signal strictly beats suppressing it.

\subsection{Shared economy}

A population carries a causal coordinate $x_I$ and a purely fakeable coordinate $x_G$. Initial labels are $y^0\sim\mathrm{Bernoulli}(0.45)$, and conditional on $y^0$ the features are drawn from class-dependent normals clipped to $[0,1.2]$,
\[
\begin{array}{ll}
x_I\mid y^0{=}1\sim N(0.58,0.15^2), & x_I\mid y^0{=}0\sim N(0.45,0.16^2),\\
x_G\mid y^0{=}1\sim N(0.88,0.11^2), & x_G\mid y^0{=}0\sim N(0.24,0.13^2),
\end{array}
\]
so the fakeable coordinate is genuinely predictive in the honest population (this is what makes suppressing it costly). Budgets and the caught-state loss component are
\[
\begin{aligned}
B&=\mathrm{clip}\{0.02+\mathrm{LogNormal}(\log 0.08,0.55),\,0.015,\,0.40\},\\
L&=\mathrm{clip}\{\mathrm{LogNormal}(\log 0.55,0.60),\,0.05,\,8.0\},
\end{aligned}
\]
with stakes-risk differential $\Delta=u_++L$. The remaining constants are $u_+=1$, causal-improvement cost $c_I=0.95$, gaming cost $c_G=0.05$, post-improvement qualification threshold $q_\theta=0.82$, false-positive penalty $1.25$, recourse credit $0.03$, and audit operating cost $0.04$. For a design $(w_G,\theta,p)$ the score is $s(x)=(1-w_G)x_I+w_Gx_G$; an initially rejected agent abstains, improves the causal coordinate when $c_Ig_0/(1-w_G)\le B$, or games the fakeable coordinate when $c_Gg_0/w_G\le B$, choosing the option of highest expected utility. Improvement updates the true label through $x_I+\delta\ge q_\theta$; gaming leaves the true label unchanged and risks the audit penalty. The firm objective is
\begin{equation}\label{eq:exp-objective}
U=\mathrm{TP}-1.25\,\mathrm{FP}+0.03\,\mathrm{converted}-0.04\,p ,
\end{equation}
with rates taken as population averages after best response. Welfare $W$ additionally subtracts false-accusation harm where audits err (panel (a)). Every optimum below is a deterministic grid search in which agents are re-optimized at each candidate design.

\subsection{Panel (a): audits that err on the honest}

Population size $N=12{,}000$, seed $20260628$, single fakeable channel. An audit at intensity $p$ wrongly flags a genuinely accepted applicant with probability $\alpha p\,\sigma$, where $\sigma=w_Gx_G/s(x)\in[0,1]$ is the fakeable share of the applicant's score and $\alpha=0.85$; a flagged honest applicant becomes a false rejection (an accuracy loss) and inflicts social harm $0.9$ (a welfare loss). The two-lever optimum maximizes \eqref{eq:exp-objective} over $w_G\in[0,0.95]$ ($40$ points), $p\in[0,1]$ ($41$ points), and $\theta\in[0.32,0.95]$ ($40$ points); the cost-only optimum fixes $p=0$. The panel sweeps $p$ at the two-lever-optimal weight $w_G^\star=0.39$ and plots firm value, welfare, gaming mass, and wrongful-flag mass, together with the best cost-only value as a horizontal reference. Firm value peaks at $p^\star\approx 0.80$ and welfare at $p\approx 0.75$: the privately optimal audit exceeds the welfare-optimal one because the firm does not bear false-accusation harm, the finite-population image of Proposition~\ref{prop:composition}.

\subsection{Panel (b): magnitude-dependent detection}

Population size $N=5{,}000$, seed $20260629$. The fakeable feature can be raised through two manipulation channels with unit costs $c_A=0.050$ and $c_B=0.062$. A fake of total size $g=g_0/w_G$ is split as $\gamma_A=sg,\ \gamma_B=(1-s)g$ over $s\in\{0,0.1,\dots,1\}$, and detection on channel $k$ is convex in the magnitude routed through it,
\[
p_k(\gamma)=p\,\big(1-e^{-(\beta\gamma)^2}\big),\qquad
P_{\text{catch}}=1-(1-p_A(\gamma_A))(1-p_B(\gamma_B)),
\]
with audit intensity $p$ the detection ceiling. Large $\beta$ recovers support-based detection (any positive fake is caught at rate $p$, so a singleton fake is optimal because a second channel only raises $P_{\text{catch}}$); small $\beta$ makes small fakes evasive, so splitting a fake lowers total detection and a diversified fake can dominate. The agent picks the split maximizing gaming utility; a fake is counted diversified when the optimal $s\in(0.12,0.88)$. For each $\beta\in\{40,16,8,4,2.5,1.6,1.0,0.6,0.35\}$ both the two-lever optimum (over $w_G$ on $9$ points, $\theta$ on $12$, $p$ on $8$) and the cost-only optimum are recomputed. The panel plots the dividend, the diversified-fake share, and the two-lever-optimal weight against magnitude-sensitivity $1/\beta$. The dividend stays positive and the optimal weight stays on across the range in which detection is meaningfully magnitude-sensitive; only when even large fakes evade detection does the firm revert to suppression.

\subsection{Panel (c): partial causal leakage}

Population size $N=5{,}000$, seed $20260629$, single fakeable channel. The fakeable coordinate carries a causal fraction $\rho\in[0,1]$: a gamer raising the observed feature by $\gamma$ raises the true causal index by $\rho\gamma$, so $y^{\text{post}}=y^0\vee(x_I+\rho\gamma\ge q_\theta)$. Pure fakeability is $\rho=0$; $\rho=1$ makes gaming indistinguishable from improvement. For each $\rho$ on $9$ points the two-lever optimum (over $w_G$ on $12$ points, $\theta$ on $18$, $p$ on $10$) and the cost-only optimum are recomputed. The panel plots the dividend, the share of would-be gamers converted into genuinely qualified agents, and the cost-only-optimal weight on the signal. The dividend is largest at $\rho=0$ and falls to zero as conversion rises, while the optimal weight stays strictly positive throughout: causal leakage substitutes for audit but never makes suppression optimal, which is why purely fakeable coordinates are the regime in which the audit lever is most valuable.

\end{document}